\documentclass[aps,showpacs,amsmath,amssymb,twocolumn,prx,longbibliography,superscriptaddress,notitlepage,floatfix,10pt]{revtex4-2}

\usepackage[dvips]{graphicx}
\usepackage{bm, bbm,ascmac,amsmath,amssymb,amsthm,mathrsfs,amsfonts,dsfont}
\usepackage{epsfig}
\usepackage{braket}
\usepackage{enumerate}
\usepackage{xcolor}
\usepackage{float}
\usepackage{algorithm}
\usepackage{algorithmic}
\usepackage{comment}
\usepackage{appendix}
\usepackage{here}
\usepackage{tabularx}
\usepackage{dcolumn}
\usepackage[caption=false]{subfig}
\usepackage{adjustbox}
\usepackage[hidelinks]{hyperref}
\usepackage{bookmark}
\usepackage{makecell}

\newtheorem{theorem}{Theorem} 
\newtheorem{lemma}{Lemma} 
\newtheorem{corollary}{Corollary} 

\usepackage{scalerel}

\iftrue
  \usepackage{marginnote}
\fi

\begin{document}


\title{Beyond Commutativity: Redesigning Trotter Decomposition via Local Symmetry}


\author{Naoki Negishi}
\email{negishi@roma2.infn.it}
\affiliation{Dipartimento di Fisica, Università di Roma Tor Vergata, Via della Ricerca Scientifica 1, 00133 Rome, Italy}
\affiliation{INFN, Sezione di Roma Tor Vergata, Via della Ricerca Scientifica 1, 00133 Rome, Italy}

\author{Bo Yang}
\email{yangbo@g.ecc.u-tokyo.ac.jp}
\affiliation{Graduate School of Information Science and Technology, The University of Tokyo, Bunkyo-ku, Tokyo 113-8656, Japan}
\affiliation{LIP6, Sorbonne Université, CNRS, 4 place Jussieu, 75005 Paris, France}
\thanks{N.N. and B.Y. contributed equally to this work.}

\begin{abstract}
The product formula, commonly known as Trotter decomposition, is a central tool for digital quantum simulation, whose performance depends critically on how the Hamiltonian is partitioned into tractable blocks.
Standard decompositions typically rely on direct commutativity among Hamiltonian terms in a chosen operator representation, which can lead to large residual errors and deep circuits for complex, practically relevant many-body quantum systems.
We address this fundamental bottleneck by introducing a new decomposition principle that goes beyond commutativity, grouping Hamiltonian terms into local three-site clusters according to the underlying $\mathrm{SU}(2)$ symmetry of the local dynamics.
We show that three-site generators fall into at most four $\mathrm{SU}(2)$-symmetry classes, each admitting an effective two-qubit $\mathrm{SU}(4)$ representation with exact and efficient implementations.
By reducing the number of clusters, this decomposition principle substantially suppresses commutator-induced errors and circuit overhead while preserving underlying physical structures that commutativity-based decompositions may violate.
We demonstrate the proposed method on several physically relevant spin-lattice models, where the reduced cluster structure can even realise the second-order product formula without doubling the circuit depth, as would be required by conventional decompositions.
Numerical simulations of a Kagome Heisenberg model with triangular spin-chirality interactions show that the proposed method reduces both state infidelity and average spin-chirality bias by more than three orders of magnitude compared with conventional decompositions, while using substantially fewer gates.
These results establish local symmetry as a flexible and practical design principle for product-formula simulation, opening a route to more accurate and hardware-efficient simulations of broader classes of many-body systems.
\end{abstract}

\maketitle


\section{Introduction}

Simulating many-body quantum dynamics via product formula, known as Trotter decomposition~\cite{Trotter1959On, Suzuki1976Generalized, Hatano2005Finding}, is a long-standing, foundational tool in studying condensed matter physics and quantum chemistry.
The Trotter decomposition approximates the system's time evolution by sequentially applying short-time propagators generated by tractable, non-commuting components of the underlying Hamiltonian.
This product-formula structure is particularly well-suited to digital quantum simulation on quantum hardware, as it can be implemented via repeated applications of local gate blocks in a quantum circuit.
This makes Trotter decomposition a core building block for exploring near-term quantum utility in simulating large-scale many-body systems that are challenging for classical devices.

However, a substantial gap remains between current milestone demonstrations on quantum hardware~\cite{Kim2023Evidence, Whitlow2023Quantum, Chowdhury2024Enhancing, Farrell2024Quantum, Miessen2024Benchmarking, Yoshioka2025Krylov} and the simulation of more intricate, material-specific Hamiltonians, ranging from frustrated many-body systems with complex lattice geometry~\cite{Di-Sante2026Kagome} to molecular catalysts with dense electronic-structure Hamiltonians~\cite{Luo2025Efficient, Kan2025Resource-optimized, Ronevi2026A}.
For many-body Hamiltonians, frustrated geometries and anisotropic or long-range interactions can significantly increase both the number of propagators required per Trotter step and the commutator-induced residual error.
Since the decomposition determines both the propagator sequence and the non-commuting operator pairs, the cost and accuracy of Trotter-based simulation depend sensitively on how the Hamiltonian is decomposed.
Suppressing the resulting error generally requires more Trotter steps or higher-order product formulas~\cite{Suzuki1976Generalized, Hatano2005Finding}, thereby making practical implementation on quantum hardware increasingly challenging.

This difficulty has motivated a broad range of methods for reducing the implementation cost of Trotter-based simulation, including optimised product formulas~\cite{Somma2016A, Su2021Nearly, Stryker2025Shearing, Yuan2021Quantum}, randomised gate sequences~\cite{Childs2019Faster, Campbell2019Random, Yang2023Randomized, Kiumi2025TE-PAI}, per-step correction of Trotter errors~\cite{Tran2021Faster, Zeng2025Simple}, and quantum error mitigation (QEM) approaches~\cite{Endo2019Mitigating, Hakkaku2025Data-Efficient}.
While these approaches reduce the cost of implementing product formulas, they largely retain the conventional practice of decomposing the Hamiltonian based on commutativity or simple locality in a chosen representation that may not effectively reflect the physical structures governing the underlying dynamics, such as frustrated geometry or local symmetries.
This possible mismatch between the physical structure of the target dynamics and the algebraic structure of the chosen Hamiltonian representation has remained largely unexplored as a major source of residual Trotter error.

In this work, we address this underexplored source of overhead by introducing a symmetry-guided decomposition strategy beyond commutativity-based criteria for product-formula simulation.
Instead of partitioning the Hamiltonian according to the commutativity of individual terms, we group them into local three-site cluster Hamiltonians that respect the $\mathrm{SU}(2)$ symmetry of the underlying local dynamics.
This physically aligned decomposition substantially reduces the number of clusters in the partitioned Hamiltonian, thus directly suppressing residual inter-cluster errors.
At the same time, the inherent $\mathrm{SU}(2)$ symmetries in each triangular cluster enable efficient circuit implementations of the three-site propagators.
Consequently, the proposed product formula reduces both the number of sequential propagators per Trotter step and the circuit depth required to achieve a target accuracy.

For local triangular-plaquette propagators, we develop a systematic circuit construction for three-qubit $\mathrm{SU}(8)$ operators based on their $\mathrm{SU}(2)$ structure.
We show that the possible $\mathrm{SU}(2)$-symmetry structures are exhausted by at most four distinct classes, each allowing the relevant three-qubit Hamiltonian to be compressed into an effective two-qubit Hamiltonian with constant-overhead time evolution via two-qubit KAK decomposition~\cite{KAK, KAK1, KAK2, KAK3}.
When the triangular-plaquette Hamiltonian belongs to a single local $\mathrm{SU}(2)$ class, as in many frustrated spin models of interest, this construction bypasses the overhead of generic three-qubit KAK decompositions~\cite{Krol2024Beyond} or Schur-transform-based compression~\cite{Bacon2006Efficient, Nguyen2023The}.
Thus, our decomposition strategy is designed to expose local $\mathrm{SU}(2)$ structures that admit efficient triangular-plaquette implementations, with the flexibility to hybridise them with conventional commutativity-based groupings when advantageous.

We demonstrate the benefits of the proposed decomposition across several representative physical systems, deriving explicit decompositions and corresponding error analyses for each case.
Remarkably, we show that the proposed decomposition admits a two-cluster partitioning for which commutativity-based partitioning yields more clusters, enabling the second-order product formula to be implemented without additional circuit overhead.
Moreover, we further find that our decomposition based on local symmetries can also retain global symmetries that commutativity-based partitions may break at finite Trotter step size, as illustrated by spin-chirality dynamics.
These advantages are demonstrated numerically using a 12-qubit kagome Heisenberg model with three-body spin-chirality interactions, where the proposed decomposition reduces the errors in both state infidelity and average spin-chirality by more than three orders of magnitude compared with the conventional decomposition.
Our framework thus establishes local symmetry as a new design principle for product-formula decompositions, providing a practical route toward more accurate and circuit-efficient simulations of demanding physical models that are otherwise difficult to access with conventional approaches.

\section{\texorpdfstring{$\mathrm{SU}(2)$}{SU(2)}-classified triangular decomposition}

The product formula method approximates the time evolution $e^{-iHt}$ by partitioning the Hamiltonian into simpler, generally non-commuting components $H = H_{1}+H_{2}+\cdots+H_{m}$, and replacing the full short-time evolution by a sequential product of local evolutions.
For example, defining $\Delta t = t/n$, the first-order product formula is given by
\begin{equation}
\begin{split}
    e^{-iHt} 
    \approx \left(e^{-iH_{1}\Delta t}e^{-iH_{2}\Delta t}\cdots e^{-iH_{m}\Delta t}\right)^{n},
\end{split}
\end{equation}
which becomes exact in the limit $n\to\infty$ under the usual Trotter convergence conditions.

Without loss of generality, suppose that the Hamiltonian is expressed in the Pauli basis as $\displaystyle H = \sum_{P\in\mathcal{P}_{N}} c_{P}P$, where $c_{P}\in\mathbb{R}$ and $\mathcal{P}_{N}$ denotes the set of $N$-qubit Pauli strings.
The conventional commutativity-based Trotter decomposition partitions $H$ according to the commutation relations among the Pauli terms appearing in this chosen representation.
That is, mutually commuting terms are grouped into the same component so that each group can be implemented efficiently as a collection of Pauli rotations.
However, the effectiveness of such a decomposition can depend strongly on the chosen representation of $H$, and the resulting commuting groups need not reflect the physical structure governing the underlying dynamics.

\begin{figure}
    \centering
    \includegraphics[width=\linewidth]{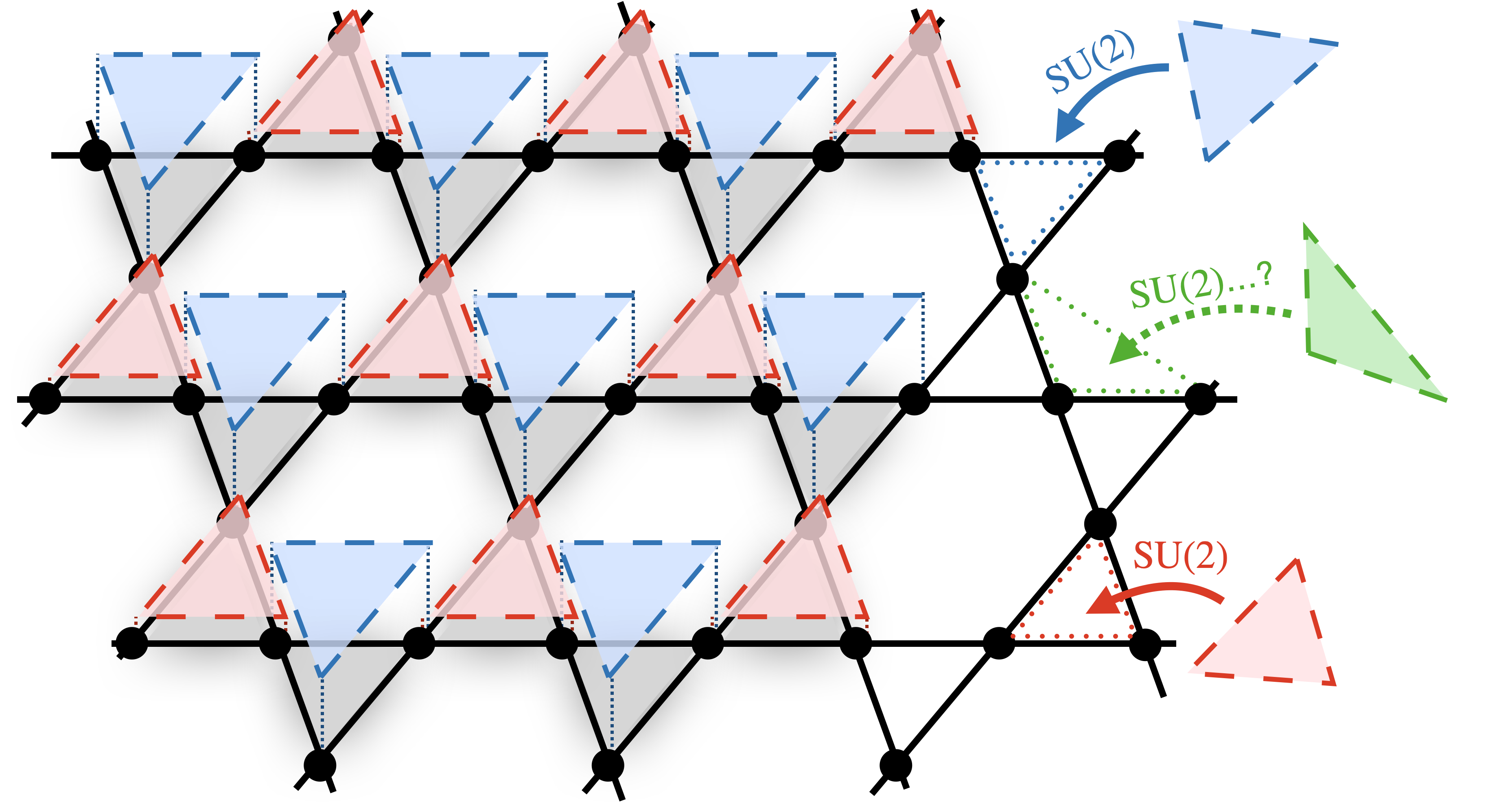}
    \caption{
        Schematic illustration of an $\mathrm{SU}(2)$-respecting triangular decomposition on a Kagome lattice.
        The Hamiltonian is partitioned into local three-site components supported on triangular plaquettes, shown as upward red and downward blue triangles, whose associated time-evolution operators preserve the underlying local $\mathrm{SU}(2)$ symmetry among the three vertices of each triangle structure.
    }
    \label{fig:plaquettes_triangles_kagome_2d}
\end{figure}

Our decomposition strategy goes beyond this commutativity-based criterion.
Rather than enforcing pairwise commutativity among the Pauli terms within each cluster, our method imposes a symmetry-based algebraic structure on the resulting cluster Hamiltonian.
Each cluster Hamiltonian is chosen either to commute with all generators in a certain local $\mathrm{SU}(2)$ generator space, or to admit a component-wise decomposition into sub-cluster Hamiltonians, each of which commutes with all generators in a possibly different local $\mathrm{SU}(2)$ generator space.
More formally, our decomposition strategy partitions the given Hamiltonian so that each clustered Hamiltonian $H_{\mathrm{C}}$ satisfies $[H_{\mathrm{C}}, G] = 0$ for all $G\in\mathcal{G}_{l}$, where $\mathcal{G}_{l}$ denotes the generator space of a possible local $\mathrm{SU}(2)$ operators characterised by an index $l$.

The decomposition procedure is therefore to identify a suitable partitioning of the target Hamiltonian into local components that preserve such $\mathrm{SU}(2)$ symmetries.
Note that this criterion is not tied to a particular basis or to a specific Pauli expansion of the Hamiltonian; it applies whenever a local three-site Hamiltonian admits the relevant $\mathrm{SU}(2)$-symmetry structure.
A schematic illustration of the proposed decomposition strategy is shown in Fig.~\ref{fig:plaquettes_triangles_kagome_2d}.

Our main contribution is to provide a concrete and efficient decomposition strategy for three-site local interactions by systematically classifying and exploiting the local $\mathrm{SU}(2)$ symmetries that arise in three-qubit $\mathrm{SU}(8)$ operations.
Below, we show that the possible local $\mathrm{SU}(2)$-symmetry structures of three-qubit $\mathrm{SU}(8)$ operations are exhausted by at most four distinct classes.
Equivalently, the three-qubit generator space can be spanned by at most four symmetry-resolved sectors, each decomposing into a local $\mathrm{SU}(2)$ generator space and its commuting effective $\mathrm{SU}(4)$ sector.
This classification is formalised in Theorem~\ref{theorem:classification_of_SU(8)_by_SU(2)}, and the resulting algebraic structure is illustrated in Fig.~\ref{fig:Upper_bound}.
The proof of Theorem~\ref{theorem:classification_of_SU(8)_by_SU(2)} is given in Appendix~\ref{appendix:Theorem_genflame}.

\begin{theorem}[Four-class decomposition by local $\mathrm{SU}(2)$ symmetry]
\label{theorem:classification_of_SU(8)_by_SU(2)}
Let $\mathcal{V}$ denote the real vector space of traceless Hermitian generators of three-qubit $\mathrm{SU}(8)$ evolutions.
There exist four local $\mathrm{SU}(2)$ symmetry classes such that
\begin{equation}
\begin{split}
    \mathcal{V}
    = \sum_{l=1}^{4} \left(\mathcal{G}_{l}\oplus\mathcal{H}_{l}\right),
\end{split}
\end{equation}
where $\mathcal{G}_{l}$ is the generator space of the $l$-th local $\mathrm{SU}(2)$ symmetry class, and $\mathcal{H}_{l}$ is the corresponding generator space of effective $\mathrm{SU}(4)$ operations commuting with $\mathcal{G}_{l}$.
Consequently, any three-qubit local Hamiltonian $H\in\mathcal{V}$ can be decomposed as
\begin{equation}\label{eq:decomp_principle}
\begin{split}
    H = \sum_{l=1}^{4} H_{l},
    \qquad
    H_{l}\in\mathcal{G}_{l}\oplus\mathcal{H}_{l},
\end{split}
\end{equation}
\end{theorem}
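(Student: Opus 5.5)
The plan is to make the statement concrete in the Pauli basis. Since $\mathcal{V}=\mathrm{span}_{\mathbb{R}}\{P\in\mathcal{P}_3\setminus\{I\}\}$ has dimension $63$, it suffices to exhibit four subspaces $\mathcal{G}_l\oplus\mathcal{H}_l$, each spanned by Pauli strings, whose union of Pauli strings covers all $63$ nontrivial three-qubit Paulis. The decomposition \eqref{eq:decomp_principle} then follows by assigning each Pauli term $c_PP$ of $H$ to one class containing $P$, for instance the first one in a fixed order. The sum is therefore not direct: $4\times 18=72>63$, so I will allow overlaps of total size $9$.

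\textbf{Step 1 (one class from one anticommuting pair).} Each local $\mathrm{SU}(2)$ class will be built from a pair of anticommuting three-qubit Paulis $(A_l,B_l)$. Set
\begin{equation}
\mathcal{G}_l=\mathrm{span}\{A_l,B_l,iA_lB_l\}.
\end{equation}
This is an $\mathfrak{su}(2)$ because the three operators mutually anticommute and square to $I$. Take $\mathcal{H}_l$ to be the span of the nontrivial Paulis commuting with both $A_l$ and $B_l$. By the symplectic (Clifford) picture, there is a Clifford $U_l$ mapping $A_l\mapsto X\otimes I\otimes I$ and $B_l\mapsto Z\otimes I\otimes I$. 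Under this map $\mathcal{H}_l\mapsto I\otimes\mathfrak{su}(4)$, so $\mathcal{H}_l\cong\mathfrak{su}(4)$ has dimension $15$ and commutes with $\mathcal{G}_l$. The sum $\mathcal{G}_l\oplus\mathcal{H}_l$ is direct since the two spaces share no Pauli. This also gives the effective two-qubit representation used later in the paper.

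\textbf{Step 2 (choosing the four classes).} For $l=1,2,3$ I take the single-site classes $A_l=X_l$, $B_l=Z_l$. Their union covers every Pauli acting nontrivially on at most two sites, that is $9+27=36$ strings. The remaining $27$ strings are the weight-three Paulis $\sigma^a\otimes\sigma^b\otimes\sigma^c$. For the fourth class I try the "collective" encoded qubit $A_4=XXX$, $B_4=ZZZ$, which anticommute since they differ on an odd number of sites. Then $\mathcal{G}_4\ni XXX,ZZZ,YYY$, and $\mathcal{H}_4$ consists of the $15$ Paulis commuting with $XXX$ and $ZZZ$. Counting the weight-three strings in $\mathcal{G}_4\oplus\mathcal{H}_4$ will show that this class alone covers at most $18$ of the $27$.

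\textbf{Step 3 (main obstacle: whether four classes can cover everything).} The hard part is the covering. I expect the naive choice above to fail, and then the classes must be re-chosen so that each class uses its $18$ slots more efficiently on weight-three strings. The first thing I would try is to reformulate the problem in $\mathbb{F}_2^6$ with the symplectic form. A class is a hyperbolic pair $(a_l,b_l)$; its Pauli set is
\begin{equation}
\{a_l,b_l,a_l+b_l\}\cup\langle a_l,b_l\rangle^{\perp}\setminus\{0\},
\end{equation}
a $4$-dimensional nondegenerate subspace together with its complementary plane. The goal is to find four symplectic decompositions $\mathbb{F}_2^6=W_l\oplus W_l^{\perp}$ with $\dim W_l=2$ whose union of ($W_l\cup W_l^\perp$) covers all nonzero vectors. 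I would search over choices mixing single-site and entangled pairs, for example $(X_1,Z_2Z_3)$-type or $(XXI,ZIZ)$-type pairs, using the overlap budget of $9$ as a constraint, and then verify a candidate by direct enumeration. If four Pauli-aligned classes turn out to be impossible, the fallback is to allow non-Clifford $\mathrm{SU}(2)$ embeddings, so that $\mathcal{G}_l\oplus\mathcal{H}_l$ is not spanned by Paulis. In that case I would prove spanning with a dimension argument on the sum of the four $18$-dimensional subspaces rather than by covering Pauli strings.
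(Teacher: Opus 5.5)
\textbf{There is a genuine gap: the proposal never establishes the existence of the four classes, which is the whole content of the theorem.}

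Your Step 1 is correct and matches the paper's notion of a class. An anticommuting Pauli pair generates $\mathcal{G}_l$, and $\mathcal{H}_l$ is its $15$-dimensional Pauli commutant, Clifford-equivalent to $I\otimes\mathfrak{su}(4)$. Your $\mathbb{F}_2^6$ reformulation is also the right way to pose the covering problem.

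However, Step 2 fails and Step 3 is a search plan plus a fallback, not an argument. The fallback does not close the gap either. Four $18$-dimensional subspaces of a $63$-dimensional space span it only if you control their mutual intersections, and that control is exactly the missing work.

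Your own overlap budget already rules out Step 2 before the fourth class is chosen. Each weight-one Pauli on qubit $j$ lies in all three single-site classes: in $\mathcal{G}_j$, and in $\mathcal{H}_{j'}$ for both $j'\neq j$. That is an excess of $9\times 2=18>9$. Also, the $(XXX,ZZZ)$ class covers exactly $9$ weight-three strings ($XXX,YYY,ZZZ$ and the six with three distinct letters), not up to $18$, so $18$ remain uncovered.

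The missing idea is to concentrate the whole overlap in one common abelian piece.
\begin{enumerate}
\item Fix $\mathcal{C}=\mathrm{span}_{\mathbb{R}}\{X_2X_3,Y_2Y_3,Z_2Z_3\}$, an isotropic plane $C\subset\mathbb{F}_2^6$. Require $\mathcal{C}\subset\mathcal{H}_l$ for all $l$, i.e.\ $W_l\subset C^{\perp}$.
\item The Paulis in $C^{\perp}$ but outside $C$ are exactly the twelve strings $GC'$ with $G\in\{X_1,Y_1,Z_1\}$ and $C'\in\{I,X_2X_3,Y_2Y_3,Z_2Z_3\}$.
\item These twelve split into four disjoint $\mathfrak{su}(2)$ triples, generated by the pairs $(X_1,Z_1)$, $(X_1X_2X_3,Z_1Z_2Z_3)$, $(X_1Y_2Y_3,Z_1X_2X_3)$ and $(X_1Z_2Z_3,Z_1Y_2Y_3)$. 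These are the paper's $\mathcal{G}_1,\dots,\mathcal{G}_4$.
\item In your language, identify $X_1\mapsto X_2X_3$ and $Z_1\mapsto Z_2Z_3$. Then the $W_l$ are the graphs over $\langle X_1,Z_1\rangle$ of $0,I,A,A^2$, with $A$ of order three in $\mathrm{GL}_2(\mathbb{F}_2)$. All pairwise differences are invertible, so for $l\neq l'$ you get $W_l\cap W_{l'}=0$ and $W_l+W_{l'}=C^{\perp}$.
\item Hence $W_l^{\perp}\cap W_{l'}^{\perp}=C$, which is the paper's Lemma~2, $\mathcal{H}_l\cap\mathcal{H}_{l'}=\mathcal{C}$.
\item Also $W_l\cap W_{l'}^{\perp}\subset C^{\perp}\cap W_{l'}^{\perp}=(C+W_{l'})^{\perp}=C$, and $C$ meets $W_l$ trivially.
\end{enumerate}

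So the four Pauli sets consist of $\mathcal{C}$, shared by all four classes, plus four pairwise disjoint blocks of $15$ strings, giving $3+4\times 15=63$. This spends your overlap budget exactly: $3$ strings, each counted four times, for an excess of $9$.

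Your $(XXX,ZZZ)$ class is already the paper's $\mathcal{G}_2$. What must go are the single-site classes on qubits $2$ and $3$, replaced by the two $\mathcal{C}$-twisted classes $\mathcal{G}_3$ and $\mathcal{G}_4$.
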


\begin{figure}
    \centering
    \includegraphics[width=0.5\linewidth]{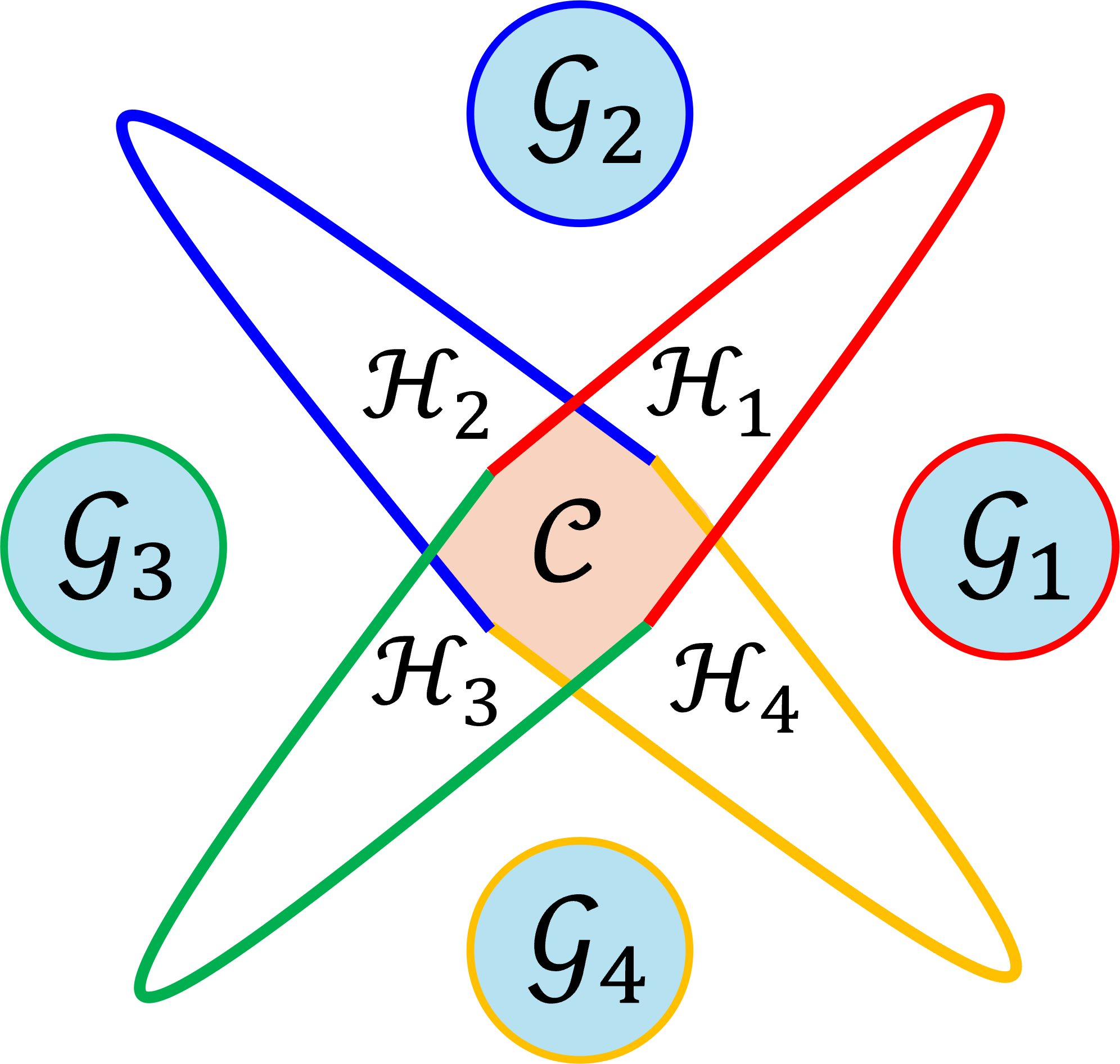}
    \caption{
        Algebraic structure of the generator space of three-qubit $\mathrm{SU}(8)$ operators.
        The orange region $\mathcal{C}$ denotes the common three-dimensional Abelian generator space shared by the effective two-qubit sectors $\mathcal{H}_{l}$.
        Each light-blue circular region $\mathcal{G}_{l}$ denotes the generator space of a local $\mathrm{SU}(2)$ symmetry class.
    }
    \label{fig:Upper_bound}
\end{figure}

This implies that, if the operator space generated by $\mathcal{G}_{l}$ and $\mathcal{H}_{l}$ can be factorised, by a unitary encoding, into a tensor-product structure $\mathrm{SU}(2)\otimes\mathrm{SU}(4)$, the essential part of the local three-qubit time evolution can be confined to an effective two-qubit subsystem.
More precisely, the $\mathrm{SU}(2)$ sector contributes only a single-qubit symmetry rotation, while the non-trivial dynamics is carried by the effective $\mathrm{SU}(4)$ evolution.

To realise this reduction, we next construct a unitary encoder that maps the effective $\mathrm{SU}(4)$ operator into a local two-qubit subspace.
Let $U_{l}\in \mathrm{SU}(8)$ be the unitary operator for the encoder of the symmetry class $l$.
Let also $G_{l}^{\prime}$ and $H_{l}^{\prime}$ be the single-qubit and two-qubit local effective Hamiltonians transformed by the unitary encoder $U_{l}$, i.e., $U_{l}H_{l}U_{l}^{\dagger} = G_{l}^{\prime}\otimes I\otimes I + I \otimes H_{l}^{\prime}$. 
Note that $U_{l}$ transforms the propagator of $H_{l}$ in the form of
\begin{equation}\label{eq:U_enc_U_dec}
\begin{split}
    U_{l} e^{-iH_{l}\Delta t} U_{l}^{\dagger}
    = e^{-iG_{l}^{\prime}\Delta t} \otimes e^{-iH_{l}^{\prime}\Delta t}.
\end{split}
\end{equation}
The quantum circuit of Eq.~\eqref{eq:U_enc_U_dec} is depicted in Fig.~\ref{fig:qc_trotter_block_triangle}.

\begin{figure}
    \centering
    \includegraphics[width=1.0\linewidth]{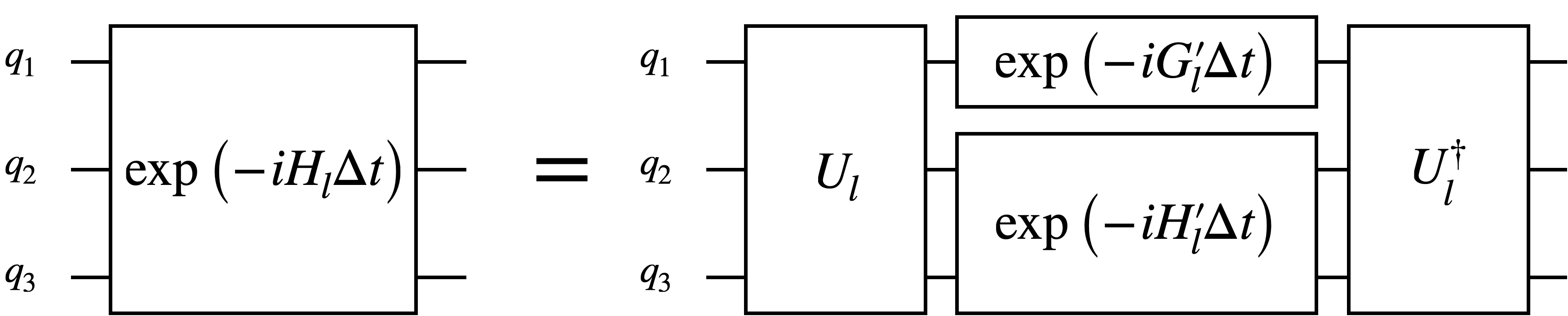}
    \caption{
        Quantum circuit of the local propagator $e^{-iH_{l}\Delta t}$ implemented by Eq.~\eqref{eq:U_enc_U_dec}.
        The encoded evolution by $U_{l}$ factorises into a single-qubit evolution $\exp(-iG_{l}^{\prime}\Delta t)$ and an effective two-qubit evolution $\exp(-iH_{l}^{\prime}\Delta t)$.
    }
    \label{fig:qc_trotter_block_triangle}
\end{figure}

Since we work in the Pauli representation, we may choose representative generator spaces $\{\mathcal{G}_{l}\}_{l=1}^{4}$ in the Pauli basis as
\begin{equation}\label{eq:G1_G2_G3_G4}
\begin{split}
    \mathcal{G}_{1} &= \operatorname{span}_{\mathbb{R}}\{X_{1} I_{2} I_{3},\, Y_{1} I_{2} I_{3},\, Z_{1} I_{2} I_{3}\},\\
    \mathcal{G}_{2} &= \operatorname{span}_{\mathbb{R}}\{X_{1} X_{2} X_{3},\, Y_{1} Y_{2} Y_{3},\, Z_{1} Z_{2} Z_{3}\}, \\
    \mathcal{G}_{3} &= \operatorname{span}_{\mathbb{R}}\{X_{1} Y_{2} Y_{3},\, Y_{1} Z_{2} Z_{3},\, Z_{1} X_{2} X_{3}\}, \\    
    \mathcal{G}_{4} &= \operatorname{span}_{\mathbb{R}}\{X_{1} Z_{2} Z_{3},\, Y_{1} X_{2} X_{3},\, Z_{1} Y_{2} Y_{3}\}.
\end{split}
\end{equation}
Here,
\begin{equation}
\begin{split}
    \operatorname{span}_{\mathbb{R}}\{e_{1},e_{2},\cdots,e_{k}\}
    := \left\{\sum_{i=1}^{k}a_{i}e_{i}~\middle|~a_{i}\in\mathbb{R}\right\}.
\end{split}
\end{equation}

For each symmetry class, we can choose the encoder $U_{l}$ to map the generator space $\mathcal{G}_{l}$ to the standard single-qubit generator space $\mathcal{G}_{1} = \operatorname{span}_{\mathbb{R}}\{X_{1}I_{2}I_{3},\,Y_{1}I_{2}I_{3},\,Z_{1}I_{2}I_{3}\}$, so that $G_{l}^{\prime}$ becomes trivial.
Equivalently, for every generator $G\in\mathcal{G}_{l}$, $U_{l}GU_{l}^{\dagger}\in \mathcal{G}_{1}$.
This mapping isolates the local $\mathrm{SU}(2)$ symmetry sector on a single qubit and embeds the commuting effective $\mathrm{SU}(4)$ sector into the remaining two-qubit subsystem.

\begin{figure}
    \centering
    \includegraphics[width=1.0\linewidth]{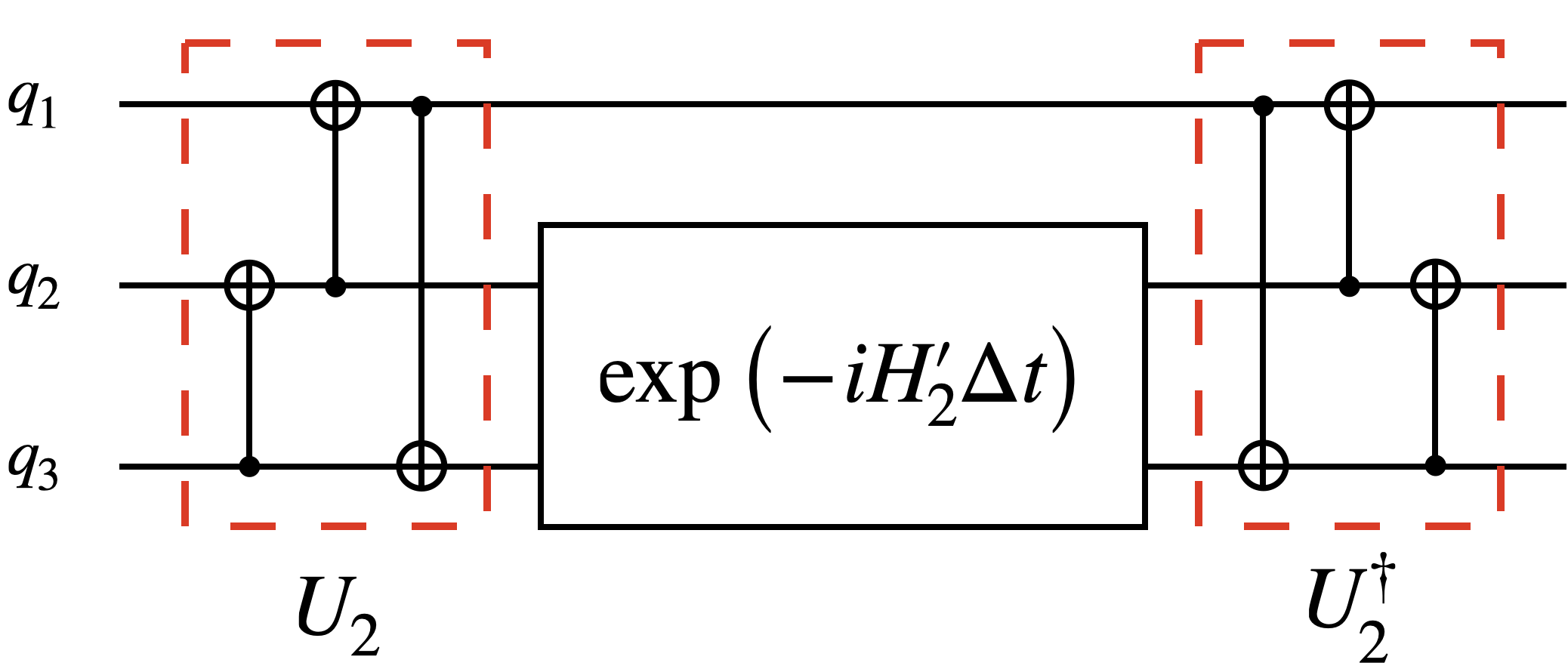}
    \caption{
        Circuit implementation of the local propagator $\exp(-iH_{2}\Delta t)$ for the local $\mathrm{SU}(2)$ symmetry class $l=2$.
        The red dashed block denotes the encoder $U_{2}$ by three CNOT gates.
    }
    \label{fig:qc_trotter_block_triangle_U2}
\end{figure}

As for the specific construction of $U_{l}$, for $l=1$, the encoding is trivial, and we take $U_{1}=I$.
For $l=2$, the encoder $U_{2}$ mapping $\mathcal{G}_{2}$ to $\mathcal{G}_{1}$ can be realised by the three CNOT gates shown in Fig.~\ref{fig:qc_trotter_block_triangle_U2}.
For $l=3$, one can reduce $\mathcal{G}_{3}$ to $\mathcal{G}_{2}$ by applying a local Clifford operation $\sqrt{Z_{1}}\sqrt{Y_{1}}$ on the first qubit, where $(\sqrt{Z_{1}})^{2} = Z_{1}$ and $(\sqrt{Y_{1}})^{2} = Y_{1}$.
Thus, the unitary $U_{3}$ that maps $\mathcal{G}_{3}$ to $\mathcal{G}_{1}$ is realised by
\begin{equation}
\begin{split}
    U_{3}=U_{2}\sqrt{Z_{1}}\sqrt{Y_{1}}.
\end{split}
\end{equation}
Similarly, for $l=4$, we obtain the encoding unitary
\begin{equation}
\begin{split}
    U_{4}=U_{2}\sqrt{Y_{1}}\sqrt{Z_{1}}
\end{split}
\end{equation}
to map $\mathcal{G}_{4}$ to $\mathcal{G}_{1}$.
These encoders provide one explicit construction of the unitary maps required to realise the $\mathrm{SU}(2)\otimes\mathrm{SU}(4)$ factorisation of each symmetry class.
Using these encoding unitary operators $U_{l}$, the propagator $\exp(-iH\Delta t)$ for a single time step $\Delta t$ is decomposed into
\begin{equation}\label{eq:decomp_gen}
\begin{split}
    & \exp(-iH\Delta t) \\
    &= \prod_{l=1}^{4}
       \exp\left(-iH_{l}\Delta t\right)
       + O((\Delta t)^{2}) \\
    &= \prod_{l=1}^{4}
       U_{l}^{\dagger}
       \left(\exp\left(-iG_{l}^{\prime}\Delta t\right)\otimes\exp\left(-iH_{l}^{\prime}\Delta t\right)\right)
       U_{l} + O((\Delta t)^{2}).
\end{split}
\end{equation}
Note that the decomposition in Eq.~\eqref{eq:decomp_gen} may involve additional Trotter errors among different symmetry classes.

While Theorem~\ref{theorem:classification_of_SU(8)_by_SU(2)} gives a generic classification of symmetry classes that is applicable to any three-qubit local Hamiltonian, the generator space of local Hamiltonians can generally be exhausted by less than four classes.
We further characterise the algebraic structure of such a case in Appendix~\ref{appendix:Theorem_complete}.

More importantly, when the three-qubit Hamiltonian belongs to a single local $\mathrm{SU}(2)$ symmetry class, the local propagator can be implemented exactly by only one $\mathrm{SU}(2)\otimes\mathrm{SU}(4)$ block as shown in Fig.~\ref{fig:qc_trotter_block_triangle_U2}, without the inter-class Trotter error appearing in Eq.~\eqref{eq:decomp_gen}.
We also note that the quantum circuit of each encoded propagator is significantly more efficient than the three-qubit KAK decomposition~\cite{Krol2024Beyond} that consumes $19$ CNOT gates, or using the Schur transform for the encoder~\cite{Bacon2006Efficient, Nguyen2023The}.

\section{Advantages of \texorpdfstring{$\mathrm{SU}(2)$}{SU(2)}-classified decomposition}

Through several representative many-body systems, we demonstrate that partitioning Hamiltonians according to their underlying local $\mathrm{SU}(2)$ symmetry structure can substantially reduce both circuit overhead and approximation error compared with conventional decompositions.
This improvement is especially pronounced when the decomposition is chosen so that each local Hamiltonian is confined to the generator space associated with a single $\mathrm{SU}(2)$ symmetry class.
We further show that the proposed decomposition naturally preserves certain global conservation laws that are broken by conventional approaches.

\subsection{Exact, efficient propagation within local three-site systems}

The proposed method achieves exact time propagation with significantly small circuit overhead for three-qubit Hamiltonians satisfying the $\mathrm{SU}(2)$ symmetry.
Given a three-qubit system with vertices $i$, $j$, and $k$, we consider the two-body interactions by the $XXX$ Heisenberg Hamiltonian among each pair of vertices, the three-body spin-chirality interactions among the three vertices, and the combined interactions of these two cases.

\subsubsection{Two-body interactions by $XXX$ Heisenberg Hamiltonian}

First, we consider the triangular quantum spin Hamiltonian 
\begin{equation}
\begin{split}
    H_{\mathrm{Heis}}^{(3)} = \vec{\sigma}_{i} \cdot \vec{\sigma}_{j} + \vec{\sigma}_{j} \cdot \vec{\sigma}_{k} + \vec{\sigma}_{k} \cdot \vec{\sigma}_{i},
\end{split}
\end{equation}
where $\vec{\sigma}_{i} \cdot \vec{\sigma}_{j}$ represents the two-body $XXX$ Heisenberg interaction term,
\begin{equation}
\begin{split}
    \vec{\sigma}_{i} \cdot \vec{\sigma}_{j} = X_{i}X_{j} + Y_{i}Y_{j} + Z_{i}Z_{j}.
\end{split}
\end{equation}
Based on the commutativity relation among the terms in $H_{\mathrm{Heis}}$, the conventional decomposition approximates the propagator $\exp\bigl(-iH_{\mathrm{Heis}}^{(3)}\Delta t\bigr)$ by a sequence of edge-associated propagators,
\begin{equation}\label{eq:convdecomp_Hcluster}
    \begin{split}
        \exp\left(-iH_{\mathrm{Heis}}^{(3)}\Delta t\right)
        &= \exp\left(-i\vec{\sigma}_{k}\cdot\vec{\sigma}_{i}\Delta t\right) \\
        &\quad\quad \times \exp\left(-i\vec{\sigma}_{j}\cdot\vec{\sigma}_{k}\Delta t\right) \\
        &\quad\quad \times \exp\left(-i\vec{\sigma}_{i}\cdot\vec{\sigma}_{j}\Delta t\right) \\
        &\quad+ O(\epsilon_{\mathrm{Heis}}(\Delta t)^{2}),
    \end{split}
\end{equation}
where the coefficient $\epsilon_{\mathrm{Heis}}$ in the residual term in Eq.~\eqref{eq:convdecomp_Hcluster} is defined by
\begin{equation}\label{eq:epsilon_{1}}
\begin{split}
    \epsilon_{\mathrm{Heis}}
    &=X_{i}(Y_{j}Z_{k} - Z_{j}Y_{k}) \\
    &\quad~ + Y_{i}(Z_{j}X_{k} - X_{j}Z_{k}) \\
    &\quad~ + Z_{i}(X_{j}Y_{k} - Y_{j}X_{k}).
\end{split}
\end{equation}
Since each propagator $\exp\left(-i\vec{\sigma}_{i}\cdot\vec{\sigma}_{j}\Delta t\right)$ is a two-qubit unitary, one can implement the sequence of propagators in Eq.~\eqref{eq:convdecomp_Hcluster} by three layers of quantum circuit in Fig.~\ref{fig:qcs_kagome12}(a).

On the other hand, the proposed decomposition directly implements the propagator $\exp\bigl(-iH_{\mathrm{Heis}}^{(3)}\Delta t\bigr)$.
Our method focuses on the fact that, for all edges in the triangle, representatively $(i, j)$, it holds that $[\vec{\sigma}_{i}\cdot\vec{\sigma}_{j}, P]=0$ for all $P\in\{X_{i}X_{j}X_{k}, Y_{i}Y_{j}Y_{k}, Z_{i}Z_{j}Z_{k}\}$.
That is, $H_{\mathrm{Heis}}^{(3)}$ satisfies the $\mathrm{SU}(2)$-symmetry regarding $\mathcal{G}_{2}$ in Eq.~\eqref{eq:G1_G2_G3_G4}.
Therefore, using the encoder $U_{2}$ in Fig.~\ref{fig:qc_trotter_block_triangle_U2}, we can transform $H_{\mathrm{Heis}}^{(3)}$ into $H^{\prime}$ defined by
\begin{equation}
\begin{split}
    H^{\prime}
    &= (X_{j}+Z_{j}) I_{k} + I_{j} (X_{k}+Z_{k}) \\
    &\quad+ (X_{j}-Z_{j}) (X_{k}-Z_{k}) + Y_{j}Y_{k},
\end{split}
\end{equation}
whose non-trivial time evolution is thus confined to the two-qubit system among vertices $j$ and $k$.
This implies that one can design an exact propagator $\exp\bigl(-iH_{\mathrm{Heis}}^{(3)}\Delta t\bigr)$ by sandwiching $H^{\prime}$ with the encoder and decoder as depicted in Fig.~\ref{fig:qc_trotter_block_triangle_U2}.

\subsubsection{Three-body spin-chirality interactions}

Next, we consider the three-body spin-chirality interaction Hamiltonian $H_{\mathrm{Chiral}}^{(3)} = \vec{\sigma}_{i} \cdot (\vec{\sigma}_{j} \times \vec{\sigma}_{k})$, defined by
\begin{equation}
\begin{split}\label{eq:def_threebody_unit}
    H_{\mathrm{Chiral}}^{(3)} = \vec{\sigma}_{i} \cdot (\vec{\sigma}_{j} \times \vec{\sigma}_{k}) 
    &:= X_{i}(Y_{j}Z_{k} - Z_{j}Y_{k}) \\
    &\quad~ + Y_{i}(Z_{j}X_{k} - X_{j}Z_{k}) \\
    &\quad~ + Z_{i}(X_{j}Y_{k} - Y_{j}X_{k}).
\end{split}
\end{equation}
We here remark that $\epsilon_{\mathrm{Heis}} = H_{\mathrm{Chiral}}^{(3)}$, which is obvious by comparing Eq.~\eqref{eq:epsilon_{1}} and Eq.~\eqref{eq:def_threebody_unit}.
Using the notation $\chi_{ijk}:=X_{i}(Y_{j}Z_{k}-Z_{j}Y_{k})$, Eq.~\eqref{eq:def_threebody_unit} can be simplified into
\begin{equation}
\begin{split}
    H_{\mathrm{Chiral}}^{(3)} = \chi_{ijk} + \chi_{kij} + \chi_{jki}.
\end{split}
\end{equation}

Since the conventional decomposition groups the commuting terms into the same cluster, it approximates the propagator $\exp\bigl(-iH_{\mathrm{Chiral}}^{(3)}\Delta t\bigr)$ by 
\begin{equation}\label{eq:convdecomp_Hcluster2}
    \begin{split}
        &\exp\left(-iH_{\mathrm{Chiral}}^{(3)}\Delta t\right) \\
        &= \exp\left(-i\chi_{jki}\Delta t\right)
            \exp\left(-i \chi_{kij}\Delta t\right)
            \exp\left(-i\chi_{ijk}\Delta t\right) \\
        &\quad+ O(\epsilon_{\mathrm{Chiral}}(\Delta t)^{2}),
    \end{split}
\end{equation}
where the residual term $\epsilon_{\mathrm{Chiral}}$ is given by
\begin{equation}\label{eq:epsilon_2}
\begin{split}
    \epsilon_{\mathrm{Chiral}}
    &=\chi_{kij}-\chi_{jki}-\chi_{ijk}.
\end{split}
\end{equation}
Since each propagator $\exp\left(-i\chi_{ijk}\Delta t\right)$ can be realised by the quantum circuit in Fig.~\ref{fig:qcs_kagome12}(b), $\exp\left(-iH_{\mathrm{Chiral}}^{(3)}\Delta t\right)$ is implemented by three layers of this circuit gadget.

On the other hand, again, the proposed decomposition can efficiently implement the propagator $\exp\bigl(-iH_{\mathrm{Chiral}}^{(3)}\Delta t\bigr)$, since $H_{\mathrm{Chiral}}^{(3)}$ also satisfies $[H_{\mathrm{Chiral}}^{(3)}, P]=0$ for all $P\in\{X_{i}X_{j}X_{k}, Y_{i}Y_{j}Y_{k}, Z_{i}Z_{j}Z_{k}\}$.
Using the encoder $U_{2}$ in Fig.~\ref{fig:qc_trotter_block_triangle_U2}, we can transform $H_{\mathrm{Heis}}^{(3)}$ into the following $H^{\prime\prime}$ confined to the two-qubit system among vertices $j$ and $k$, defined by
\begin{equation}
\begin{split}
    H^{\prime\prime}
    &= -Y_{j} I_{k}+I_{j}Y_{k} \\
    &\quad +Y_{j} (X_{k}+Z_{k})-(X_{j}+Z_{j})Y_{k},
\end{split}
\end{equation}
Therefore, one can design an exact propagator $\exp\bigl(-iH_{\mathrm{Chiral}}^{(3)}\Delta t\bigr)$ by sandwiching $H^{\prime\prime}$ with the encoder and decoder as depicted in Fig.~\ref{fig:qc_trotter_block_triangle_U2}.

\subsubsection{Combinations of the two types of interaction}

Now, we consider the circuit realisation of the propagator of the combined Hamiltonian $H_{\mathrm{Heis}}^{(3)}+H_{\mathrm{Chiral}}^{(3)}$.
From the above, we have checked that both $H_{\mathrm{Heis}}^{(3)}$ and $H_{\mathrm{Chiral}}^{(3)}$ satisfy the same $\mathrm{SU}(2)$ symmetry.
This immediately implies that the proposed method can exactly implement the propagator $\exp\bigl(-i(H_{\mathrm{Heis}}^{(3)}+H_{\mathrm{Chiral}}^{(3)}) \Delta t\bigr)$ using the same circuit in Fig.~\ref{fig:qc_trotter_block_triangle_U2} by sandwiching $\exp\left(-i(H^{\prime}+H^{\prime\prime})\Delta t\right)$ with $U_{2}$ encoder and decoder.

Meanwhile, when it comes to the conventional decomposition, it requires a further deeper circuit to approximate $\exp\bigl(-i(H_{\mathrm{Heis}}^{(3)}+H_{\mathrm{Chiral}}^{(3)}) \Delta t\bigr)$ by concatenating the circuit for $\exp\bigl(-iH_{\mathrm{Heis}}^{(3)}\Delta t\bigr)$ and $\exp\bigl(-iH_{\mathrm{Chiral}}^{(3)} \Delta t\bigr)$.
The approximation error also sums up to $\epsilon_{\mathrm{Hies}} + \epsilon_{\mathrm{Chiral}}$.
Note that $[H_{\mathrm{Heis}}^{(3)},H_{\mathrm{Chiral}}^{(3)}]=0$ since 
\begin{equation}
\begin{split}
    H_{\mathrm{Heis}}^{(3)}
    &= \vec{\sigma}_{i}\cdot\vec{\sigma}_{j}+\vec{\sigma}_{j}\cdot\vec{\sigma}_{k}+\vec{\sigma}_{k}\cdot\vec{\sigma}_{i} \\
    &= \frac{1}{2}(\vec{\sigma}_{i}+\vec{\sigma}_{j}+\vec{\sigma}_{k})^{2}- \frac{9}{2}I,
\end{split}
\end{equation}
and thus no additional Trotter error would occur when decomposing the propagator as below
\begin{equation}\label{eq:split_H2H3}
\begin{split}
    & \exp\bigl(-i(H_{\mathrm{Heis}}^{(3)}+H_{\mathrm{Chiral}}^{(3)}) \Delta t\bigr) \\
    &= \exp\bigl(-iH_{\mathrm{Chiral}}^{(3)} \Delta t\bigr)
    \exp\bigl(-iH_{\mathrm{Heis}}^{(3)} \Delta t\bigr).
\end{split}
\end{equation}

\begin{figure}
    \centering
    \subfloat[]{
        \includegraphics[width=0.47\linewidth]{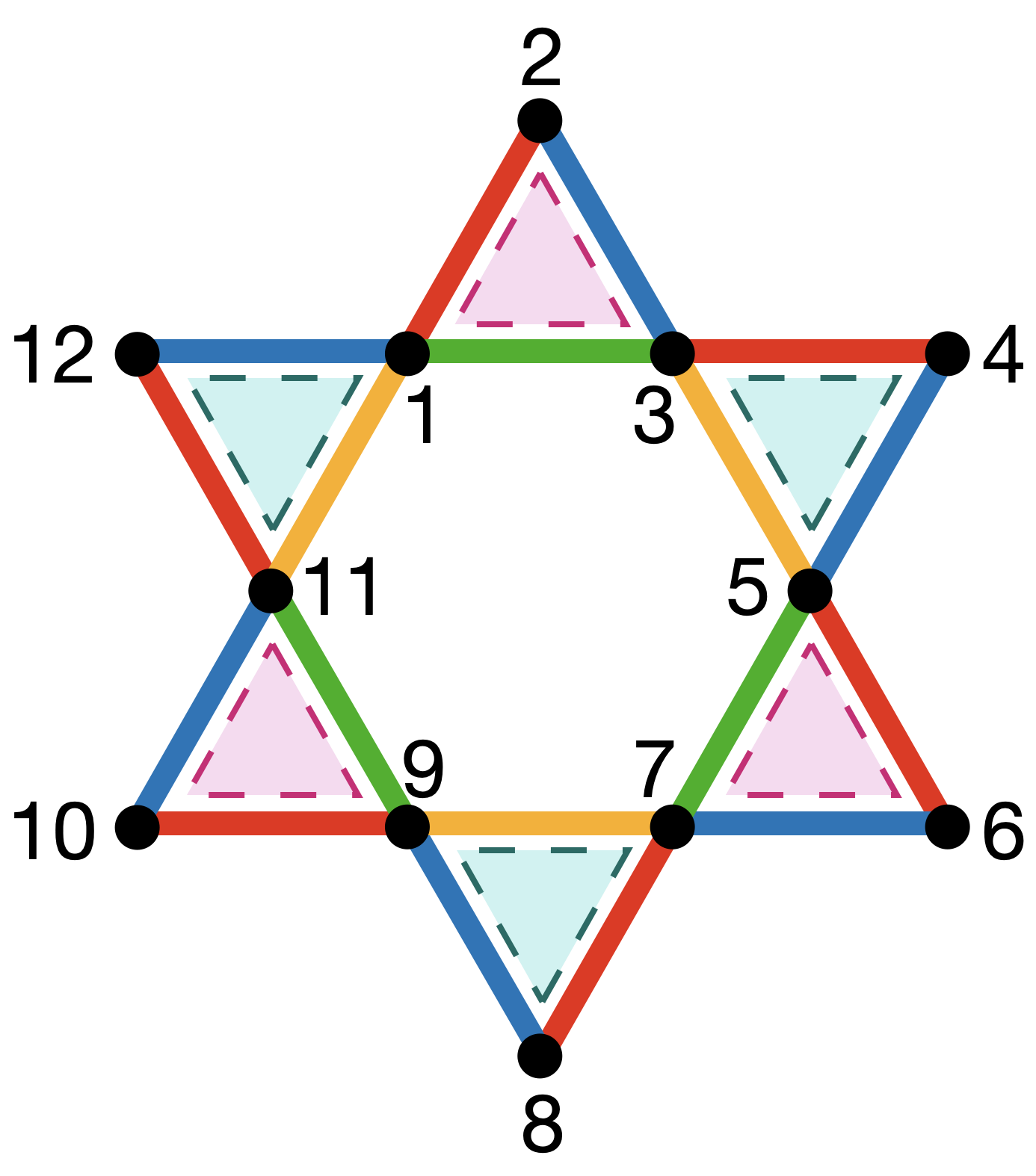}
    }
    \hfill
    \subfloat[]{
        \includegraphics[width=0.47\linewidth]{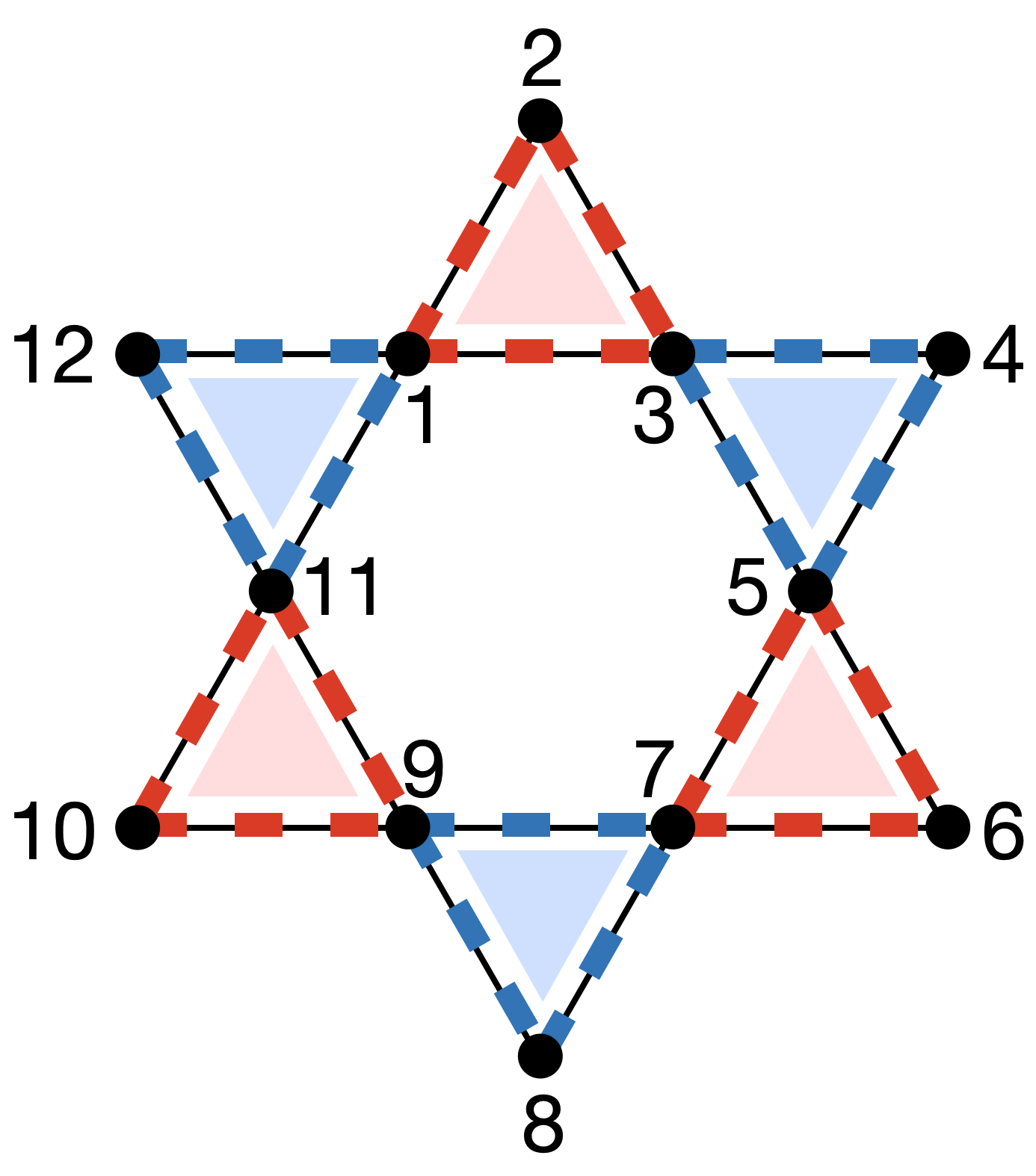}
    }
    \caption{
        The connectivity graph of the 12-qubit Kagome $XXX$ Heisenberg lattice with spin-chirality interactions within the triangular parts.
        (a) Clustering by the conventional decomposition.
        The two-qubit interactions $\vec{\sigma}_{i}\cdot\vec{\sigma}_{j}$ corresponding to the edges are partitioned into four clusters coloured in red, blue, green, and yellow.
        The spin-chirality interactions $\vec{\sigma}_{i}\cdot(\vec{\sigma}_{j}\times\vec{\sigma}_{k})$ within each triangle are partitioned into two clusters, coloured pink and light blue.
        In total, the conventional method partitions the interactions into the product formula with $4 + 3\times 2 = 10$ clusters.
        Each pink or light-blue part is further decomposed into three clusters.
        (b) Clustering by the proposed decomposition, which allows for the product formula into only two clusters, the red and the blue triangles.
    }
    \label{fig:decompositions_kagome_chirality_12}
\end{figure}

\begin{figure}
    \centering
    \subfloat[]{
        \includegraphics[width=\linewidth]{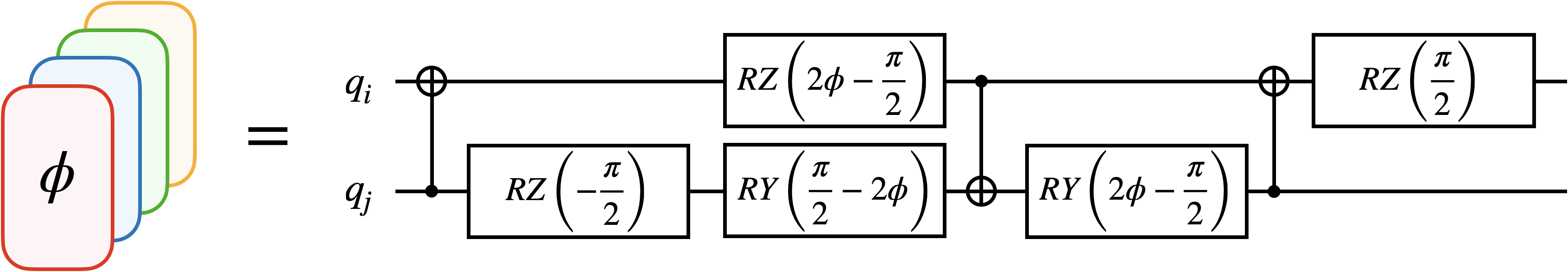}
    }
    \hfill
    \subfloat[]{
        \includegraphics[width=0.66\linewidth]{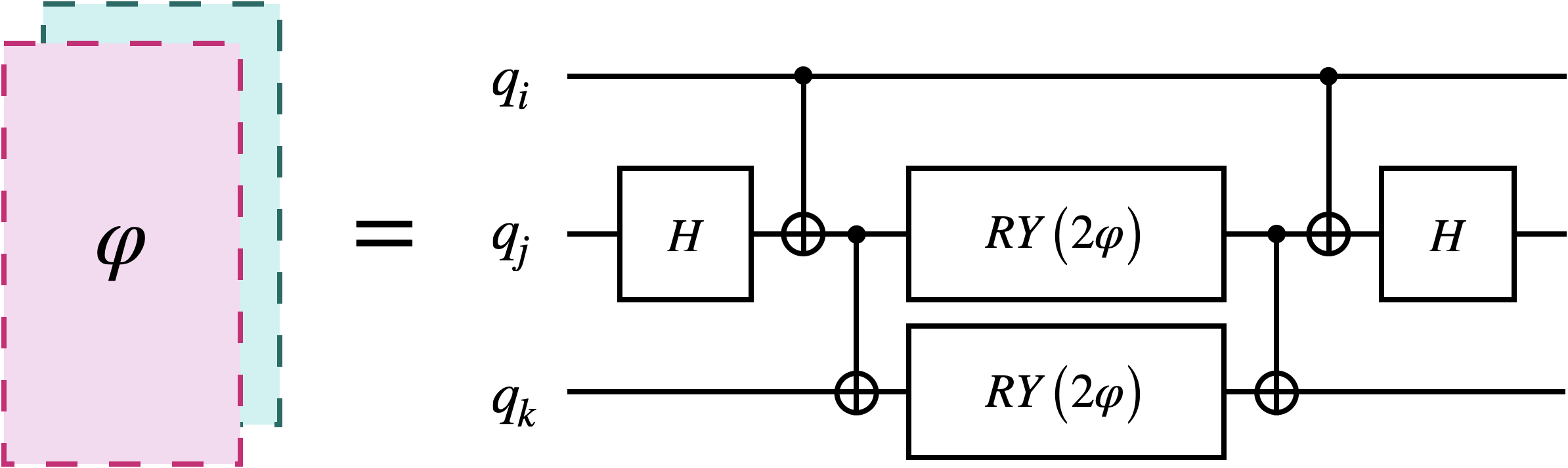}
    }
    \hfill
    \subfloat[]{
        \includegraphics[width=\linewidth]{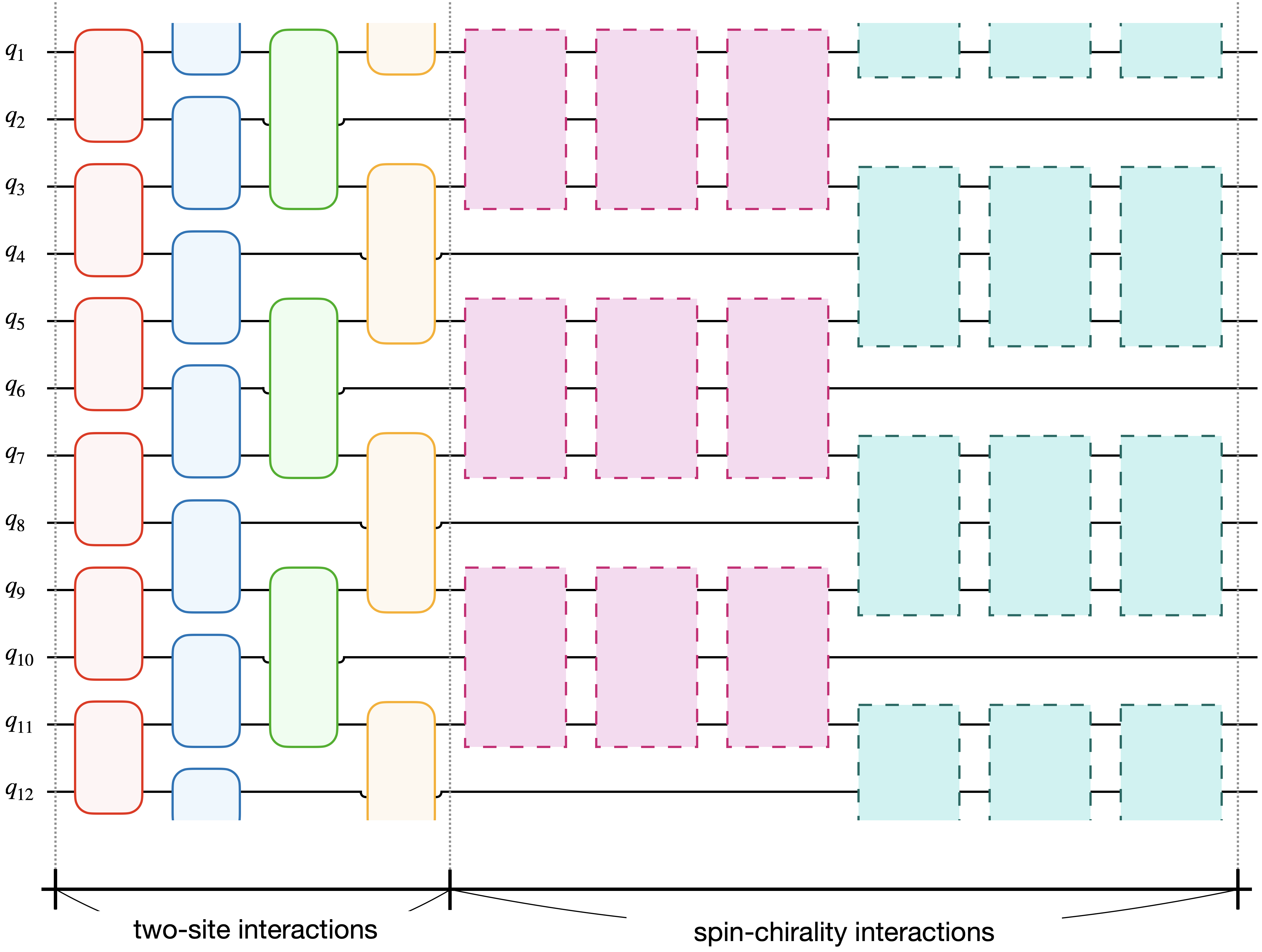}
    }
    \hfill
    \subfloat[]{
        \includegraphics[width=0.66\linewidth]{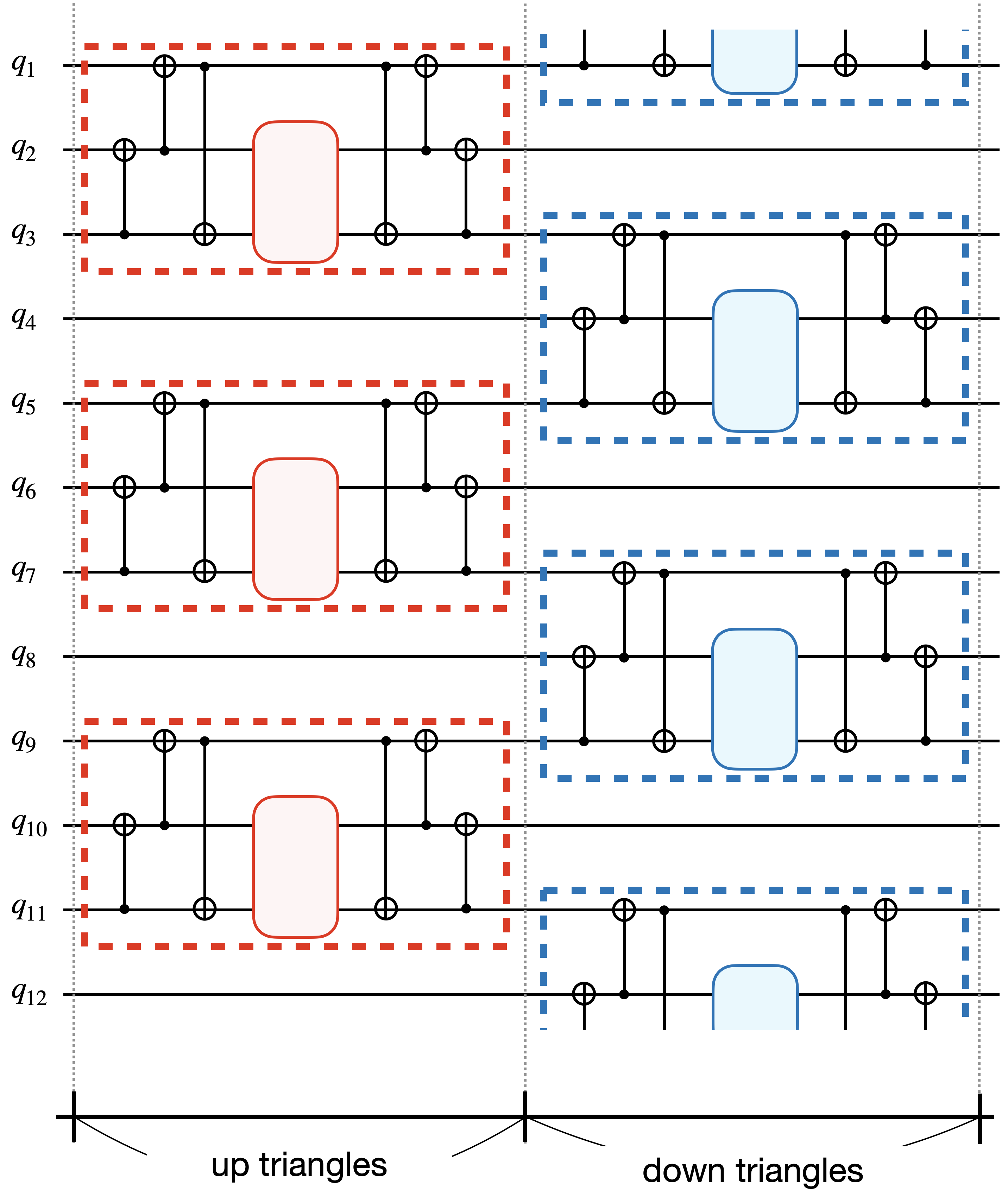}
    }
    \caption{
        Elementary circuits and first-order decomposition steps.
        (a) Circuit for the two-qubit exchange evolution $\exp\left(-i\phi\,\vec{\sigma}_{i}\cdot\vec{\sigma}_{j}\right)$.
        (b) Circuit for the three-qubit evolution $\exp\left(-i\varphi Z_{i}\left(X_{j}Y_{k}-Y_{j}X_{k}\right)\right)$.
        (c) A single first-order Trotter step obtained from the conventional commutativity-based decomposition.
        (d) A single first-order Trotter step obtained from the proposed $\mathrm{SU}(2)$-respecting triangular decomposition.
        In (c) and (d), different gate colours indicate the implementations of the corresponding clusters shown in the same colours in Fig.~\ref{fig:decompositions_kagome_chirality_12}.
    }
    \label{fig:qcs_kagome12}
\end{figure}

\subsection{Reducing inter-cluster errors and circuit overhead on the Kagome lattice}

Here, we further demonstrate that the proposed method substantially reduces both the inter-cluster residual errors and the total circuit overhead in a demanding many-body setting: the $XXX$ Heisenberg model on the Kagome lattice with triangular spin-chirality interactions.
The target Hamiltonian, denoted by $H_{\mathrm{Kagome}}$, is constructed by assigning the local interaction $H_{\mathrm{Heis}}^{(3)} + H_{\mathrm{Chiral}}^{(3)}$ to each triangular plaquette of the lattice.
The schematic illustration of a 12-site Kagome lattice is shown in Fig.~\ref{fig:decompositions_kagome_chirality_12}.

First, with the conventional decomposition, $H_{\mathrm{Kagome}}$ is decomposed into the $XXX$ Heisenberg interactions $H_{\mathrm{Heis}}$ and the spin-chirality interactions $H_{\mathrm{Chiral}}$, that is, $H_{\mathrm{Kagome}} = H_{\mathrm{Heis}} + H_{\mathrm{Chiral}}$.
To construct tractable propagators, $H_{\mathrm{Heis}}$ is further decomposed into four clusters corresponding to different matchings, i.e., a set
of pairwise vertex-disjoint edges.
Meanwhile, since the two different $H_{\mathrm{Chiral}}^{(3)}$ sharing only a single vertex do not commute with each other, $H_{\mathrm{Chiral}}$ is also decomposed into two clusters so that each cluster does not share common vertices.
This clustering is illustrated in Fig.~\ref{fig:decompositions_kagome_chirality_12}(a).
As a result, a single Trotter step by the conventional method requires ten Trotter blocks, which yields a deep quantum circuit as shown in Fig.~\ref{fig:qcs_kagome12}(c).

\begin{table*}[htbp]
    \centering
    \renewcommand{\arraystretch}{1.5}
    \begin{tabular}{ l|c|c|c|c }
        \hline\hline
        system Hamiltonian & \makecell{number \\of clusters \\ (conventional)} & \makecell{number \\of clusters \\ \textbf{(proposed)}} & \makecell{residual error\\ (conventional)} & \makecell{residual error\\ \textbf{(proposed)}} \\
        \hline
        1D transverse-field Ising chain& 2 & 2 & 2 & 0.5 \\
        1D Heisenberg chain~\cite{Yang2026Quantum} & 2 & 2 & 6 & 3 \\
        1D two-layer $J_{1}$--$J_{2}$ Heisenberg & 4 & 2 & $6+12t+6t^{2}$ & 3+12$t$+12$t^{2}$ \\
        Square lattice transverse-field Ising & 3 & 2 & 4 & 2 \\
        Square lattice Heisenberg & 4 & 2 & 36 & 30 \\
        Kagome lattice Heisenberg & 4 & 2& 28 & 24 \\
        \makecell[l]{Kagome lattice Heisenberg with triangular spin-chirality} & 10 & 2 &28+48$t+28t^{2}$&24+48$t$+24$t^{2}$ \\
        Triangular lattice Heisenberg & 6 & 3 & 66 & 60 \\
        \makecell[l]{Triangular lattice Heisenberg with triangular spin-chirality} & 24 & 3 & 66+288$t$+264$t^{2}$&60+288$t$+504$t^{2}$ \\
        \hline\hline
    \end{tabular}
    \caption{
        The list of representative Hamiltonian instances that benefit from our method.
        The columns for the ``residual error'' count the number of Pauli terms per vertex in the residual error.
        All these models admit the decomposition with $\mathcal{G}_{2}$ symmetry in Theorem~\ref{theorem:classification_of_SU(8)_by_SU(2)}.
        The decomposition strategies by the conventional method and the proposed method for each model are detailed in Appendix~\ref{appendix:examples}. The real value $t$ denotes the ratio of coupling constants defined as $t=J_{2}/J_{1}$ for a 1D two-layer Heisenberg chain and $t=K/J$ for a 2D lattice with triangular spin-chirality interaction.
    }
    \label{tab:list_of_Hamiltonians}
\end{table*}

By contrast, our approach decomposes $H_{\mathrm{Kagome}}$ into only two clusters,
\begin{equation}
\begin{split}
    H_{\mathrm{Kagome}} = H_{\bigtriangleup} + H_{\bigtriangledown},
\end{split}
\end{equation}
where $H_{\bigtriangleup}$ and $H_{\bigtriangledown}$ consist of the interactions within the red and blue triangles in Fig.~\ref{fig:decompositions_kagome_chirality_12}(b), respectively.
Each local red or blue triangle has the Hamiltonian $H_{\mathrm{Heis}}^{(3)} + H_{\mathrm{Chiral}}^{(3)}$, whose propagation can be exactly processed by the proposed method, as we have seen above.
Consequently, the resulting quantum circuit takes the much simpler form shown in Fig.~\ref{fig:qcs_kagome12}(d).

\subsection{Applicability in a broad class of lattice models}

Our approach is applicable not only to the Kagome lattice shown above, but also to a variety of quantum spin lattice models.
We list several representative lattice models in Table~\ref{tab:list_of_Hamiltonians}, comparing the number of clusters and the residual error, which counts the number of Pauli terms per vertex.
The detailed decomposition and derivation of the table elements are provided in Appendix~\ref{appendix:examples}.

From Table~\ref{tab:list_of_Hamiltonians}, we observe that the proposed method yields clearly fewer clusters, which directly contributes to reducing the circuit depth in each Trotter step to sequentially implement Trotter blocks corresponding to each cluster.
Remarkably, whenever the proposed method yields only two clusters, it allows implementing the second-order product formula without additional circuit cost, or at most a constant cost: it does not double the total circuit depth to suppress the residual error in magnitude by $\Delta t$.
Table~\ref{tab:list_of_Hamiltonians} implies that for many lattice models, the proposed method achieves the two-cluster decomposition, while the conventional method typically requires more than three.
The circuit construction for a higher-order product formula would also benefit from having fewer clusters.

Our method also exhibits a smaller residual error in the number of Puali terms averaged per vertex.
Note that this provides only a loose upper bound on the residual error, and a tighter comparison remains open.
We remark that the proposed decomposition can yield a much stronger suppression of residual errors than is apparent from this coarse counting estimate.
We provide numerical evidence for the Kagome lattice in Section~\ref{sec:numerics}, and discuss the physical implications of our design principle below.

\begin{figure*}[htbp]
    \centering
    \subfloat[]{
        \includegraphics[width=0.48\linewidth]{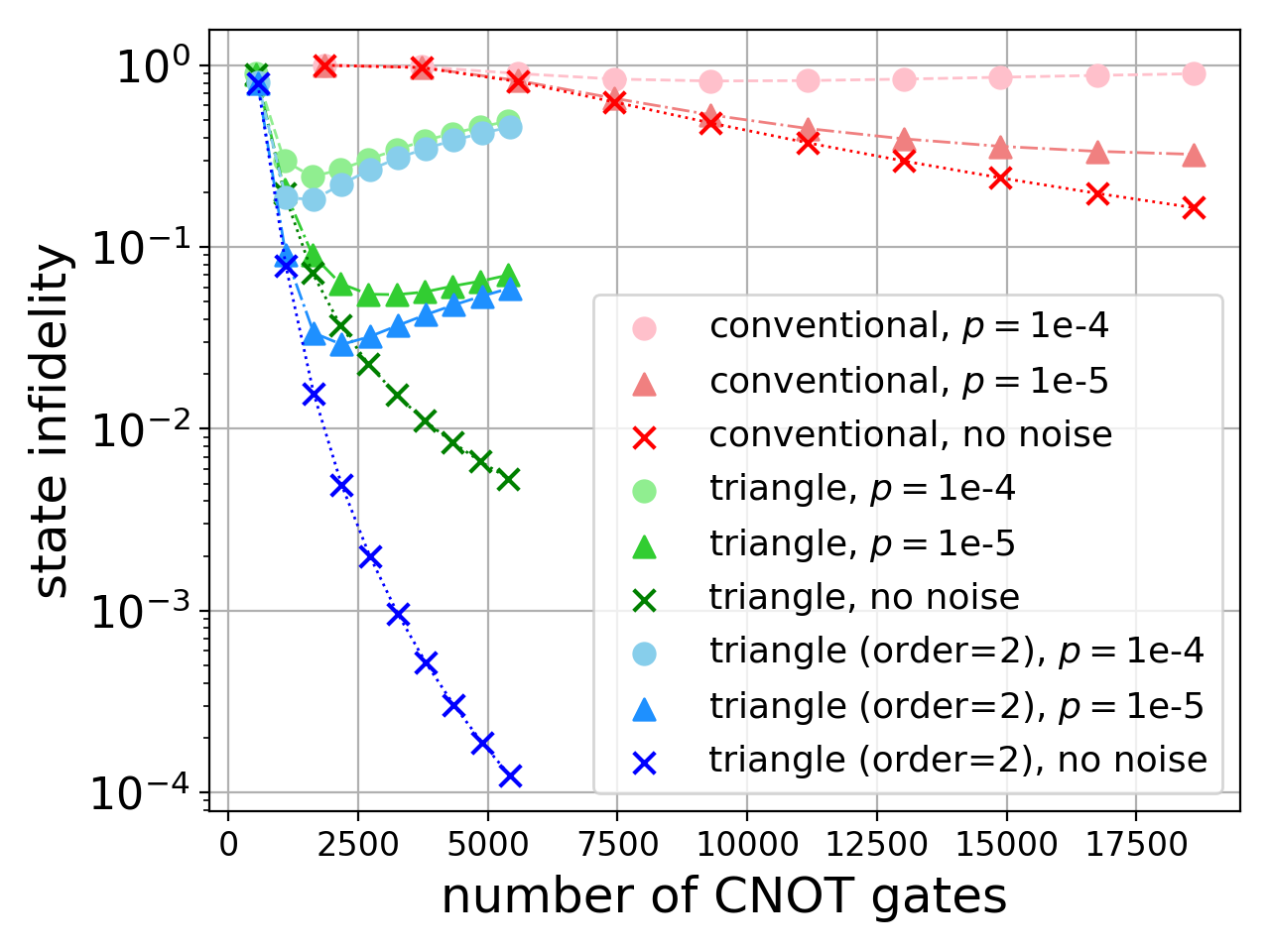}
    }
    \hfill
    \subfloat[]{
        \includegraphics[width=0.48\linewidth]{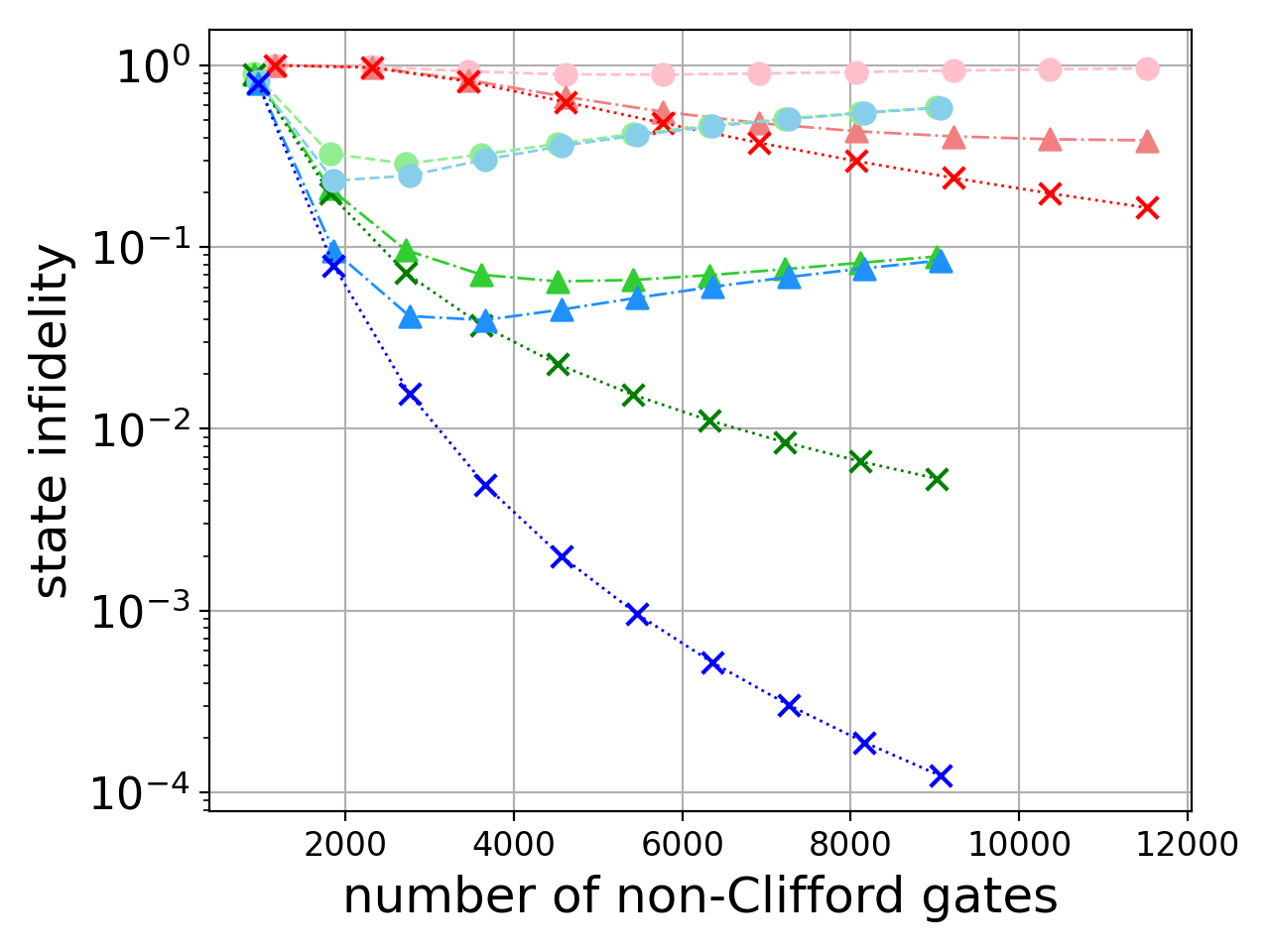}
    }
    \caption{
        State infidelity at the evolution time $t=\pi$ for the state evolved from $t=0$.
        The Trotter steps are investigated from $\{10, 20, \ldots, 100\}$.
        The vertical axis shows the infidelity, while the horizontal axis shows
        (a) the number of CNOT gates for each Trotter step,
        (b) the number of non-Clifford gates for each Trotter step.
        In figure (a), the results for the noise-free case, depolarising error with $(p_{1},p_{2})=(1.0\times10^{-6},\,1.0\times10^{-5})$, and depolarising error with $(p_{1},p_{2})=(1.0\times10^{-5},\,1.0\times10^{-4})$ are indicated by $\times$, $\blacktriangle$, and $\bullet$, respectively.
        In figure (b), the results for the noise-free case, dephasing error with $p_{z} = 1.0\times10^{-5}$, and dephasing error with $p_{z} = 1.0\times10^{-4}$ are indicated by $\times$, $\blacktriangle$, and $\bullet$, respectively.
        The conventional method, and the first- and second-order decompositions of the proposed method, are shown in shades of red, green, and blue, respectively.
    }
    \label{fig:kagome_ring_infidelity}
\end{figure*}

\subsection{Preserving global conservation laws via local symmetries}

By respecting the local $\mathrm{SU}(2)$symmetry, the proposed method provides not only operational advantages but also physically faithful time propagation.
In particular, the proposed method preserves the total spin angular momentum under time evolution governed by the $XXX$ Heisenberg and spin-chirality interactions, whereas the conventional approach violates it.

We illustrate this using a physical model whose Hamiltonian is spanned by the spin-chirality interaction, such as $\epsilon_{\mathrm{Chiral}}$ in Eq.~\eqref{eq:epsilon_2}.
First, we define the total spin operators $(S_{\mathrm{x}},S_{\mathrm{y}},S_{\mathrm{z}})$ for the $N$-site spin system by
\begin{equation}
\begin{split}
    (S_{\mathrm{x}},S_{\mathrm{y}},S_{\mathrm{z}})=\sum_{i=1}^{N}(X_{i},Y_{i},Z_{i}).
\end{split}
\end{equation}
Assuming that the Hamiltonian $H$ is generated by spin-chirality interactions, the total spin operators commute with $H$, which follows directly from the local commutation
relations on each triangle $(i,j,k)$:
\begin{equation}\label{eq:commute}
\begin{split}
    &[X_{i}+X_{j}+X_{k},\vec{\sigma}_{i} \cdot (\vec{\sigma}_{j} \times \vec{\sigma}_{k})]=0,\\
    &[Z_{i}+Z_{j}+Z_{k},\vec{\sigma}_{i} \cdot (\vec{\sigma}_{j} \times \vec{\sigma}_{k})]=0.
\end{split}
\end{equation}
Therefore, the total spin operators are conserved under time evolution since the commutation relation $[S_{i}, H]=0$ implies $\exp(iHt)S_{i}\exp(-iHt)=S_{i}$.

Our $\mathrm{SU}(2)$-classification strategy guarantees this conservation law because it exactly implements the propagator $\exp\bigl(-i\vec{\sigma}_{i} \cdot (\vec{\sigma}_{j} \times \vec{\sigma}_{k})\Delta t\bigr)$
for every triangle $(i,j,k)$. By contrast, the conventional decomposition produces residual terms given by Eq.~\eqref{eq:epsilon_2}, which do not commute with the total spin operators, i.e., $[\epsilon_{\rm Chiral}, S_{i}]\neq0$.
This non-commutativity directly leads to violations of the conservation laws during the Trotterised time evolution.

As we have seen above, the proposed method preserves the total spin angular momenta, which is well known to be conserved under the actual time evolution governed by the $XXX$ Heisenberg and spin-chirality interactions.
Since such violations can significantly degrade the accuracy of the dynamics simulation, existing works address this issue by inserting additional gadgets to correct per-step residual errors, with a focus on global symmetry~\cite{Tran2021Faster, Zeng2025Simple}.
While such approaches generally require additional circuit overhead, our design principle based on local $\mathrm{SU}(2)$ symmetry naturally preserves physically important conservation laws while simultaneously reducing computational overhead.

\begin{figure*}[htbp]
    \centering
    \subfloat[]{
        \includegraphics[width=0.48\linewidth]{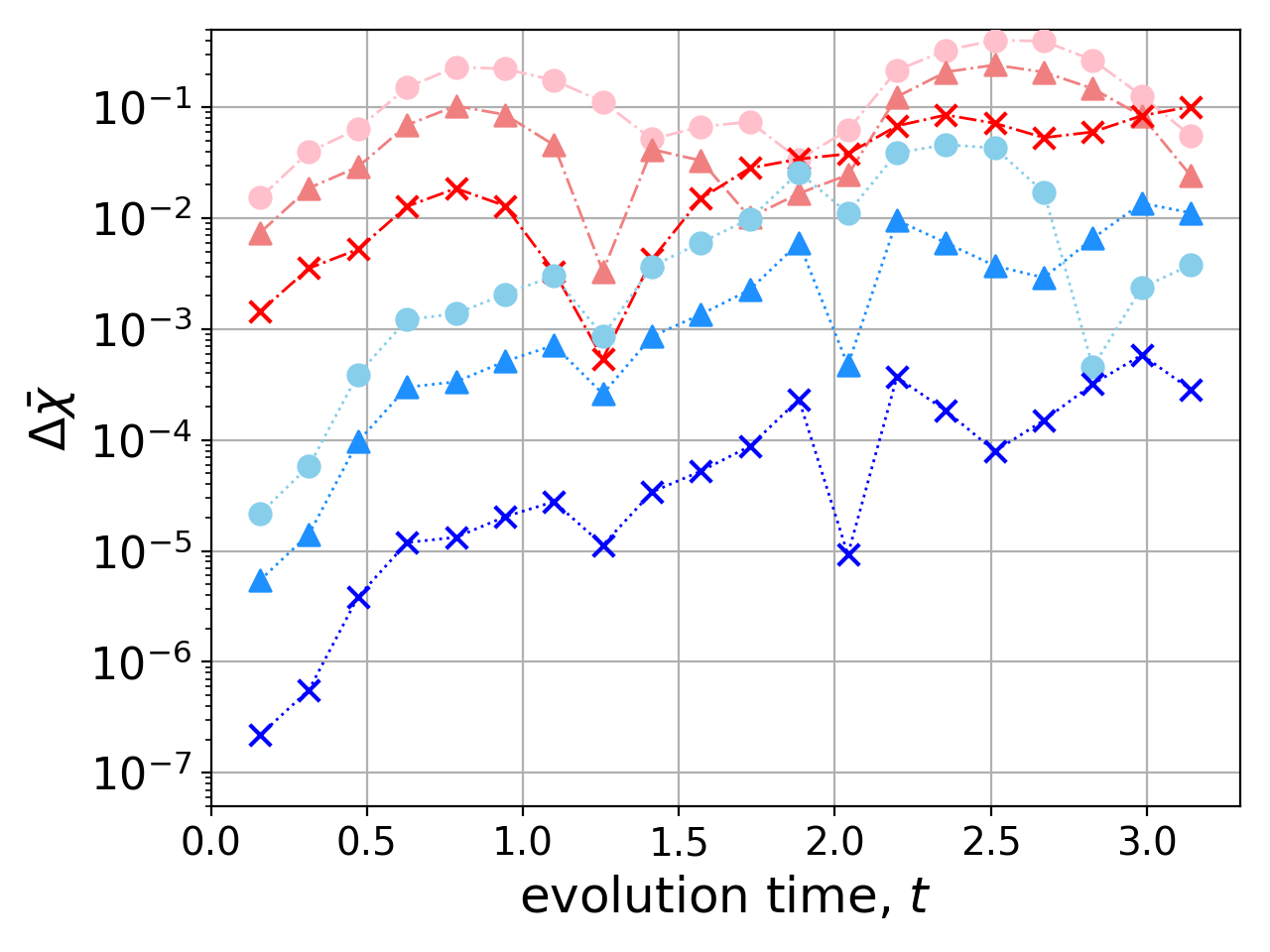}
    }
    \subfloat[]{
        \includegraphics[width=0.48\linewidth]{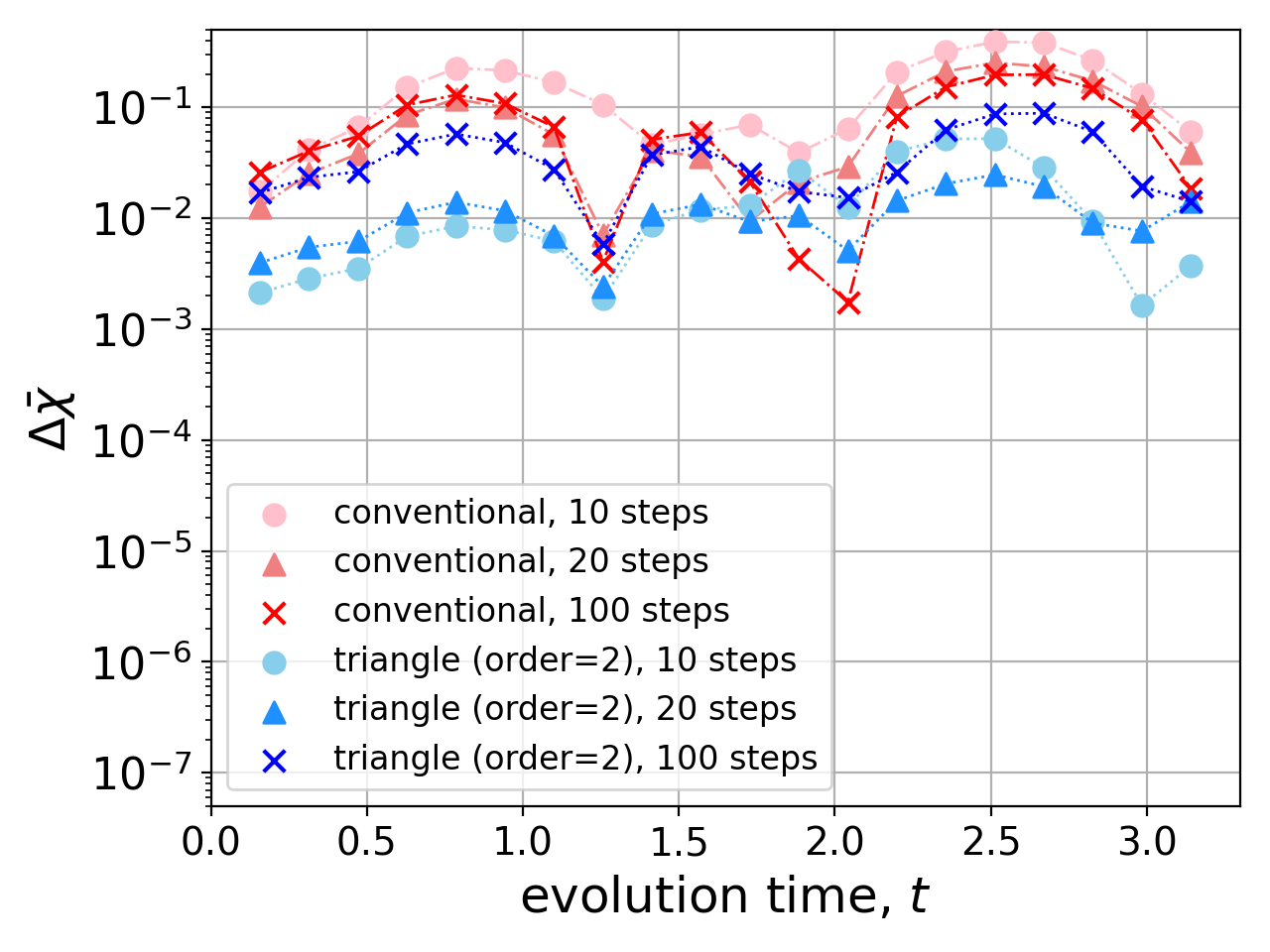}
    }
    \caption{
        Estimated average spin-chirality as a function of the evolution time $t$.
        Figures (a) and (b) show the estimation bias $\Delta \bar{\chi}$, defined as the difference between the estimated and theoretical average spin-chirality, for $t \in \{\pi/20, 2\pi/20, \ldots, \pi\}$.
        For each method, the results for $10$, $20$, and $100$ Trotter steps are indicated by $\bullet$, $\blacktriangle$, and $\times$, respectively.
        The conventional method is shown in shades of red, while the second-order decomposition of the proposed method is shown in shades of blue.
    }
    \label{fig:kagome_ring_chirality}
\end{figure*}

\section{Numerical simulation\label{sec:numerics}}

We illustrate the power of our method by performing the numerical simulation using the 12-qubit $XXX$ Kagome Heisenberg model with three-body spin-chirality interactions within its local triangular sublattices, as discussed in the previous section.
We set the exchange coupling $J=1$ and the spin-chirality coupling $K=0.1$ for the model Hamiltonian.
Throughout the experiments, the initial state of the system is set to the site-dependent product state
\begin{equation}\label{eq:state_initial_kagome}
\begin{split}
    \left|\psi_{\mathrm{init}}\right\rangle
    = \bigotimes_{k=1}^{N}\frac{1}{\sqrt{2}}\left(\left|0\right\rangle + e^{i\frac{2(k-1)\pi}{3}}\left|1\right\rangle\right),
\end{split}
\end{equation}
which corresponds to a phase pattern $(0, 2\pi/3, 4\pi/3)$ across the three-site triangles in the Kagome lattice.
This initial state is prepared from $|0\rangle^{\otimes N}$ state by applying $H$ gate and $RZ(2(k-1)\pi/3)$ gate sequentially for each qubit $k\in[N]$.

We use Qiskit transpiler~\cite{qiskit2024} for circuit optimisation with basis gate set $\left\{RZ, \sqrt{X}, CNOT\right\}$.
Note that $\sqrt{X}$ and CNOT are Clifford gates, and only $RZ$ gates can be non-Clifford.

\subsection{State infidelity with different Trotter steps}

First, we compare the conventional and proposed methods by simulating the time evolution from the same initial state from $t=0$ to $t=\pi$, while varying the number of Trotter steps among $\{10,20,\ldots, 100\}$.
The accuracy of the evolved state is evaluated by the state infidelity
\begin{equation}
\begin{split}
    1 - F(\tilde{\rho}, \rho_{\mathrm{ideal}})
    &= 1 - \operatorname{Tr} \left[\left(\rho_{\mathrm{ideal}}^{1/2} \tilde{\rho} \rho_{\mathrm{ideal}}^{1/2}\right)^{1/2}\right]^{2},
\end{split}
\end{equation}
where $\tilde{\rho}$ is the simulated output state and $\rho_{\mathrm{ideal}}$ is the ideally evolved state.
We investigate the infidelity by the conventional decomposition, the first-order decomposition of the proposed method (referred to as ``triangle'' in the figures), and the second-order decomposition of the proposed method.

We consider two hardware-relevant regimes: a near-term setting, where two-qubit gate noise is the dominant error source, and an early fault-tolerant setting, where non-Clifford gates are the most costly and noise-sensitive operations.
In the near-term setting, we assign the depolarising error with two-qubit depolarising probability $p_{2} \in \{1.0\times 10^{-5}, 1.0\times 10^{-4}\}$ and single-qubit depolarising probability $p_{1}=p_{1}/10$, reflecting noise levels relevant to state-of-the-art and expected near-future quantum hardware~\cite{Hughes2025Trapped-ion, Ransford2025Helios}.
In the early fault-tolerant setting, we assign the dephasing error to non-Clifford $RZ$ gates with probability $p_{z} \in \{1.0\times 10^{-5}, 1.0\times 10^{-4}\}$.

The results for the near-term setting are shown in Fig.~\ref{fig:kagome_ring_infidelity}(a).
Here, we observe that the proposed method yields substantially lower infidelity than the conventional method, for both the number of Trotter steps and the number of CNOT gates.
The second-order decomposition by the proposed method achieves an infidelity more than three orders of magnitude smaller while using only about one-quarter as many CNOT gates.
This indicates that the proposed decomposition suppresses the Trotter error much more efficiently per unit circuit depth.

Under depolarising noise, both methods exhibit the standard trade-off between residual Trotter error and circuit noise accumulation.
A notable feature is that the proposed method reaches its minimum infidelity at a much smaller circuit depth.
This is because its Trotter error decreases so rapidly that, once the approximation error is already sufficiently suppressed, the additional depolarising error becomes dominant.
By contrast, in the conventional method, increasing the number of Trotter steps can still improve overall accuracy, while the Trotter error remains significant over a wider range of depths.

We also observe a similar advantage from Fig.~\ref{fig:kagome_ring_infidelity}(b) for the early fault-tolerant setting.
Note that the circuits are optimised using the default Qiskit transpiler and thus may not be optimised with respect to the number of non-Clifford gates.
Nevertheless, the fact that the proposed method achieves a lower total gate count implies that the number of non-Clifford gates can also be reduced accordingly by our method.
For a more refined comparison, one could further combine our decomposition with compilers tailored to non-Clifford resource optimisation, such as T-count or arbitrary-angle rotation-count reduction.

\subsection{Time evolution of expected average spin-chirality}

We next demonstrate the advantage of the proposed method in the expectation value $\bar{\chi}$ of the average spin-chirality, whose observable is defined as
\begin{equation}
\begin{split}
    O_{\bar{\chi}} = \frac{1}{6}\sum_{j=1}^{6} \vec{\sigma}_{2j}\cdot\left(\vec{\sigma}_{2j-1}\times\vec{\sigma}_{2j+1}\right).
\end{split}
\end{equation}
This observable corresponds to the average of the three-body interactions in the triangular regions of the 12-qubit Kagome Hamiltonian and thus measures how effectively each method handles three-body interactions via its Trotter decomposition.
We evolve the initial state set to Eq.~\eqref{eq:state_initial_kagome} for different evolution time $t\in\{k\pi/20\}_{k=1}^{20}$ with $10$, $20$ and $100$ Trotter steps, respectively.
Note that these Trotter step counts are already used in real-device experiments for simpler physical models~\cite{Chowdhury2024Enhancing}.

We compute the estimation bias $\Delta\bar{\chi}$ between the theoretically predicted expectation value and the one obtained by each Trotter decomposition.
The results for noise-free simulation are plotted in Fig.~\ref{fig:kagome_ring_chirality}.
For the noise-free simulation, Fig.~\ref{fig:kagome_ring_chirality}(a) shows that, compared with the conventional method, the proposed second-order decomposition reduces the estimation bias by more than three orders of magnitude over most of the parameter regime.

In addition to noise-free simulation, we introduce depolarising errors with $p_{1} = 1.0\times 10^{-5}$ and $p_{2} = 10 p_{1} = 1.0\times 10^{-4}$ in the noisy simulation.
The results for noisy simulation are plotted in Fig.~\ref{fig:kagome_ring_chirality}(b).
Observing from Fig.~\ref{fig:kagome_ring_chirality}(b), we see that the proposed method still closely matches the theoretical value, while the plots obtained with the conventional method are substantially altered from it.
This implies that our method outperforms existing approaches in general under depolarising noise, which attenuates expectation values toward zero, and is particularly effective at estimating observables with intrinsically small expectation values with higher precision.

\section{Discussion}

Our main contribution is a new design principle for product formulas that goes beyond commutativity.
Conventional commutativity-based decompositions partition the Hamiltonian into product-formula blocks based on rather direct commutativity criteria.
Although such criteria are universal and straightforward to design, the resulting product formula does not necessarily reflect the intrinsic local structure of the underlying Hamiltonian, which can lead to prohibitively long sequential propagators and large approximation errors.

The proposed design principle is the first to exploit the local Hamiltonian structures beyond commutativity, by taking three-qubit local $\mathrm{SU}(8)$ operators so that they satisfy the underlying $\mathrm{SU}(2)$ symmetry.
This provides concrete evidence that local $\mathrm{SU}(2)$ symmetry can serve as a highly effective clustering criterion for designing a product formula that respects the intrinsic structure of the local dynamics.

Our method can also be interpreted as a ``hyperedge'' decomposition of the Hamiltonian.
From this viewpoint, conventional decompositions largely adopt graph-based partitioning strategies, where non-local physical interactions are organised along edges or paths of an interaction graph.
By contrast, our approach identifies non-trivial local structures shared across multiple terms and sites, moving beyond pair-wise or path-wise commutativity, and groups them into structure-aware local blocks.
This suggests a more flexible route to designing product formula, in which the relevant clustering units are not restricted to ordinary graph edges but may instead extract more physically native local structures of the Hamiltonian.

Both our theoretical and numerical analyses demonstrate that our method substantially enhances the efficiency and scalability of the simulation, reducing both residual errors and circuit overhead across a broad range of demanding, representative many-body systems.
The advantage of our method becomes particularly prominent when many terms in the Hamiltonian representation can be grouped within triangular plaquettes that reveal a local $\mathrm{SU}(2)$ symmetry and can therefore be propagated exactly as a whole.
Summarising the above, our framework not only opens a new avenue for the product formula design but also provides a promising route to the scalable, practical simulation of a broader class of demanding physical models that were previously out of reach.

The proposed design principle opens several directions for future research.
While we have systematically shown how to efficiently implement clustered Hamiltonians once their symmetry classes are identified, determining how to decompose a general Hamiltonian into fewer classes remains an important open problem.
That is, despite going beyond conventional commutativity-based clustering, the search for decompositions that reduce both Trotter error and circuit overhead remains largely heuristic.

Developing tighter residual-error analyses, building on recent techniques for bounding product-formula errors~\cite{Childs2021Theory, Yuan2021Quantum, Watson2025Exponentially}, is also an important direction.
Our current analysis counts the number of non-trivial Pauli terms in the residual error, which provides only a coarse proxy for the actual approximation error and may not fully explain the substantial numerical advantage of the proposed method.
Refining this analysis could reveal broader algebraic conditions for efficient product-formula decompositions beyond the $\mathrm{SU}(2)$ symmetry in the local three-site system studied here.

More broadly, the proposed framework is also naturally compatible with other approaches for improving Trotter-based simulation.
Important directions include incorporating our method into randomised Trotter schemes~\cite{Campbell2019Random, Kiumi2025TE-PAI}, combining it with other quantum error mitigation techniques~\cite{Temme2017Error, Endo2019Mitigating, Hakkaku2025Data-Efficient, yang2022efficient, Yoshioka2022Generalized, Yang2025Resource-Efficient, cai2023quantum}, and optimising it for real-device implementations~\cite{Kim2023Evidence, Yoshioka2025Krylov}.
These directions may further enhance noise robustness and help translate the present framework into more practical experimental advantages.


\begin{acknowledgments}
    N.N. and B.Y. acknowledge fruitful discussions with Nobuyuki Yoshioka, Hiroshi Imai, Tetsuo Shibuya from the University of Tokyo, and Yuichiro Matsuzaki from Chuo University.
\end{acknowledgments}


\appendix

\section{Proof of Theorem~\ref{theorem:classification_of_SU(8)_by_SU(2)}\label{appendix:Theorem_genflame}}

In this section, we prove Theorem~\ref{theorem:classification_of_SU(8)_by_SU(2)} through the following three lemmas. 
To simplify the discussion, we first introduce the notion of the Lie algebra $\mathfrak{su}(d)$.

    A Lie algebra $\mathcal{R}$ is a vector space of operators closed under the commutator operation. 
For two operators $R_{1},R_{2}\in\mathcal{R}$, the binary operations of addition and commutator are defined as
\begin{equation}
    \begin{split}
        \mathcal{R}\times\mathcal{R}&\rightarrow\mathcal{R}\\
        (R_{1},R_{2})&\mapsto R_{1}+R_{2},
    \end{split}
\end{equation}
and
\begin{equation}
    \begin{split}
        \mathcal{R}\times\mathcal{R}&\rightarrow\mathcal{R}\\
        (R_{1},R_{2})&\mapsto \frac{1}{2i}[R_{1},R_{2}],
    \end{split}
\end{equation}
respectively. 

In operator theory, the Lie algebra $\mathcal{R}$ is linearly spanned by a set of operators as
\begin{equation}
    \mathcal{R}=\mathrm{span}_{\mathbb{R}}\{R_{1},R_{2},\dots,R_{N}\},
\end{equation}
with the dimension $N$. In particular, if the unitary operators generated by Hermitian operators $R\in\mathcal{R}$, namely $\exp(-iR)$, form the $\mathrm{SU}(d)$ group, then $\mathcal{R}$ is said to be isomorphic to the $\mathfrak{su}(d)$ algebra:
\begin{equation}
    \mathcal{R}\cong\mathfrak{su}(d).
\end{equation}
The dimension of the $\mathfrak{su}(d)$ algebra $N$ is known as $N=d^{2}-1$.

Next, we define the notion of commutativity between an operator and a Lie algebra, as well as between two Lie algebras. An operator $A$ is said to commute with a Lie algebra $\mathcal{R}$ if
\begin{align}
    [A,R]=0,\quad \text{for all } R\in\mathcal{R}.
\end{align}
Similarly, two Lie algebras $\mathcal{R}$ and $\mathcal{R}'$ are said to commute with each other if
\begin{align}
    [R,R']=0,\quad \text{for all } R\in\mathcal{R},\, R'\in\mathcal{R}'.
\end{align}

The first and second lemmas focus on the fact that the three-qubit cluster Hamiltonian $H_{\mathrm{triangle}}^{(3)}$ can be decomposed into four tractable operators as $H_{\mathrm{triangle}}^{(3)}=H_{1}+H_{2}+H_{3}+H_{4}$, while the third lemma enables us to construct unitary gates that transform a general $\mathfrak{su}(8)$ Hamiltonian $H$ spanned by 63 Pauli strings into the three-qubit cluster Hamiltonian $H_{\mathrm{triangle}}^{(3)}$.

To simplify the proofs of the first and second lemmas without loss of generality, we define the subspace $\mathcal{C} = \mathrm{span}_{\mathbb{R}}\{X_{2}X_{3}, Y_{2}Y_{3}, Z_{2}Z_{3}\}$, which is isomorphic to the three-dimensional torus $\mathcal{T}^{3}$. Here, the algebra of the torus $\mathcal{T}$ indicates that the element $T\in\mathcal{T}$ generates $\mathrm{U}(1)$ operator, that is $\exp(-iT\theta)\cong\mathrm{diag}\bigl(1,e^{i\theta}\bigr)$, and then $d$-dimensional torus $\mathcal{T}^d$ consists $d$ mutually commuting basis elements $\{T_{1},T_{2},\cdots,T_d\}$.

The first lemma aims to determine the number of subspaces $\mathcal{G}_{l} \cong \mathfrak{su}(2)$ whose intersection with $\mathcal{C}$ is trivial (i.e., consists only of the zero element), as illustrated in Fig.~\ref{fig:Upper_bound}.

\begin{lemma}
  Given the conditions that
  $\mathcal{G}_{l}\cap \mathcal{G}_{l^{\prime}}=\{0\}$
  for any $l\neq l^{\prime}$ and that each
  $\mathcal{G}_{l}$ commutes with $\mathcal{C}$,
  the number of subspaces $\mathcal{G}_{l}$ is four.
  These subspaces satisfy
  \begin{equation}
      \begin{split}
      \bigoplus_{l=1}^{4}
    \mathcal{G}_{l}
    =
    \mathrm{span}_{\mathbb{R}}
    \bigl\{
        GC
        \,\big|\,
        G\in\{X_{1},Y_{1},Z_{1}\},
        \,
        C\in\left(\mathcal{C}\cup\{I\}\right)
    \bigr\}.
      \end{split}
            \label{eq:GEN}
  \end{equation}
\label{lemma:1}
\end{lemma}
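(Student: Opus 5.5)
The plan is to compute the commutant of $\mathcal{C}$ inside the traceless Hermitian three-qubit operators, extract its part involving a non-identity Pauli on qubit $1$, and then split that part into copies of $\mathfrak{su}(2)$.

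\textbf{Step 1: the centraliser of $\mathcal{C}$.} Since $\mathcal{C}$ acts only on qubits $2,3$, an operator $A\otimes B$ commutes with $\mathcal{C}$ exactly when $B$ lies in the commutant of $\{X_2X_3,Y_2Y_3,Z_2Z_3\}$ on two qubits. These three operators are simultaneously diagonal in the Bell basis, with distinct joint eigenvalue patterns on the four Bell states. Hence their commutant is the four-dimensional algebra of Bell-diagonal operators, $\mathrm{span}\{I, X_2X_3, Y_2Y_3, Z_2Z_3\} = \mathcal{C}\cup\{I\}$ up to span. A direct Pauli check confirms this: a two-qubit Pauli string commutes with all three of $X_2X_3$, $Y_2Y_3$, $Z_2Z_3$ iff it is $I$ or one of these three.

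So the part of the centraliser of $\mathcal{C}$ carrying a non-identity operator on qubit $1$ is
\begin{equation*}
W=\mathrm{span}_{\mathbb{R}}\{GC \mid G\in\{X_1,Y_1,Z_1\},\ C\in\mathcal{C}\cup\{I\}\},
\end{equation*}
which has dimension $3\times4=12$. Every $\mathcal{G}_l\cong\mathfrak{su}(2)$ commuting with $\mathcal{C}$ and trivially intersecting it lives in the full centraliser $W\oplus\mathrm{span}\{I_1\}\otimes(\mathcal{C}\cup\{I\})$. I will argue, or take as part of the setup, that the relevant $\mathcal{G}_l$ sit in $W$; the qubit-$1$-trivial part is abelian and is absorbed into the $\mathcal{C}$ region of Fig.~\ref{fig:Upper_bound}.

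\textbf{Step 2: an explicit splitting.} I will verify directly that $\mathcal{G}_1,\dots,\mathcal{G}_4$ of Eq.~\eqref{eq:G1_G2_G3_G4} are each closed under $\frac{1}{2i}[\cdot,\cdot]$ with $\mathfrak{su}(2)$ structure constants. For example, in $\mathcal{G}_2$,
\begin{equation*}
[X_1X_2X_3,\,Y_1Y_2Y_3]=2iZ_1\otimes(X_2Y_2)(X_3Y_3)=2iZ_1\cdot(-Z_2Z_3)\cdot\ldots,
\end{equation*}
and the signs are tracked to give $\pm2iZ_1Z_2Z_3$. Each $\mathcal{G}_l$ lies in $W$ and commutes with $\mathcal{C}$. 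Their twelve Pauli basis strings are pairwise distinct and together exhaust all twelve strings $GC$. Hence the $\mathcal{G}_l$ are linearly independent, pairwise intersect trivially, and give $\bigoplus_{l=1}^4\mathcal{G}_l=W$, which is Eq.~\eqref{eq:GEN}.

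\textbf{Step 3: the count is exactly four.} Since $\dim W=12$ and each $\mathcal{G}_l$ is three-dimensional, at most four pairwise-independent copies fit, and the construction attains four. A more structural view reaches the same conclusion: $W\oplus\mathcal{C}\oplus\mathrm{span}\{I\}$ is the Bell-diagonal block algebra $\bigoplus_{b=1}^{4}\mathfrak{u}(2)$, one $\mathfrak{u}(2)$ per Bell sector of qubits $2,3$. The traceless pieces are four commuting $\mathfrak{su}(2)$ summands, which are linear recombinations of the $\mathcal{G}_l$.

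\textbf{Main obstacle.} The hard part is pinning down what ``the number of subspaces is four'' means. Arbitrary $\mathfrak{su}(2)$ subalgebras of $W$ are not unique: for instance, diagonal embeddings across Bell sectors exist. The claim must therefore be read as a maximal family of pairwise-trivially-intersecting subspaces that together span $W$, and the dimension count gives exactly four. I would state this interpretation explicitly and rest the proof on the dimension argument plus the explicit Pauli bookkeeping, where the only routine risk is sign errors in the commutator checks.
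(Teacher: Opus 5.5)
Your proposal is correct and follows essentially the same route as the paper. Both compute the commutant of $\mathcal{C}$ on qubits $2,3$ (the paper by noting that every other Pauli string there fails to commute with some element of $\mathcal{C}$, you via the Bell basis), drop the qubit-$1$-trivial part to get the twelve operators $GC$, and divide by $\dim\mathfrak{su}(2)=3$; your explicit check of Eq.~\eqref{eq:G1_G2_G3_G4} and your insistence on a direct sum tighten the paper's terse counting step, since pairwise trivial intersection alone does not bound the number of subspaces. The one point you left open, that each $\mathcal{G}_l$ lies in $W$, closes in one line: $\mathfrak{su}(2)$ is perfect, so $\mathcal{G}_l=[\mathcal{G}_l,\mathcal{G}_l]$ lies in the derived algebra of the centraliser of $\mathcal{C}$, and that derived algebra is $W$ because $\mathrm{span}_{\mathbb{R}}\{I_1\otimes C \mid C\in\mathcal{C}\cup\{I\}\}$ is central there.
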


\begin{proof}
There exists an operator $C\in\mathcal{C}$ that does not commute with each of the following single-qubit operators acting on the second and third qubits:
\begin{equation}
    \{ X_{2}, Y_{2}, Z_{2}, X_{3}, Y_{3}, Z_{3} \},
\label{eq:one_ope_{2}3}
\end{equation}
and the following two-qubit operators:
\begin{equation}
    \{ X_{2} Y_{3}, Y_{2} Z_{3}, Z_{2} X_{3},
    Y_{2} X_{3}, Z_{2} Y_{3}, X_{2} Z_{3} \}.
\label{eq:two_ope_{2}3}
\end{equation}
For example,
\begin{equation}
    [ X_{2}, Y_{2} Y_{3} ]
    =
    2 i Z_{2} Y_{3}
    \neq 0.
\end{equation}

Therefore, an operator commuting with $\mathcal{C}$ must be written as $GC$, where
$G\in\{X_{1},Y_{1},Z_{1},I\}$ and
$C\in(\mathcal{C}\cup \{I\})$.
Excluding the case $G=I$, which corresponds to operators already contained in $\mathcal{C}$, there remain twelve non-trivial operators, given by the right-hand side of Eq.~\eqref{eq:GEN}.

Since each subspace $\mathcal{G}_{l}\cong\mathfrak{su}(2)$ has dimension three and the intersections are trivial, the number of subspaces is exactly four to cover the twelve non-trivial operators in Eq.~\eqref{eq:GEN}.
\end{proof}

We remark that the basis of $\mathcal{C}$ can be reconstructed from the elements of $\mathcal{G}_{l}$. An element $C_\mu\in\mathcal{C}$ can be given by the product between two elements $G_\mu\in\mathcal{G}_{l}$ and $G_\mu'\in\mathcal{G}_{l^{\prime}}$ that commute with each other:
\begin{equation}
    C_\mu=G_\mu G_\mu' \label{eq:remark},
\end{equation}
holds for all $l\not=l^{\prime}$, e.g., $(G_{1},G_{2},G_{3})=(X_{1},Y_{1},Z_{1})$ and $(G_{1}^{\prime},G_{2}^{\prime},G_{3}^{\prime})=(X_{1}X_{2}X_{3},Y_{1}Y_{2}Y_{3},Z_{1}Z_{2}Z_{3})$. 
This remark is helpful for the proof of the following second lemma.

In the proof of the second lemma, we choose the bases of $\mathcal{G}_{1}$ and $\mathcal{G}_{2}$ as
\begin{equation}
\begin{split}
    \mathcal{G}_{1} &= \mathrm{span}_{\mathbb{R}}\{X_{1},\, Y_{1},\, Z_{1} \},\\
        \mathcal{G}_{2} &= \mathrm{span}_{\mathbb{R}}\{X_{1} X_{2} X_{3},\, Y_{1} Y_{2} Y_{3},\, Z_{1} Z_{2} Z_{3}\}.
\end{split}
\label{eq:GEN1}
\end{equation}
\begin{lemma}
    Assuming that the subspace $\mathcal{H}_{l}$ commutes with $\mathcal{G}_{l}$, the intersection of $\mathcal{H}_{l}$ and $\mathcal{H}_{l^{\prime}}$ for any $l\neq l^{\prime}$ satisfies
    \begin{equation}
        \mathcal{H}_{l}\cap\mathcal{H}_{l^{\prime}}=\mathcal{C}=\mathrm{span}_{\mathbb{R}}\left\{X_{2}X_{3},Y_{2}Y_{3},Z_{2}Z_{3}\right\}.
    \end{equation}
\label{lemma:2}
\end{lemma}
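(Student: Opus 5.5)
The plan is to interpret $\mathcal{H}_{l}$ as the full commutant of $\mathcal{G}_{l}$ inside $\mathcal{V}$, i.e.\ the span of all non-identity Pauli strings commuting with every element of $\mathcal{G}_{l}$. I would also exclude $\mathcal{G}_{l}$ itself, so that $\mathcal{G}_{l}\oplus\mathcal{H}_{l}$ is a direct sum. Under this reading, the lemma reduces to a Pauli-string computation. Every generator of each $\mathcal{G}_{l}$ in Eq.~\eqref{eq:G1_G2_G3_G4} is a single Pauli string, and two Pauli strings either commute or anticommute. Hence $\mathcal{H}_{l}$ is spanned by exactly those Pauli strings $P\neq I$ that commute with the three basis generators of $\mathcal{G}_{l}$. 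Intersections of such spans are spans of intersections of the corresponding sets of Pauli strings. So it suffices to list the Pauli strings in each commutant and compare the lists.

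First I would compute $\mathcal{H}_{1}$. A string commutes with $X_{1},Y_{1},Z_{1}$ if and only if it acts as $I$ on qubit 1. Therefore $\mathcal{H}_{1}$ is the 15-dimensional algebra $\mathfrak{su}(4)$ on qubits $2,3$, which contains $\mathcal{C}$.

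Next I would show that $\mathcal{C}$ lies in every $\mathcal{H}_{l}$. Each generator of each $\mathcal{G}_{l}$ has the form $\sigma_{1}\otimes\tau\otimes\tau$ with $\tau\in\{I,X,Y,Z\}$. The operator $\tau\otimes\tau$ commutes with $X_{2}X_{3}$, $Y_{2}Y_{3}$ and $Z_{2}Z_{3}$, since on each qubit the two factors either coincide or anticommute, and the two sign flips cancel. This is exactly the statement that $\mathcal{G}_{l}$ commutes with $\mathcal{C}$, already used in Lemma~\ref{lemma:1}.

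For the reverse inclusion I would use the remark in Eq.~\eqref{eq:remark}. Take $l\neq l'$ and a Pauli string $P\in\mathcal{H}_{l}\cap\mathcal{H}_{l'}$. Then $P$ commutes with $G_{\mu}\in\mathcal{G}_{l}$ and with $G'_{\mu}\in\mathcal{G}_{l'}$, and hence with their products $C_{\mu}=G_{\mu}G'_{\mu}$. These products span $\mathcal{C}$. So $P$ commutes with $\mathcal{C}$ and with $\mathcal{G}_{l}$. By the argument in the proof of Lemma~\ref{lemma:1}, commuting with $\mathcal{C}$ forces $P=GC$ with $G\in\{I,X_{1},Y_{1},Z_{1}\}$ and $C\in\{I\}\cup\{X_{2}X_{3},Y_{2}Y_{3},Z_{2}Z_{3}\}$. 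Among these 15 strings, the 12 with $G\neq I$ make up $\bigoplus_{k}\mathcal{G}_{k}$. Such a string commutes with all of $\mathcal{G}_{l}$ only if it lies in $\mathcal{G}_{l}$, which $\mathcal{H}_{l}$ excludes. This check can be done directly: each element of $\mathcal{G}_{k}$ with $k\neq l$ anticommutes with two of the three generators of $\mathcal{G}_{l}$. Hence $P=C\in\mathcal{C}$, which gives $\mathcal{H}_{l}\cap\mathcal{H}_{l'}=\mathcal{C}$.

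The main obstacle is the last exclusion step: verifying that no element of $\mathcal{G}_{k}$, $k\neq l$, commutes with the whole of $\mathcal{G}_{l}$, and making the convention $\mathcal{G}_{l}\cap\mathcal{H}_{l}=\{0\}$ explicit. I would handle it by a short sign table for products $\sigma_{1}\tau\tau$ against $\sigma'_{1}\tau'\tau'$. The commutation sign equals that of $\sigma,\sigma'$ on qubit 1 multiplied by the square of the sign for $\tau,\tau'$. So the pair commutes exactly when $\sigma=\sigma'$. Since each $\mathcal{G}_{l}$ contains one string with each of $X_{1},Y_{1},Z_{1}$, every string $\sigma_{1}\tau\tau$ anticommutes with the two generators of $\mathcal{G}_{l}$ whose first-qubit factor differs from $\sigma$. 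This settles the step uniformly for all pairs.
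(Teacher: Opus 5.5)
Your proof is correct and follows the paper's argument. You get $\mathcal{C}\subseteq\mathcal{H}_{l}$ from the fact that $\mathcal{C}$ commutes with every $\mathcal{G}_{l}$. For the reverse inclusion, an element of $\mathcal{H}_{l}\cap\mathcal{H}_{l'}$ commutes with the products in Eq.~\eqref{eq:remark}, hence with $\mathcal{C}$, and Lemma~\ref{lemma:1}'s description of the commutant of $\mathcal{C}$ finishes the argument. The one difference is that you replace the paper's WLOG unitary sending $(\mathcal{G}_{l},\mathcal{G}_{l'})$ to $(\mathcal{G}_{1},\mathcal{G}_{2})$ with a direct sign table that works for all pairs. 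That is more self-contained, and the same table (taken with $k=l$) shows your extra convention excluding $\mathcal{G}_{l}$ from $\mathcal{H}_{l}$ is unnecessary: no nonzero element of $\mathcal{G}_{l}$ commutes with all of $\mathcal{G}_{l}$.
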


\begin{proof}
Since any operator $C\in\mathcal{C}$ commutes with all subspaces $\mathcal{G}_{l}$, we have $C\in\mathcal{H}_{l}$ for all $l$, implying  $\mathcal{C}\subset\bigcap_{l=1}^{4}\mathcal{H}_{l}$.

Next, we assume that there exists an operator $A\in\mathcal{H}_{l}\cap\mathcal{H}_{l^{\prime}}$ for $l^{\prime} \neq l$, but $A\not\in\mathcal{C}$, and derive a contradiction. For simplicity, we apply a unitary transformation that maps $\mathcal{G}_{l}$ to $\mathcal{G}_{1}$ and $\mathcal{G}_{l^{\prime}}$ to $\mathcal{G}_{2}$ without changing the basis of $\mathcal{C}$, based on the reconstruction of  $\mathcal{C}$ in Eq.~\eqref{eq:remark}.

The operator $A$ is transformed to another operator $B$ that also satisfies $B\notin\mathcal{C}$ and acts on the second or third qubits. However, Lemma~\ref{lemma:1} illustrates that such an operator $B$ does not commute with $\mathcal{C}$. We subsequently have the consequence that the operator $B$ does not commute with the subspace $\mathcal{G}_{2}$, yielding $B\not\in\mathcal{H}_{2}$, which contradicts the assumption. Therefore, we yield $A\not\in \mathcal{H}_l\cap\mathcal{H}_{l'}$, and thus we conclude $\mathcal{H}_{l}\cap\mathcal{H}_{l^{\prime}}=\mathcal{C}$.
\end{proof}

Owing to Lemma~\ref{lemma:1} and Lemma~\ref{lemma:2}, any pairs of two subspaces
$\mathcal{H}_{l}\oplus\mathcal{G}_{l}$ ($l=1,\cdots,4$) intersects with only $\mathcal{C}$ for each $l$. The dimension of  $\mathcal{H}_{l}\oplus\mathcal{G}_{l}\cong\mathfrak{su}(4)\oplus\mathfrak{su}(2)$ is equal to $15+3=18$. Subtracting the dimension of their intersection $\mathcal{C}$, the number of independent elements outside $\mathcal{C}$ is 
$(18-3)\times4 = 60$. Therefore, the total number of elements is $60+3=63$, which exactly matches the dimension of the $\mathfrak{su}(8)$ algebra. We conclude that four classes are sufficient to represent all three-qubit cluster Hamiltonians $H_{\mathrm{triangle}}^{(3)}$.

We finally prove the third lemma to minimise the subspace that generates the full three-qubit $\mathfrak{su}(8)$ algebra. This lemma plays a crucial role in determining a unitary transformation that maps the given $\mathfrak{su}(8)$ Hamiltonian $H$ to the three-qubit cluster Hamiltonian $H_{\mathrm{triangle}}^{(3)}$.

\begin{lemma}
Let $\mathcal{G}_{1}\cong\mathfrak{su}(2)$ and 
$\mathcal{C}\cong\mathcal{T}^3$, and let $\mathcal{G}_{2}$ be spanned by elements obtained in Eq.~\eqref{eq:remark}. 
Furthermore, we define $\mathcal{S}\subset\mathcal{H}_{1}$ as an additional 
$\mathfrak{su}(2)$ subalgebra satisfying $\mathcal{S}\cap\mathcal{C}=\{0\}$.
Then the Lie algebra generated by these four subspaces is unitarily equivalent to the standard three-qubit $\mathfrak{su}(8)$ algebra.
\label{lemma:22}
\end{lemma}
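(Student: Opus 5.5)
The plan is to first bring everything to a canonical frame by a unitary, and then compute the Lie closure in three stages: the $\mathcal{G}$-sector, the two-qubit sector, and finally the full $\mathfrak{su}(8)$. The target is to show the closure contains all $63$ Pauli strings, which span $\mathfrak{su}(8)$ by the dimension count $8^{2}-1=63$.

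\emph{Normalisation.} Since $\mathcal{G}_{1}\cong\mathfrak{su}(2)$ acts as a single-qubit algebra (after a unitary), I will take $\mathcal{G}_{1}=\mathrm{span}_{\mathbb{R}}\{X_{1},Y_{1},Z_{1}\}$. Its commutant $\mathcal{H}_{1}$ in $\mathfrak{su}(8)$ is then $I\otimes\mathfrak{su}(4)$ on qubits $2,3$. $\mathcal{C}\cong\mathcal{T}^{3}$ is a maximal abelian (Cartan) subalgebra of this $\mathfrak{su}(4)$. All Cartan subalgebras of $\mathfrak{su}(4)$ are conjugate, so a unitary on qubits $2,3$ brings $\mathcal{C}$ to $\mathrm{span}_{\mathbb{R}}\{X_{2}X_{3},Y_{2}Y_{3},Z_{2}Z_{3}\}$. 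By Eq.~\eqref{eq:remark}, $\mathcal{G}_{2}$ is then $\mathrm{span}_{\mathbb{R}}\{X_{1}X_{2}X_{3},Y_{1}Y_{2}Y_{3},Z_{1}Z_{2}Z_{3}\}$, as in Eq.~\eqref{eq:GEN1}. The only remaining datum is $\mathcal{S}\subset I\otimes\mathfrak{su}(4)$, an $\mathfrak{su}(2)$ with $\mathcal{S}\cap\mathcal{C}=\{0\}$.

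\emph{Stage 1: the $\mathcal{G}$-sector.} I will show that brackets of $\mathcal{G}_{1}$ with $\mathcal{G}_{2}$ already produce all twelve operators $GC$ on the right-hand side of Eq.~\eqref{eq:GEN}, together with $\mathcal{C}$. For example, $[X_{1},Y_{1}Y_{2}Y_{3}]=2iZ_{1}Y_{2}Y_{3}$. The elements of $\mathcal{C}$ arise via products as in Eq.~\eqref{eq:remark}, or via brackets such as $[X_{1}X_{2}X_{3},Y_{1}Y_{2}Y_{3}]$ restricted to the appropriate sector. This stage is routine Pauli bookkeeping.

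\emph{Stage 2: the two-qubit sector.} I will show that $\mathcal{C}$ together with $\mathcal{S}$ generates all of $I\otimes\mathfrak{su}(4)$. For the representative $\mathcal{S}=\mathrm{span}_{\mathbb{R}}\{X_{2},Y_{2},Z_{2}\}$ this is an explicit check:
\begin{itemize}
\item $[X_{2},Y_{2}Y_{3}]\propto Z_{2}Y_{3}$, and similar brackets give all nine $\sigma_{2}^{a}\sigma_{3}^{b}$;
\item brackets such as $[X_{2}Y_{3},X_{2}Z_{3}]\propto X_{3}$ recover the qubit-3 single-site operators;
\item together this gives all $15$ two-qubit Paulis.
\end{itemize}
For general $\mathcal{S}$, I plan to use the root decomposition of $\mathfrak{su}(4)_{\mathbb{C}}$ relative to $\mathcal{C}$. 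Any element outside $\mathcal{C}$ has nonzero components on some root spaces. Repeated brackets with $\mathcal{C}$ (ad-diagonalisable with distinct eigenvalues on generic elements) isolate those root vectors. The generated algebra is then $\mathcal{C}$ plus the root spaces of the closed root subsystem spanned by the roots that $\mathcal{S}$ hits.

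\emph{Main obstacle.} I expect the hardest step to be showing that an $\mathfrak{su}(2)$ subalgebra $\mathcal{S}$ with $\mathcal{S}\cap\mathcal{C}=\{0\}$ necessarily hits enough roots to generate the whole $A_{3}$ system. The hypothesis $\mathcal{S}\cap\mathcal{C}=\{0\}$ alone looks too weak: an $\mathcal{S}$ supported on a single root pair plus its coroot would generate only $\mathcal{C}\oplus\mathfrak{su}(2)$-type pieces. I would try first to use that $\mathcal{S}$ is three-dimensional with trivial intersection with $\mathcal{C}$, so it has three independent non-Cartan directions. I would then check whether three such directions can lie in a proper closed subsystem, namely $A_{1}\times A_{1}$ or $A_{2}$. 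If they can, I would strengthen the hypothesis to genericity of $\mathcal{S}$, or to $\mathcal{S}$ being the single-qubit algebra on qubit $2$ up to the unitary freedom preserving $\mathcal{C}$, and note this in the statement's interpretation.

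\emph{Stage 3: lifting to $\mathfrak{su}(8)$.} Once $I\otimes\mathfrak{su}(4)$ is in the closure, the adjoint action of $\mathfrak{su}(4)$ on $X_{2}X_{3}$ is irreducible, so brackets with $X_{1}X_{2}X_{3}$ yield $X_{1}\otimes P$ for all $15$ non-identity two-qubit Paulis $P$. Conjugating with $\mathcal{G}_{1}$ then gives $Y_{1}\otimes P$ and $Z_{1}\otimes P$. Adding the single-site $X_{1},Y_{1},Z_{1}$ and the $15$ operators $I\otimes P$ gives $45+3+15=63$ elements. This is all of $\mathfrak{su}(8)$ in the frame fixed at the start, which proves unitary equivalence.
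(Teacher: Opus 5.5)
\textbf{Verdict.} The obstacle you flagged cannot be overcome, because the statement is false under its stated hypotheses. Your Stages 2 and 3 are correct in the normalised frame, $\mathcal{S}=\mathrm{span}_{\mathbb{R}}\{X_2,Y_2,Z_2\}$ and $\mathcal{C}=\mathrm{span}_{\mathbb{R}}\{X_2X_3,Y_2Y_3,Z_2Z_3\}$.

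\textbf{Comparison with the paper.} The normalised frame is exactly the case the paper's proof handles. It asserts one unitary $U$ sending $\mathcal{G}_1$, $\mathcal{S}$, $\mathcal{C}$ simultaneously to $\mathrm{span}_{\mathbb{R}}\{X_1,Y_1,Z_1\}$, $\mathrm{span}_{\mathbb{R}}\{X_2,Y_2,Z_2\}$, $\mathrm{span}_{\mathbb{R}}\{X_2X_3,Y_2Y_3,Z_2Z_3\}$, and then does the same Pauli bookkeeping you do. Your irreducibility argument in Stage 3 reaches all $63$ Pauli strings more cleanly than the paper's enumeration of $\mathcal{S}_2,\mathcal{S}_3,\mathcal{S}_4$. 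The paper hides the general case inside the existence of $U$. Its $U_3$, ``which trivially acts on the first and second qubit'', rests only on an appeal to Lemma~\ref{lemma:1}, and that lemma does not imply it.

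\textbf{Counterexample.} Stay in the paper's own frame: $\mathcal{G}_1=\mathrm{span}_{\mathbb{R}}\{X_1,Y_1,Z_1\}$, $\mathcal{C}=\mathrm{span}_{\mathbb{R}}\{X_2X_3,Y_2Y_3,Z_2Z_3\}$, $\mathcal{G}_2=\mathrm{span}_{\mathbb{R}}\{X_1X_2X_3,Y_1Y_2Y_3,Z_1Z_2Z_3\}$. Take $\mathcal{S}=\mathrm{span}_{\mathbb{R}}\{Z_2,\,X_2W_3,\,Y_2W_3\}$ with $W_3=(X_3+Y_3)/\sqrt{2}$.
\begin{itemize}
\item Since $W_3^2=I$, these three operators obey the Pauli relations under $\frac{1}{2i}[\cdot,\cdot]$, so $\mathcal{S}\cong\mathfrak{su}(2)$.
\item $\mathcal{S}$ is even of qubit type: it is the image of $\mathrm{span}_{\mathbb{R}}\{X_2,Y_2,Z_2\}$ under a controlled-$W_3$.
\item Clearly $\mathcal{S}\subset\mathcal{H}_1$.
\item Expanding in Pauli strings shows $\mathcal{S}\cap\mathcal{C}=\{0\}$, since the $Z_2$, $X_2Y_3$ and $Y_2X_3$ coefficients would all have to vanish.
\end{itemize}
Yet all four generating subspaces commute with $Z_2Z_3$. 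The generated Lie algebra therefore lies in the centraliser of $Z_2Z_3$, which has dimension $31<63$.

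In your root language, this $\mathcal{S}$ lives in $\mathcal{C}$ plus the $A_1\times A_1$ root spaces $\{\Phi^+,\Phi^-\}$, $\{\Psi^+,\Psi^-\}$ of the Bell basis. The $A_2$ case also occurs, e.g.\ $\mathcal{S}=\mathrm{span}_{\mathbb{R}}\{X_2+X_3,Y_2+Y_3,Z_2+Z_3\}$, where everything commutes with $\mathrm{SWAP}_{23}$. That $\mathcal{S}$ is a spin-$1\oplus0$ representation rather than qubit type.

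\textbf{The right hypothesis.} Strengthening the hypothesis is necessary, not optional, and your root-space sketch already gives the sharp condition.
\begin{itemize}
\item $\mathcal{C}$ has non-degenerate joint spectrum (the Bell basis). Isolating root components with a generic element of $\mathcal{C}$ shows what $\mathcal{C}\cup\mathcal{S}$ generates.
\item Form the graph on Bell states whose edges are the non-zero off-diagonal Bell-basis entries of elements of $\mathcal{S}$. The algebra generated by $\mathcal{C}\cup\mathcal{S}$ is $\mathcal{C}$ plus the root spaces for all pairs of Bell states in the same connected component.
\item This equals $\mathcal{H}_1$ iff the graph is connected, i.e.\ iff $\mathcal{C}\cup\mathcal{S}$ has no common non-trivial invariant subspace on qubits $2,3$.
\end{itemize}
Add this hypothesis, together with the qubit-type assumption on $\mathcal{G}_1$ that you and the paper both use tacitly. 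Your Stage 3 then finishes the proof for every admissible $\mathcal{S}$, which is more than the paper's argument establishes.

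\textbf{Minor correction to Stage 1.} $\mathcal{C}$ is not produced by brackets of $\mathcal{G}_1$ and $\mathcal{G}_2$. These close on the $12$-dimensional space $\bigoplus_{l}\mathcal{G}_l$ (e.g.\ $[X_1X_2X_3,Y_1Y_2Y_3]\propto Z_1Z_2Z_3$), and products are not Lie operations. This is harmless, since $\mathcal{C}$ is itself a generator and Stage 3 does not use Stage 1.
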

\begin{proof}
There exists a unitary operator $U$ that realises the following mappings:
\begin{equation}
U:\begin{cases}
    \mathcal{G}_{1}&\rightarrow    \quad\mathcal{G}_{1}^{\prime}:=\mathrm{span}_{\mathbb{R}}\bigl\{X_{1},Y_{1},Z_{1}\bigr\}    \\
    \mathcal{S}&\rightarrow\quad\mathcal{S}_{1}:=    \mathrm{span}_{\mathbb{R}}\bigl\{
    X_{2},Y_{2},Z_{2}\bigr\}    \\
    \mathcal{C}&\rightarrow \quad\mathcal{C}^{\prime}:=    \mathrm{span}_{\mathbb{R}}\bigl\{
    X_{2}X_{3},Y_{2}Y_{3},Z_{2}Z_{3}\bigr\}
\end{cases} .       
\label{eq:UF}
\end{equation}
Under this transformation, $\mathcal G_{2}$ is mapped as
\begin{equation}
\begin{split}
    U:&\mathcal{G}_{2}=\mathrm{span}_{\mathbb{R}}\{G_{1}C_{1},G_{2}C_{2},G_{3}C_{3}\}\\
    \rightarrow&\mathcal{G}_{2}^{\prime}:=\mathrm{span}_{\mathbb{R}}\{X_{1}X_{2}X_{3},Y_{1}Y_{2}Y_{3},Z_{1}Z_{2}Z_{3}\}.
\end{split}
\end{equation}

Next, we prove that the transformed subspaces generate the three-qubit $\mathfrak{su}(8)$ algebra. The subspace $\mathcal{S}_{1}\oplus\mathcal{C}^{\prime}$ generates the $\mathfrak{su}(4)$ algebra acting on the second or third qubits. The commutators between $\mathcal{G}_{1}^{\prime}$ and $\mathcal{G}_{2}^{\prime}$ then generate the additional $\mathfrak{su}(2)$ subspaces $\mathcal{G}_{3}^{\prime}:=\mathrm{span}_{\mathbb{R}}\{X_{1}Y_{2}Y_{3},Y_{1}Z_{2}Z_{3},Z_{1}X_{2}X_{3}\}$ and $\mathcal{G}_{4}^{\prime}:=\mathrm{span}_{\mathbb{R}}\{X_{1}Z_{2}Z_{3},Y_{1}X_{2}X_{3},Z_{1}Y_{2}Y_{3}\}$. We further define two $\mathfrak{su}(2)$ subspaces:
    \begin{equation}
    \begin{split}
        \mathcal{S}_{2}&=\mathrm{span}_{\mathbb{R}}\{X_{1}Y_{2}Z_{3},Y_{1}Z_{2}X_{3},Z_{1}X_{2}Y_{3}\},\\
        \mathcal{S}_{3}&=\mathrm{span}_{\mathbb{R}}\{X_{1}X_{2}Z_{3},Y_{1}Y_{2}X_{3},Z_{1}Z_{2}Y_{3}\},
    \end{split}
    \label{eq:S2S3}
    \end{equation}
    where the six operators in Eq.~\eqref{eq:S2S3} are generated by the commutator between $\mathcal{G}_{4}^{\prime}$ and $\mathcal{S}_{1}$, e.g.
    \begin{equation}
    [X_{1}Z_{2}Z_{3},X_{2}]=2iX_{1}Y_{2}Z_{3}\in\mathcal{S}_{2}.    
    \end{equation} 
    The subspaces $\mathcal{S}_l\oplus\mathcal{C}'$ $(l=2,3)$ generate $\mathcal{H}_{l}\cong\mathfrak{su}(4)$ algebra through commutators. 
    
    Similarly, we can obtain another $\mathfrak{su}(2)$ algebra $\mathcal{S}_{4}=\mathrm{span}_{\mathbb{R}}\{X_{1}Y_{2}X_{3},Y_{1}Z_{2}Y_{3},Z_{1}Y_{2}Z_{3}\}$ generated by commutator between $\mathcal{G}_{4}^{\prime}$ and $\mathcal{S}_{1}$. The subspace $\mathcal{S}_4\oplus\mathcal{C}'$ also generates $\mathcal{H}_{4}\cong\mathfrak{su}(4)$ through commutators.
    
    Therefore, we conclude that the four subspaces $\mathcal{G}_{1}$, $\mathcal{G}_{2}$, $\mathcal{S}_{1}$, and $\mathcal{C}'$ span $\sum_{i=1}^{4}\mathcal{G}_{i}^{\prime}\oplus\mathcal{H}_{i}$ corresponding to the three-qubit $\mathfrak{su}(8)$ algebra. This proves that the Lie algebra generated by $\mathcal{G}_{1}\oplus\mathcal{G}_{2}\oplus\mathcal{S}\oplus\mathcal{C}$ is unitarily equivalent to the three-qubit $\mathfrak{su}(8)$ algebra.
\end{proof}

We next construct a unitary transformation $U$ that maps
$\mathcal G_{1}$, $\mathcal S$, and $\mathcal C$
to the local three-qubit subspace in Eq.~\eqref{eq:UF} without considering the transformations of all 63 $\mathfrak{su}(8)$ elements. We first consider a unitary transformation $U_{1}$ such that
\begin{equation}
    U_{1}:\mathcal{G}_{1}\rightarrow\mathcal{G}_{1}^{\prime}:=\mathrm{span}_{\mathbb{R}}\{X_{1},Y_{1},Z_{1}\}.
\end{equation}
An operator $O\in\mathcal{H}_{1}$ commuting with $\mathcal{G}_{1}$ acts trivially on the first qubit. There exists, therefore, a second unitary transformation $U_{2}$ satisfying
\begin{equation}
U_{2}\circ U_{1}:\mathcal{S}\rightarrow\mathcal{S}_{1}:=\mathrm{span}_{\mathbb{R}}\{X_{2},Y_{2},Z_{2}\},      \end{equation}
and
\begin{equation}
U_2:\mathcal{G}_{1}^{\prime}\rightarrow\mathcal{G}_{1}^{\prime}    .
\end{equation}
 Moreover, from Lemma ~\ref{lemma:1}, any operator in $\mathcal{H}_{1}$ does not commute with $\mathcal{C}$ except $\mathcal{C}$ itself. Hence, there exists a third unitary transformation $U_{3}$ as
 \begin{equation}
U_{3}\circ U_{2}\circ U_{1}:\mathcal{C}\rightarrow\mathcal{C}^{\prime}:=\mathrm{span}_{\mathbb{R}}\{X_{2}X_{3},Y_{2}Y_{3},Z_{2}Z_{3}\},
\end{equation}
which trivially acts on the first and second qubit. Therefore, the transformation $U=U_{3}U_{2}U_{1}$ realises the desired mapping. 

We remark that the unitary transformations $U_{1}$, $U_{2}$, and $U_{3}$ can be constructed using only Clifford gates. For instance, the subspace $\mathcal{G}_{1}$ is spanned by $\{X_{i}W,Y_{i}W',Z_{i}WW'\}$ where $W$ and $W'$ commute with each other. By reordering Pauli strings using SWAP gates and applying local Clifford operations such as $\sqrt{X}$ and $\sqrt{Z}$, this can be simplified. For example, we obtain
\begin{equation}
\begin{split}
    U:\mathcal{G}_{1}
    &\rightarrow
        \mathrm{span}_{\mathbb{R}}
        \bigl\{
            X_{1}X_{2}\cdots X_{2m+1}X_{2m+2}\cdots X_{2m+m'+1},\\
            &\quad\quad\quad\quad~ Y_{1}Y_{2}\cdots Y_{2m+1}X_{2m+2}\cdots X_{2m+m'+1},\\
            &\quad\quad\quad\quad~ Z_{1}Z_{2}\cdots Z_{2m+1}
        \bigr\}.
\end{split}
\end{equation}
Then, using only CNOT gates, multi-qubit Pauli operators can be transformed into single-qubit operators such as $X_{1}$ and $Z_{1}$.
In the same way, $\mathcal{C}\oplus\mathcal{S}$ can be unitarily transformed into a two-qubit $\mathfrak{su}(4)$ algebra acting on the second or third qubits.

\section{Completeness of Algebraic structure in Hamiltonian partitioned into three tractable blocks\label{appendix:Theorem_complete}}

The three-qubit propagator $\exp(-iH \Delta t)$ may be decomposed into three Trotter blocks based on another algebraic structure mentioned in Theorem~\ref{theorem:classification_of_SU(8)_by_SU(2)}.
To decompose the propagator into three Trotter blocks, we prove the following algebraic theorem:

\begin{theorem}
 When we decompose the $\mathrm{SU}(8)$ propagator into three Trotter blocks, the subspace $\mathcal{V}^{\prime}$ is completely classified as
 \begin{equation}
\mathcal{V}^{\prime}=\sum_{l=1}^{3}\mathcal{H}_{l}\oplus\mathcal{G}_{l}.\label{eq:Omega'}
 \end{equation}
For any pairs  $l\not=l^{\prime}$, the subspaces $\mathcal{H}_{l}\cong\mathfrak{su}(4)$ and $\mathcal{G}_{l}\cong\mathfrak{su}(2)$ satisfy one of the following conditions:\newline
 (i) 
 \begin{equation}
     \mathcal{H}_{l}\cap\mathcal{H}_{l^{\prime}}\cong\mathfrak{su}(2)\,\, and\,\,
     \mathcal{H}_{l}\cap\mathcal{G}_{l^{\prime}}\not=\left\{0\right\}\label{eq:cond1_HG},
 \end{equation}
  (ii)
 \begin{equation}
     \begin{cases}
         \mathcal{H}_{l}\cap\mathcal{H}_{l^{\prime}}\cong\mathfrak{su}(2)\,\, and \,\,\mathcal{H}_{l}\cap\mathcal{G}_{l^{\prime}}\not=\left\{0\right\}&{\rm for\,\,}\{l,l^{\prime}\}=\{1,2\}\\
     \mathcal{H}_{l}\cap\mathcal{H}_{l^{\prime}}\cong \mathcal{T}^{3}\,\, and \,\,  \mathcal{H}_{l}\cap\mathcal{G}_{l^{\prime}}=\left\{0\right\}& {\rm otherwise},
     \end{cases}
     \label{eq:cond2HG}
 \end{equation}
 (iii)
 \begin{equation}
     \begin{cases}
         \mathcal{H}_{l}\cap\mathcal{H}_{l^{\prime}}\cong \mathcal{T}^{3}\,\, and \,\,\mathcal{H}_{l}\cap\mathcal{G}_{l^{\prime}}=\left\{0\right\}&{\rm for\,\,}\{l,l^{\prime}\}=\{1,2\}\\
     \mathcal{H}_{l}\cap\mathcal{H}_{l^{\prime}}\cong \mathfrak{su}(2)\,\, and \,\,  \mathcal{H}_{l}\cap\mathcal{G}_{l^{\prime}}\not=\left\{0\right\} &{\rm otherwise},
     \end{cases}
      \label{eq:cond3HG}
 \end{equation}
  (iv)
 \begin{equation}
     \mathcal{H}_{l}\cap\mathcal{H}_{l^{\prime}}\cong \mathcal{T}^{3}\,\, and\,\,
     \mathcal{H}_{l}\cap\mathcal{G}_{l^{\prime}}=\left\{0\right\}.\label{eq:cond4_HG}
 \end{equation}
 \label{theorem:2}
\end{theorem}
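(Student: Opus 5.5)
The plan is to fix one class as the anchor and then work only with Pauli strings. By the encoder construction of Lemma~\ref{lemma:22} I may take $\mathcal{G}_{1}=\mathrm{span}_{\mathbb{R}}\{X_{1},Y_{1},Z_{1}\}$, so that $\mathcal{H}_{1}$ is the $\mathfrak{su}(4)$ acting on qubits 2 and 3. I then ask which $\mathfrak{su}(2)\oplus\mathfrak{su}(4)$ pairs $(\mathcal{G}_{l},\mathcal{H}_{l})$ can be added so that three blocks span the relevant subspace $\mathcal{V}'$ of Hamiltonians admitting a three-block decomposition. Since every admissible $\mathcal{G}_{l}$ and $\mathcal{H}_{l}$ is Clifford-equivalent to a Pauli-spanned algebra (remark after Lemma~\ref{lemma:22}), I will restrict to Pauli-spanned $\mathcal{G}_{l}$. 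Such a $\mathcal{G}_{l}$ is a triple $\{P_{1},P_{2},P_{3}\}$ of pairwise anticommuting Paulis with $P_{1}P_{2}\propto P_{3}$, and $\mathcal{H}_{l}$ is then its commutant within $\mathfrak{su}(8)$.

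The key step is to classify the pairwise relation between two classes $l\neq l'$. I will treat the product map of Eq.~\eqref{eq:remark} as the invariant. There are two cases.

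\textbf{Case (a).} $\mathcal{G}_{l}$ and $\mathcal{G}_{l'}$ commute elementwise. Then the products $G_{\mu}G'_{\mu}$ span a three-dimensional Abelian algebra $\cong\mathcal{T}^{3}$ that commutes with both classes, and the argument of Lemma~\ref{lemma:2} gives $\mathcal{H}_{l}\cap\mathcal{H}_{l'}\cong\mathcal{T}^{3}$. Moreover $\mathcal{G}_{l'}\cap\mathcal{H}_{l}=\{0\}$: an element of $\mathcal{G}_{l'}$ lying in $\mathcal{H}_{l}$ would commute with $\mathcal{G}_{l}$, which only happens in the $\mathcal{T}^{3}$-configuration, where it is excluded by Lemma~\ref{lemma:1}.

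\textbf{Case (b).} The classes overlap nontrivially, for instance $\mathcal{G}_{l'}$ sits inside $\mathcal{H}_{l}$, as with $\mathcal{S}_{1}=\mathrm{span}\{X_{2},Y_{2},Z_{2}\}\subset\mathcal{H}_{1}$. Here I expect the commutant intersection to be the $\mathfrak{su}(2)$ on the remaining qubit, giving $\mathcal{H}_{l}\cap\mathcal{H}_{l'}\cong\mathfrak{su}(2)$ and $\mathcal{H}_{l}\cap\mathcal{G}_{l'}\neq\{0\}$.

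To justify that only these two cases occur, I will use the Pauli-weight structure: in the anchored frame, a triple either acts on qubit 1 as $\{X_{1},Y_{1},Z_{1}\}$ dressed by elements of $\mathcal{C}$, or it acts trivially on qubit 1. I will show that any intermediate option fails to generate $\mathfrak{su}(2)$ with an $\mathfrak{su}(4)$ commutant.

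With this dichotomy, each of the three pairs $\{1,2\},\{1,3\},\{2,3\}$ is of type $\mathfrak{su}(2)$ or type $\mathcal{T}^{3}$, which gives at most $2^{3}$ labelings. Next I will impose a consistency constraint. If two pairs sharing an index are of one type, the third pair's type is constrained: for example, if pairs $\{1,2\}$ and $\{1,3\}$ are both of type $\mathcal{T}^{3}$, then $\mathcal{G}_{2}$ and $\mathcal{G}_{3}$ both lie in the twelve-dimensional space of Eq.~\eqref{eq:GEN}, which forces $\{2,3\}$ to be of type $\mathcal{T}^{3}$ as well. This reduces the labelings to the four patterns (i)--(iv), up to relabeling so that the distinguished pair is $\{1,2\}$. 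For each pattern I will exhibit explicit Pauli choices, e.g.\ $\mathcal{G}_{1},\mathcal{G}_{2},\mathcal{G}_{3}$ of Eq.~\eqref{eq:G1_G2_G3_G4} for (iv), and $\mathcal{G}_{1}$, $\mathcal{S}_{1}$ and the qubit-3 $\mathfrak{su}(2)$ for (i). Finally, I will check by a dimension count, as at the end of the proof of Theorem~\ref{theorem:classification_of_SU(8)_by_SU(2)}, that $\sum_{l}\mathcal{H}_{l}\oplus\mathcal{G}_{l}$ is the stated $\mathcal{V}'$.

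The main obstacle is the exclusivity in the dichotomy. I need to show that no pair of classes can have, say, a one-dimensional commutant intersection, and that non-Pauli (rotated) classes reduce to the Pauli ones. My first attempt will be to use that $\mathcal{G}_{l}\oplus\mathcal{H}_{l}\cong\mathfrak{su}(2)\oplus\mathfrak{su}(4)$ is the stabilizer of a tensor factorisation $\mathbb{C}^{2}\otimes\mathbb{C}^{4}$. Two such factorisations are then compared through the decomposition of $\mathbb{C}^{8}$ under $\mathcal{G}_{l}\oplus\mathcal{G}_{l'}$ as an $\mathfrak{su}(2)\oplus\mathfrak{su}(2)$-representation. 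The options are $(\tfrac12,\tfrac12)\otimes\mathbb{C}^{2}$, which should give the $\mathfrak{su}(2)$ type, and the diagonal-type embedding, which should give the $\mathcal{T}^{3}$ type; the remaining representation-theoretic options need to be ruled out.
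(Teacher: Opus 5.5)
Your consistency constraint is false, and it conflicts with the statement you are proving. Any two of the three pairs share an index. So the rule ``two $\mathcal{T}^{3}$-type pairs sharing an index force the third to be $\mathcal{T}^{3}$-type'' would exclude every labelling with exactly one $\mathfrak{su}(2)$-type pair, which is exactly pattern (ii). Your promised example for (ii) would then be impossible by your own lemma. The justification breaks because the twelve-dimensional space of Eq.~\eqref{eq:GEN} is tied to one particular torus $\mathcal{C}$, and the tori $\mathcal{H}_{1}\cap\mathcal{H}_{2}$ and $\mathcal{H}_{1}\cap\mathcal{H}_{3}$ need not coincide. Concretely, take
\begin{itemize}
\item $\mathcal{G}_{1}=\mathrm{span}_{\mathbb{R}}\{X_{1},Y_{1},Z_{1}\}$,
\item $\mathcal{G}_{2}=\mathrm{span}_{\mathbb{R}}\{X_{1}X_{2}X_{3},Y_{1}Y_{2}Y_{3},Z_{1}Z_{2}Z_{3}\}$,
\item $\mathcal{G}_{3}=\mathrm{span}_{\mathbb{R}}\{X_{1}Z_{2},Y_{1}Z_{2}Z_{3},Z_{1}Z_{3}\}=U\mathcal{G}_{1}U^{\dagger}$, with $U=\mathrm{CZ}_{12}\,\mathrm{CNOT}_{3\to1}$.
\end{itemize}
The pairs $\{1,2\}$ and $\{1,3\}$ are of $\mathcal{T}^{3}$ type, with different tori $\mathrm{span}_{\mathbb{R}}\{X_{2}X_{3},Y_{2}Y_{3},Z_{2}Z_{3}\}$ and $\mathrm{span}_{\mathbb{R}}\{Z_{2},Z_{3},Z_{2}Z_{3}\}$. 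Yet $Y_{1}Y_{2}Y_{3}\in\mathcal{H}_{3}\cap\mathcal{G}_{2}$, $Z_{1}Z_{3}\in\mathcal{H}_{2}\cap\mathcal{G}_{3}$, and $\mathcal{H}_{2}\cap\mathcal{H}_{3}=\mathrm{span}_{\mathbb{R}}\{X_{1}X_{3},Z_{2}Z_{3},X_{1}Z_{2}Y_{3}\}\cong\mathfrak{su}(2)$. After relabelling, this is pattern (ii). No constraint is needed at all. First, each unordered pair must have a well-defined type: that is the dichotomy together with the symmetry $\mathcal{H}_{l}\cap\mathcal{G}_{l'}=\{0\}\Leftrightarrow\mathcal{G}_{l}\cap\mathcal{H}_{l'}=\{0\}$, which is the content of Lemmas~\ref{lemma:3} and~\ref{lemma:4}. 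Then the $2^{3}$ labellings form four orbits under relabelling of $\{1,2,3\}$, indexed by the number of $\mathfrak{su}(2)$-type pairs: three, two, one and zero give (i), (iii), (ii) and (iv). That is all the paper's distributive expansion of $\bigcap_{l<l'}(\mathcal{P}_{ll'}\cup\mathcal{Q}_{ll'})$ does.

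Two further corrections.

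\textbf{Case (a).} Your hypothesis is the wrong one. If $\mathcal{G}_{l}$ and $\mathcal{G}_{l'}$ commute elementwise, then $\mathcal{G}_{l'}\subseteq\mathcal{H}_{l}$, which is the $\mathfrak{su}(2)$ type. For example, $\mathrm{span}_{\mathbb{R}}\{X_{1},Y_{1},Z_{1}\}$ and $\mathrm{span}_{\mathbb{R}}\{X_{2},Y_{2},Z_{2}\}$ commute elementwise, but their products $X_{1}X_{2},Y_{1}Y_{2},Z_{1}Z_{2}$ do not commute with $Y_{1}$. The $\mathcal{T}^{3}$ type is instead the index-matched situation of Eq.~\eqref{eq:remark}: bases with $[G_{\mu},G'_{\nu}]=0$ exactly when $\mu=\nu$.

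\textbf{Non-Pauli classes.} You cannot show that rotated classes reduce to Pauli ones, because the dichotomy itself fails for them. A single class is unitarily equivalent to $\mathrm{span}_{\mathbb{R}}\{X_{1},Y_{1},Z_{1}\}$, but two classes generally cannot be brought to Pauli form simultaneously. For generic $U$, no nonzero traceless operator commutes with both $\mathcal{G}$ and $U\mathcal{G}U^{\dagger}$, so $\mathcal{H}\cap U\mathcal{H}U^{\dagger}=\{0\}$, which is neither type. The paper's Lemmas~\ref{lemma:3} and~\ref{lemma:4} tacitly assume that all classes are Pauli-spanned in a common frame. State that as a hypothesis rather than treating it as an obstacle to overcome.
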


We first present two lemmas that characterise the intersections between the subspaces $\mathcal{H}_{l} \oplus \mathcal{G}_{l}$ and $\mathcal{H}_{l^{\prime}} \oplus \mathcal{G}_{l^{\prime}}$ for $l \neq l^{\prime}$. These lemmas distinguish between the cases in which the intersection $\mathcal{H}_{l} \cap \mathcal{G}_{l^{\prime}}$ is trivial or non-trivial, leading to different consequences for the intersections $\mathcal{H}_{l} \cap \mathcal{H}_{l^{\prime}}$ and $\mathcal{G}_{l} \cap \mathcal{H}_{l^{\prime}}$.

\begin{lemma}
The trivial intersection $\mathcal{H}_{l}\cap\mathcal{G}_{l^{\prime}}=\{0\}$ results in 
\begin{equation}
\begin{split}
    \mathcal{G}_{l}\cap\mathcal{H}_{l^{\prime}}&=\{0\}\\
        \mathcal{H}_{l}\cap\mathcal{H}_{l^{\prime}}&\cong \mathcal{T}^{3},
\end{split}
\end{equation}
where basis elements of the three-dimensional torus $\mathcal{T}^{3}$ are given by Eq.~\eqref{eq:remark}.
\label{lemma:3}
\end{lemma}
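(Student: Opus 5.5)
We will argue inside the concrete Pauli model used for Theorem~\ref{theorem:classification_of_SU(8)_by_SU(2)}. Throughout, $\mathcal{G}_{l}\cong\mathfrak{su}(2)$ and $\mathcal{H}_{l}\cong\mathfrak{su}(4)$ is the commutant of $\mathcal{G}_{l}$ inside $\mathfrak{su}(8)$.

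\emph{Normalisation.} First apply a Clifford unitary, as in the construction following Lemma~\ref{lemma:22}, so that $\mathcal{G}_{l}=\mathcal{G}_{1}=\mathrm{span}_{\mathbb{R}}\{X_{1},Y_{1},Z_{1}\}$. Then $\mathcal{H}_{l}$ is exactly the $\mathfrak{su}(4)$ acting on qubits $2,3$. With this choice, the hypothesis $\mathcal{H}_{l}\cap\mathcal{G}_{l^{\prime}}=\{0\}$ says that no nonzero element of $\mathcal{G}_{l^{\prime}}$ acts trivially on qubit~1.

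\emph{Key step: form of $\mathcal{G}_{l^{\prime}}$.} We classify the possible $\mathfrak{su}(2)$ subalgebras $\mathcal{G}_{l^{\prime}}$ compatible with this condition. Conjugating by a unitary on qubits $2,3$ fixes $\mathcal{G}_{1}$ and $\mathcal{H}_{1}$, so we may reduce $\mathcal{G}_{l^{\prime}}$ to a convenient form. The claim is that it can be put in the form $\mathrm{span}_{\mathbb{R}}\{G_{1}C_{1},G_{2}C_{2},G_{3}C_{3}\}$, where $G_{\mu}\in\mathcal{G}_{1}$ and the $C_{\mu}$ are mutually commuting elements of $\mathcal{H}_{1}$. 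This is exactly the structure of Eq.~\eqref{eq:remark}, with the commuting triple $\{C_{\mu}\}$ normalised to $\{X_{2}X_{3},Y_{2}Y_{3},Z_{2}Z_{3}\}$.

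The argument for this form runs as follows. A generic element of $\mathfrak{su}(8)$ decomposes as $\sum_{\mu} \sigma^{(1)}_{\mu}\otimes A_{\mu}+I\otimes B$. Triviality of the intersection with $\mathcal{H}_{1}$ forces the $\mathfrak{su}(2)$ relations to be carried by the $\sigma^{(1)}_{\mu}\otimes A_{\mu}$ parts. Closing under $\frac{1}{2i}[\cdot,\cdot]$ then requires the $A_{\mu}$ to square to $I$ and to commute pairwise, so that the Pauli-algebra structure constants on qubit~1 reproduce $\mathfrak{su}(2)$. Imposing that $\mathcal{G}_{l^{\prime}}$ be a genuine simple $\mathfrak{su}(2)$, disjoint from $\mathcal{H}_{1}$, rules out mixed cases: whenever some $A_{\mu}$ has a nonzero $I\otimes B$ companion, the commutators produce a nonzero element of $\mathcal{H}_{1}\cap\mathcal{G}_{l^{\prime}}$. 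This is the step I expect to be the main obstacle, since one must handle non-Pauli (rotated) bases. I would try to diagonalise simultaneously, using the fact that commuting involutions on two qubits are Clifford-conjugate to a subset of $\{X_{2}X_{3},Y_{2}Y_{3},Z_{2}Z_{3}\}$ when their product rule $C_{1}C_{2}\propto C_{3}$ holds.

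\emph{Conclusions.} Once this normal form is established, both claims follow from commutant computations.

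\begin{itemize}
\item[(a)] $\mathcal{G}_{1}\cap\mathcal{H}_{l^{\prime}}$: an element $aX_{1}+bY_{1}+cZ_{1}$ commutes with $X_{1}X_{2}X_{3}$, $Y_{1}Y_{2}Y_{3}$ and $Z_{1}Z_{2}Z_{3}$ only if $a=b=c=0$, because each single-qubit Pauli anticommutes with two of the three. Hence $\mathcal{G}_{l}\cap\mathcal{H}_{l^{\prime}}=\{0\}$.
\item[(b)] $\mathcal{H}_{1}\cap\mathcal{H}_{l^{\prime}}$: an element $I\otimes B$ commutes with $\mathcal{G}_{l^{\prime}}$ iff $B$ commutes with all three $C_{\mu}$. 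By the argument in the proof of Lemma~\ref{lemma:1}, this commutant within $\mathfrak{su}(4)$ on qubits $2,3$ is exactly $\mathcal{C}=\mathrm{span}_{\mathbb{R}}\{X_{2}X_{3},Y_{2}Y_{3},Z_{2}Z_{3}\}$. This gives $\mathcal{H}_{l}\cap\mathcal{H}_{l^{\prime}}\cong\mathcal{T}^{3}$, with basis $C_{\mu}=G_{\mu}G^{\prime}_{\mu}$ as in Eq.~\eqref{eq:remark}.
\end{itemize}

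Finally, undo the normalising unitary. The intersections and the isomorphism type are invariant under conjugation, which completes the proof.
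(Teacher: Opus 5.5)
Your architecture matches the paper's. You normalise $\mathcal{G}_{l}$ to $\mathrm{span}_{\mathbb{R}}\{X_{1},Y_{1},Z_{1}\}$, bring $\mathcal{G}_{l^{\prime}}$ into the product form $\mathrm{span}_{\mathbb{R}}\{X_{1}C_{1},Y_{1}C_{2},Z_{1}C_{1}C_{2}\}$ with commuting $C_{i}$ on qubits $2,3$, and read off both intersections as commutants, exactly as in your (a) and (b). The gap is the normal-form step. The route you sketch cannot work, because in the generality you aim at, arbitrary (\emph{rotated}) $\mathfrak{su}(2)$ subalgebras unitarily equivalent to $\mathcal{G}_{1}$, the claim is false.

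Take $V=e^{-i\theta X_{1}X_{2}}$ with $0<\theta<\pi/4$. Then $V\mathcal{G}_{1}V^{\dagger}=\mathrm{span}_{\mathbb{R}}\{X_{1},\ \cos 2\theta\,Y_{1}+\sin 2\theta\,Z_{1}X_{2},\ \cos 2\theta\,Z_{1}-\sin 2\theta\,Y_{1}X_{2}\}$. This meets $\mathcal{H}_{1}$ trivially, yet every element outside $\mathbb{R}X_{1}$ has operator-Schmidt rank two across the cut between qubit $1$ and qubits $2,3$. So no basis of the form $G_{\mu}C_{\mu}$ exists, and conjugation by $u_{1}\otimes u_{23}$ (your allowed moves) cannot create one. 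Your coefficients $A_{\mu}$ are here $\cos 2\theta\,I$ and $\sin 2\theta\,X_{2}$, which are not involutions. Hence the assertion that closure under $\frac{1}{2i}[\cdot,\cdot]$ forces $A_{\mu}^{2}=I$ and $[A_{\mu},A_{\nu}]=0$ is wrong.

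Worse, the lemma itself fails in this generality. A dimension count shows that the set of $V$ violating any of the relevant conditions has dimension at most $51<63$. So for generic $V\in\mathrm{SU}(8)$ the class $V\mathcal{G}_{1}V^{\dagger}$ meets $\mathcal{G}_{1}$ and $\mathcal{H}_{1}$ trivially, while $\mathcal{H}_{1}\cap V\mathcal{H}_{1}V^{\dagger}=\{0\}$.

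When the paper posits the product form, it is tacitly working with classes spanned by Pauli strings. In that setting the step is elementary and needs no simultaneous diagonalisation:
\begin{enumerate}
\item Write $\mathcal{G}_{l^{\prime}}=\mathrm{span}_{\mathbb{R}}\{P_{1},P_{2},P_{3}\}$ with Pauli strings $P_{i}$.
\item Distinct Pauli strings are linearly independent, so the hypothesis holds iff every $P_{i}$ has a non-identity factor on qubit $1$.
\item The relation $P_{1}P_{2}\propto P_{3}$ then forces the three first-qubit factors to be distinct.
\item Finally, $\{P_{1},P_{2}\}=0$ with anticommuting first factors forces $[C_{1},C_{2}]=0$.
\end{enumerate}

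There is a second, separate hole. When you normalise the $C_{\mu}$ to $\{X_{2}X_{3},Y_{2}Y_{3},Z_{2}Z_{3}\}$ in (b), you silently assume $C_{1},C_{2},C_{1}C_{2}\neq I$, which the hypothesis does not give. For example, $\mathcal{G}_{l^{\prime}}=\mathrm{span}_{\mathbb{R}}\{X_{1},Y_{1}X_{2},Z_{1}X_{2}\}$ (the case $\theta=\pi/4$ above) satisfies $\mathcal{H}_{1}\cap\mathcal{G}_{l^{\prime}}=\mathcal{G}_{1}\cap\mathcal{H}_{l^{\prime}}=\{0\}$. Yet $\mathcal{H}_{1}\cap\mathcal{H}_{l^{\prime}}$ is then the seven-dimensional traceless commutant of $X_{2}$, not $\mathcal{T}^{3}$. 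You must add $\mathcal{G}_{l}\cap\mathcal{G}_{l^{\prime}}=\{0\}$, as in Lemma~\ref{lemma:1}. The paper's side condition $C_{1}\neq C_{2}$ does not exclude $C_{1}=I$ either, so this hole is shared with the paper's proof.
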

\begin{proof}
    We set $\mathcal{G}_{l}=\mathrm{span}_{\mathbb{R}}\left\{X_{1},Y_{1},Z_{1}\right\}$ without loss of generality. Once $\mathcal{G}_{l}$ is determined, the basis of $\mathcal{H}_{l}$ is determined as the set of all operators acting only on second or third qubits.

    Under the condition of $\mathcal{H}_{l}\cap\mathcal{G}_{l^{\prime}}=\{0\}$, the subspace $\mathcal{G}_{l^{\prime}}$ must take the form
    \begin{equation}
        \mathcal{G}_{l^{\prime}}=\mathrm{span}_{\mathbb{R}}\left\{X_{1} C_{1},Y_{1} C_{2},Z_{1} C_{1}C_{2}\right\},
    \end{equation}
    where $C_{1}$ and $C_{2}$ act only on the second or third qubits and commute with each other, but $C_{1}\not=C_{2}$; any other representation of $C_{l}$'s yields an operator not acting on the second and third qubits, which belong to $\mathcal{H}_{l}$. Using Eq.~\eqref{eq:remark} with
    \begin{equation}
    \begin{split}
        (G_{1},G_{2},G_{3})&=(X_{1},Y_{1},Z_{1}),\\
        (G_{1}^{\prime},G_{2}^{\prime},G_{3}^{\prime})&=(X_{1}C_{1},Y_{1}C_{2},Z_{1}C_{1}C_{2}),
    \end{split}
    \end{equation}
    we find that the subspace $\mathcal{C}=\mathrm{span}_{\mathbb{R}}\left\{C_{1},C_{2},C_{1}C_{2}\right\}$ commutes with all elements of $\mathcal{G}_{l}$ and $\mathcal{G}_{l^{\prime}}$. 
    
        Furthermore, any operator $G_\mu\in\mathcal{G}_{l}$ does not commute with $\mathcal{G}_{l^{\prime}}$, indicating that $G_\mu\not\in\mathcal{H}_{l^{\prime}}$ and thus $\mathcal{G}_{l}\cap\mathcal{H}_{l^{\prime}}=\{0\}$. From the proof of Lemma~\ref{lemma:1}, we yield $\mathcal{H}_{l}\cap\mathcal{H}_{l^{\prime}}=\mathcal{C}\cong\mathcal{T}^3$.
\end{proof}

We next prove the lemma that describes another type of intersection between $\mathcal{H}_{l}$ and $\mathcal{H}_{l^{\prime}}$.

\begin{lemma}
The non-trivial intersection $\mathcal{H}_{l}\cap\mathcal{G}_{l^{\prime}}\neq\{0\}$ results in 
\begin{equation}
\begin{split}
    \mathcal{G}_{l}\cap\mathcal{H}_{l^{\prime}}&\neq\{0\}\\
        \mathcal{H}_{l}\cap\mathcal{H}_{l^{\prime}}&\cong \mathfrak{su}(2),
\end{split}
\end{equation}
where the basis set of the $\mathfrak{su}(2)$ algebra is uniquely determined.
\label{lemma:4}
\end{lemma}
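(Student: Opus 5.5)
Following the proof of Lemma~\ref{lemma:3}, I would fix $\mathcal{G}_{l}=\mathrm{span}_{\mathbb{R}}\{X_{1},Y_{1},Z_{1}\}$ without loss of generality. Then $\mathcal{H}_{l}$ is the $\mathfrak{su}(4)$ algebra of operators acting only on the second and third qubits. The hypothesis $\mathcal{H}_{l}\cap\mathcal{G}_{l^{\prime}}\neq\{0\}$ says that some basis element of $\mathcal{G}_{l^{\prime}}$ acts trivially on the first qubit. Since $\mathcal{G}_{l^{\prime}}\cong\mathfrak{su}(2)$ is spanned by Pauli strings obeying the $\mathfrak{su}(2)$ relations, I would parametrise it in the form
\begin{equation}
    \mathcal{G}_{l^{\prime}}=\mathrm{span}_{\mathbb{R}}\{P_{1}C_{1},\,P_{2}C_{2},\,P_{3}C_{3}\},
\end{equation}
where the $P_{\mu}\in\{I,X_{1},Y_{1},Z_{1}\}$ and the $C_{\mu}$ act on qubits $2,3$. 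Imposing pairwise anticommutation and the cyclic product rule $(P_{1}C_{1})(P_{2}C_{2})\propto P_{3}C_{3}$ forces the following structure. If one $P_{\mu}=I$, say $P_{3}=I$, then $P_{1}=P_{2}=P\neq I$ and $C_{3}\propto C_{1}C_{2}$ with $C_{1}$, $C_{2}$ anticommuting. The alternative where two $P_{\mu}$ equal $I$ collapses to this case or contradicts $\mathcal{G}_{l}\cap\mathcal{G}_{l^{\prime}}=\{0\}$ as in Lemma~\ref{lemma:1}. So $\mathcal{G}_{l^{\prime}}=\mathrm{span}_{\mathbb{R}}\{PC_{1},PC_{2},C_{1}C_{2}\}$, and I expect this case analysis to be the main step. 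After a local Clifford unitary on qubits $2,3$ that fixes $\mathcal{G}_{l}$, I would normalise it to the concrete representative $P=Z_{1}$, $C_{1}=X_{2}$, $C_{2}=Y_{2}$, i.e.\ $\mathcal{G}_{l^{\prime}}=\mathrm{span}_{\mathbb{R}}\{Z_{1}X_{2},Z_{1}Y_{2},Z_{2}\}$.

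With this representative, the first claim $\mathcal{G}_{l}\cap\mathcal{H}_{l^{\prime}}\neq\{0\}$ follows by exhibiting an element of $\mathcal{G}_{l}$ that commutes with all of $\mathcal{G}_{l^{\prime}}$. The element $P=Z_{1}$ does this: it commutes with $Z_{1}X_{2}$, $Z_{1}Y_{2}$ and $Z_{2}$. In the general parametrisation, $P$ commutes with $PC_{1}$, with $PC_{2}$, and with $C_{1}C_{2}$, which has no support on qubit $1$. Hence $P\in\mathcal{G}_{l}\cap\mathcal{H}_{l^{\prime}}$. By the symmetric role of $l$ and $l^{\prime}$, the intersection is in fact one-dimensional on both sides.

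For the second claim I would compute the commutant of $\mathcal{G}_{l^{\prime}}$ restricted to operators on qubits $2,3$. That is exactly $\mathcal{H}_{l}\cap\mathcal{H}_{l^{\prime}}$, since $\mathcal{H}_{l^{\prime}}$ is defined as the commutant of $\mathcal{G}_{l^{\prime}}$ within $\mathcal{V}$. An operator $Q$ on qubits $2,3$ commutes with $Z_{1}X_{2}$, $Z_{1}Y_{2}$ and $Z_{2}$ if and only if it commutes with $X_{2},Y_{2},Z_{2}$. This forces $Q=I_{2}\otimes Q_{3}$ with $Q_{3}$ traceless. Hence
\begin{equation}
    \mathcal{H}_{l}\cap\mathcal{H}_{l^{\prime}}=\mathrm{span}_{\mathbb{R}}\{X_{3},Y_{3},Z_{3}\}\cong\mathfrak{su}(2).
\end{equation}
This set is uniquely determined, because it is the full commutant of the single-qubit $\mathfrak{su}(2)$ generated on qubit $2$ inside the qubit $2,3$ algebra.

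Undoing the normalising unitary gives the statement for general $\mathcal{G}_{l^{\prime}}$. In the original basis, the intersection is the centraliser of $\{C_{1},C_{2}\}$ in the two-qubit algebra, which is an $\mathfrak{su}(2)$ fixed uniquely by $\mathcal{G}_{l^{\prime}}$. The step I expect to be delicate is justifying the normal form: one must show that every $\mathfrak{su}(2)$ Pauli triple meeting $\mathcal{H}_{l}$ nontrivially is Clifford-equivalent, by operations on qubits $2,3$ only, to the representative above. I would first try the Clifford-orbit argument used in the remark after Lemma~\ref{lemma:22}, reducing $C_{1}$ to $X_{2}$ by CNOTs and SWAPs and then $C_{2}$ to $Y_{2}$.
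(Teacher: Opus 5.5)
Your approach is essentially the paper's. You fix $\mathcal{G}_{l}=\mathrm{span}_{\mathbb{R}}\{X_{1},Y_{1},Z_{1}\}$, use the $\mathfrak{su}(2)$ Pauli-triple structure to pin down the shape of $\mathcal{G}_{l^{\prime}}$, normalise the qubit-2,3 factors to $X_{2},Y_{2},Z_{2}$, and read off $\mathcal{H}_{l}\cap\mathcal{H}_{l^{\prime}}=\mathrm{span}_{\mathbb{R}}\{X_{3},Y_{3},Z_{3}\}$ as the commutant on qubits 2,3. The paper starts from $X_{1}\in\mathcal{G}_{l}\cap\mathcal{H}_{l^{\prime}}$ and deduces $\mathcal{H}_{l}\cap\mathcal{G}_{l^{\prime}}\neq\{0\}$. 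That is the converse implication up to relabelling $l\leftrightarrow l^{\prime}$; your direction matches the statement as written.

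However, one step of your case analysis is wrong. If two of the $P_{\mu}$ equal $I$, the product rule forces the third to be $I$ as well, so $\mathcal{G}_{l^{\prime}}=\mathrm{span}_{\mathbb{R}}\{C_{1},C_{2},C_{3}\}\subset\mathcal{H}_{l}$, for instance $\{X_{2},Y_{2},Z_{2}\}$. This case does not reduce to $\{PC_{1},PC_{2},C_{1}C_{2}\}$ with $P\neq I$. Nor does it contradict $\mathcal{G}_{l}\cap\mathcal{G}_{l^{\prime}}=\{0\}$, since $\mathcal{G}_{l}$ is supported on qubit 1 and $\mathcal{G}_{l^{\prime}}$ on qubits 2,3. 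The paper keeps it explicitly as the second admissible form, Eq.~\eqref{eq:Glprim3}. The later classification relies on it: patterns (C) and (D) of Corollary~\ref{corollary:1}, and branches such as (ii-2), are exactly situations with $\mathcal{H}_{l^{\prime}}\cap\mathcal{G}_{l}=\mathcal{G}_{l}$.

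In this case your side remark that the intersections are one-dimensional on both sides is false. Here $\mathcal{G}_{l}\cap\mathcal{H}_{l^{\prime}}=\mathcal{G}_{l}$ and $\mathcal{H}_{l}\cap\mathcal{G}_{l^{\prime}}=\mathcal{G}_{l^{\prime}}$. The lemma itself survives. $\mathcal{G}_{l}\subset\mathcal{H}_{l^{\prime}}$ holds trivially, and your commutant computation goes through verbatim: an operator on qubits 2,3 commuting with an anticommuting pair $C_{1},C_{2}$ there is confined, after a two-qubit Clifford, to $\mathrm{span}_{\mathbb{R}}\{X_{3},Y_{3},Z_{3}\}$. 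So the repair is to retain this case alongside $P\neq I$ and drop the one-dimensionality claim.
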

\begin{proof}
    We again set $\mathcal{G}_{l}=\mathrm{span}_{\mathbb{R}}\left\{X_{1},Y_{1},Z_{1}\right\}$ and assume that $\mathcal{G}_{l}\cap\mathcal{H}_{l^{\prime}}$ contains the element $X_{1}$.

    We first show that $\mathcal{H}_{l}\cap\mathcal{G}_{l^{\prime}}\not=\{0\}$. The condition $X_{1}\in\mathcal{H}_{l^{\prime}}$ implies that the operator $X_{1}$ commutes with $\mathcal{G}_{l^{\prime}}$. In this case, the basis of $\mathcal{G}_{l^{\prime}}$ must take one of the following forms:
    \begin{equation}
        \mathcal{G}_{l^{\prime}}=\mathrm{span}_{\mathbb{R}}\left\{X_{1}C_{1}^{\prime},X_{1}C_{2}^{\prime},C_{3}^{\prime}\right\}\cong\mathfrak{su}(2),
    \end{equation}
    or
    \begin{equation}
        \mathcal{G}_{l^{\prime}}=\mathrm{span}_{\mathbb{R}}\left\{C_{1}^{\prime},C_{2}^{\prime},C_{3}^{\prime}\right\}\cong\mathfrak{su}(2),\label{eq:Glprim3}
    \end{equation}
    where $C_{1}^{\prime}$, $C_{2}^{\prime}$, and $C_{3}^{\prime}$ act only on the second or third qubits. In both cases, we have $C_{3}^{\prime}\in\mathcal{H}_{l}$.

    We set $(C_{1}^{\prime},C_{2}^{\prime},C_{3}^{\prime})=(X_{2},Y_{2},Z_{2})$ without loss of generality. Under this choice, any operator commuting with $\mathcal{G}_{l}\oplus\mathcal{G}_{l^{\prime}}$ must act only on the third qubit.
    Therefore, we uniquely obtain $\mathcal{H}_{l}\cap\mathcal{H}_{l^{\prime}}=\mathrm{span}_{\mathbb{R}}\{X_{3},Y_{3},Z_{3}\}$.

    The same argument applies if the intersection $\mathcal{G}_{l}\cap\mathcal{H}_{l^{\prime}}$ contains $Y_{1}$ or $Z_{1}$ instead of $X_{1}$.
\end{proof}

Let us prove Theorem~\ref{theorem:2}. We define the subspace $\mathcal{V}^{\prime}$ as
\begin{equation}
    \mathcal{V}^{\prime}=\sum_{l=1}^{3}\left(\mathcal{H}_{l}\oplus\mathcal{G}_{l}\right).
\end{equation}
Based on Lemmas~\ref{lemma:3} and~\ref{lemma:4}, we define two conditions $\mathcal{P}_{ll^{\prime}}$ and $\mathcal{Q}_{ll^{\prime}}$ for a three-qubit cluster Hamiltonian $H\in\mathcal{V}^{\prime}$ as
\begin{equation}
\begin{split}
    \mathcal{P}_{ll^{\prime}}&=\{H\in\mathcal{V}^{\prime} \mid \mathcal{H}_{l}\cap\mathcal{G}_{l^{\prime}}\neq\{0\}\},\\
    \mathcal{Q}_{ll^{\prime}}&=\{H\in\mathcal{V}^{\prime} \mid \mathcal{H}_{l}\cap\mathcal{G}_{l^{\prime}}=\{0\}\},
\end{split}
\end{equation}
for $l\neq l^{\prime}$. Using the distributive law of sets, $\mathcal{V}^{\prime}$ can be rewritten as
\begin{equation}
\begin{split}
 \mathcal{V}^{\prime}&=\bigcap_{l=1}^{3}\bigcap_{l^{\prime}>l}^{3}(\mathcal{P}_{ll^{\prime}}\cup \mathcal{Q}_{ll^{\prime}})\\
    &=(\mathcal{P}_{12}\cap \mathcal{P}_{13}\cap \mathcal{P}_{23})\\
    &\cup(\mathcal{Q}_{12}\cap \mathcal{P}_{13}\cap \mathcal{P}_{23})\cup(1\leftrightarrow2)\cup(1\leftrightarrow3)\\
    &\cup(\mathcal{Q}_{12}\cap \mathcal{Q}_{13}\cap \mathcal{P}_{23})\cup(1\leftrightarrow2)\cup(1\leftrightarrow3)\\
    &\cup(\mathcal{Q}_{12}\cap \mathcal{Q}_{13}\cap \mathcal{Q}_{23}).
\end{split}
\label{eq:distributeset}
\end{equation}
Focusing only on the number of occurrences of $\mathcal{P}_{ll^{\prime}}$, we obtain the following four patterns:

\begin{itemize}
    \item
    Third order of $\mathcal{P}_{ll^{\prime}}$

From Lemma~\ref{lemma:4}, we obtain $\mathcal{G}_{l}\cap\mathcal{H}_{l^{\prime}}\neq\{0\}$ for all pairs with $l\neq l^{\prime}$. In addition, any intersection between $\mathcal{H}_{l}$ and $\mathcal{H}_{l^{\prime}}$ is isomorphic to an $\mathfrak{su}(2)$ algebra. This condition corresponds to condition (i) in Theorem~\ref{theorem:2}.
    
    \item Second order of $\mathcal{P}_{ll^{\prime}}$

We consider only the set $\mathcal{Q}_{12}\cap \mathcal{P}_{13}\cap \mathcal{P}_{23}$ owing to the exchange symmetry of the indices in Eq.~\eqref{eq:distributeset}. 

The non-trivial intersections $\mathcal{G}_{3}\cap\mathcal{H}_{l}\neq\{0\}$ for $l\neq3$ implies $\mathcal{H}_{3}\cap\mathcal{G}_{l}\neq\{0\}$ by Lemma~\ref{lemma:4}. Furthermore, any intersection between $\mathcal{H}_{3}$ and $\mathcal{H}_{l}$ for $l\neq3$ is isomorphic to an $\mathfrak{su}(2)$ algebra.

On the other hand, the trivial intersection $\mathcal{G}_{1}\cap\mathcal{H}_{2}=\{0\}$ implies $\mathcal{H}_{2}\cap\mathcal{G}_{1}=\{0\}$ by Lemma~\ref{lemma:3}. Moreover, $\mathcal{H}_{1}\cap\mathcal{H}_{2}$ is isomorphic to $\mathcal{T}^{3}$. This condition corresponds to condition (iii) in Theorem~\ref{theorem:2}.

    \item First order of $\mathcal{P}_{ll^{\prime}}$

We consider only the set $\mathcal{P}_{12}\cap \mathcal{Q}_{13}\cap \mathcal{Q}_{23}$ in Eq.~\eqref{eq:distributeset}. 

The trivial intersections $\mathcal{G}_{3}\cap\mathcal{H}_{l}=\{0\}$ for $l\neq3$ implies $\mathcal{H}_{3}\cap\mathcal{G}_{l}=\{0\}$ by Lemma~\ref{lemma:3}. Furthermore, any intersection between $\mathcal{H}_{3}$ and $\mathcal{H}_{l}$ for $l\neq3$ is isomorphic to $\mathcal{T}^3$.

On the other hand, the non-trivial intersection $\mathcal{G}_{1}\cap\mathcal{H}_{2}\neq\{0\}$ implies $\mathcal{H}_{2}\cap\mathcal{G}_{1}\neq\{0\}$ by Lemma~\ref{lemma:4}. Moreover, $\mathcal{H}_{1}\cap\mathcal{H}_{2}$ is isomorphic to $\mathfrak{su}(2)$. This condition corresponds to condition (ii) in Theorem~\ref{theorem:2}.

    \item Zeroth order of $\mathcal{P}_{ll^{\prime}}$

From Lemma~\ref{lemma:3}, we obtain $\mathcal{G}_{l}\cap\mathcal{H}_{l^{\prime}}=\{0\}$ for all pairs with $l\neq l^{\prime}$. In addition, any intersection between $\mathcal{H}_{l}$ and $\mathcal{H}_{l^{\prime}}$ is isomorphic to $\mathcal{T}^{3}$. This condition corresponds to condition (iv) in Theorem~\ref{theorem:2}.

\qedsymbol
\end{itemize}

We finally remark that algebraic representations of the subspace $\mathcal{V}^{\prime}$ can be classified more strictly by the following corollary.

\begin{corollary}
Let $\{G_{1},G_{2},G_{3}\}$ be a basis of $\mathcal{G}_{l}\cong\mathfrak{su}(2)$ for a fixed $l$. 
If $\mathcal{H}_{l^{\prime}}\cap\mathcal{G}_{l}\neq\{0\}$ for any $l^{\prime}\neq l$, then the intersections fall into the following four patterns, where $l^{\prime\prime}\not\in\{l,l^{\prime}\}$:
    \begin{itemize}
\item[(A)]
$\mathcal{H}_{l^{\prime}}\cap\mathcal{G}_{l}=\{G_{1}\}$ and $\mathcal{H}_{l^{\prime\prime}}\cap\mathcal{G}_{l}=\{G_{1}\}$, 
    
    \item[(B)] $\mathcal{H}_{l^{\prime}}\cap\mathcal{G}_{l}=\{G_{1}\}$ and $\mathcal{H}_{l^{\prime\prime}}\cap\mathcal{G}_{l}=\{G_{2}\}$, but $(\mathcal{H}_{l^{\prime}}\cap\mathcal{H}_{l^{\prime\prime}})\cap\mathcal{G}_{l}=\{0\}$, 
    
    \item[(C)] $(\mathcal{H}_{l^{\prime}}\cap\mathcal{H}_{l^{\prime\prime}})\cap\mathcal{G}_{l}=\mathcal{G}_{l}$,

    \item[(D)] $\mathcal{H}_{l^{\prime}}\cap\mathcal{G}_{l}=\mathcal{G}_{l}$ and $\mathcal{H}_{l^{\prime\prime}}\cap\mathcal{G}_{l}=\{G_{1}\}$.
    
    \end{itemize}
\label{corollary:1}
\end{corollary}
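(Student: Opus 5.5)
We work in the normal form used in the proofs of Lemmas~\ref{lemma:3} and~\ref{lemma:4}. Since the statement is about intersections only, we may apply a unitary (Clifford) change of frame. We fix $\mathcal{G}_{l}=\mathrm{span}_{\mathbb{R}}\{X_{1},Y_{1},Z_{1}\}$, so that $G_{1},G_{2},G_{3}$ are identified with $X_{1},Y_{1},Z_{1}$ up to relabelling. The hypothesis says that both $\mathcal{H}_{l^{\prime}}\cap\mathcal{G}_{l}$ and $\mathcal{H}_{l^{\prime\prime}}\cap\mathcal{G}_{l}$ are nonzero. By Lemma~\ref{lemma:4}, each of these intersections is spanned by Pauli elements of $\mathcal{G}_{l}$, because $\mathcal{G}_{l^{\prime}}$ was shown to take a Pauli form adapted to the same frame.

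The first step is to determine which subspaces of $\mathcal{G}_{l}$ can arise as $\mathcal{H}_{l^{\prime}}\cap\mathcal{G}_{l}$. The two forms in Lemma~\ref{lemma:4} give the answer. In the form $\mathrm{span}_{\mathbb{R}}\{X_{1}C_{1}^{\prime},X_{1}C_{2}^{\prime},C_{3}^{\prime}\}$, exactly one element ($X_{1}$) of $\mathcal{G}_{l}$ commutes with $\mathcal{G}_{l^{\prime}}$. The reason is that $Y_{1}$ and $Z_{1}$ anticommute with $X_{1}$, and the factors $C^{\prime}_{1},C^{\prime}_{2}$ do not rescue the commutation. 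In the purely second/third-qubit form \eqref{eq:Glprim3}, all of $\mathcal{G}_{l}$ commutes with $\mathcal{G}_{l^{\prime}}$, so the intersection is all of $\mathcal{G}_{l}$. There is no two-dimensional option: two of $X_{1},Y_{1},Z_{1}$ commuting with $\mathcal{G}_{l^{\prime}}$ forces their product, the third, to commute as well. Hence $\mathcal{H}_{l^{\prime}}\cap\mathcal{G}_{l}$ is either a single basis line $\{G_{i}\}$ or the whole $\mathcal{G}_{l}$, and the same holds for $l^{\prime\prime}$.

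The second step is to enumerate pairs (line/full, line/full) up to relabelling of $G_{1},G_{2},G_{3}$ and of $l^{\prime}\leftrightarrow l^{\prime\prime}$.
\begin{enumerate}
\item Line/line with the same line gives (A).
\item Line/line with different lines $G_{1}\neq G_{2}$ gives (B). Here we check that $(\mathcal{H}_{l^{\prime}}\cap\mathcal{H}_{l^{\prime\prime}})\cap\mathcal{G}_{l}=\mathrm{span}\{G_{1}\}\cap\mathrm{span}\{G_{2}\}=\{0\}$.
\item Full/full gives $(\mathcal{H}_{l^{\prime}}\cap\mathcal{H}_{l^{\prime\prime}})\cap\mathcal{G}_{l}=\mathcal{G}_{l}$, which is (C).
\item Full/line gives (D).
\end{enumerate}
This list is exhaustive, which yields the four patterns.

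The main obstacle is the first step. We must justify that the intersection is always spanned by basis elements of a single fixed basis $\{G_{1},G_{2},G_{3}\}$ of $\mathcal{G}_{l}$ simultaneously for $l^{\prime}$ and $l^{\prime\prime}$, rather than by a generic real combination. Without this, the lines for $l^{\prime}$ and $l^{\prime\prime}$ could be skew, and the labels $G_{1},G_{2}$ would not be meaningful. The first approach would be to use the Pauli-string setting of the paper. The $\mathcal{G}$'s are spanned by Pauli strings, as in Eq.~\eqref{eq:G1_G2_G3_G4} and Lemma~\ref{lemma:1}. A common Clifford frame therefore sends $\mathcal{G}_{l}$ to $\{X_{1},Y_{1},Z_{1}\}$ while keeping $\mathcal{G}_{l^{\prime}},\mathcal{G}_{l^{\prime\prime}}$ spanned by Pauli strings. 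The commutant of a Pauli-spanned space intersected with $\mathrm{span}\{X_{1},Y_{1},Z_{1}\}$ is then spanned by whichever of $X_{1},Y_{1},Z_{1}$ commute with every generator, because each Pauli string either commutes or anticommutes with each $X_{1},Y_{1},Z_{1}$.
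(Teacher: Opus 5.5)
Your proof is correct and is essentially the paper's argument. The paper also rules out a two-dimensional intersection by closure: $G_{3}=\frac{1}{2i}[G_{1},G_{2}]$ lies in $\mathcal{H}_{l^{\prime}}\cap\mathcal{G}_{l}$ whenever $G_{1},G_{2}$ do. It then enumerates the cases, splitting on whether $(\mathcal{H}_{l^{\prime}}\cap\mathcal{H}_{l^{\prime\prime}})\cap\mathcal{G}_{l}$ vanishes rather than on line/full pairs.

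Your Pauli commute/anticommute step makes explicit the basis alignment the paper leaves implicit. That step is only needed if $\{G_{1},G_{2},G_{3}\}$ is fixed in advance. The paper's proof tacitly adapts the basis to the intersections, which is always possible because two distinct lines in $\mathcal{G}_{l}$ extend to a basis.
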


\begin{figure*}
    \centering
    \includegraphics[width=0.8\linewidth]{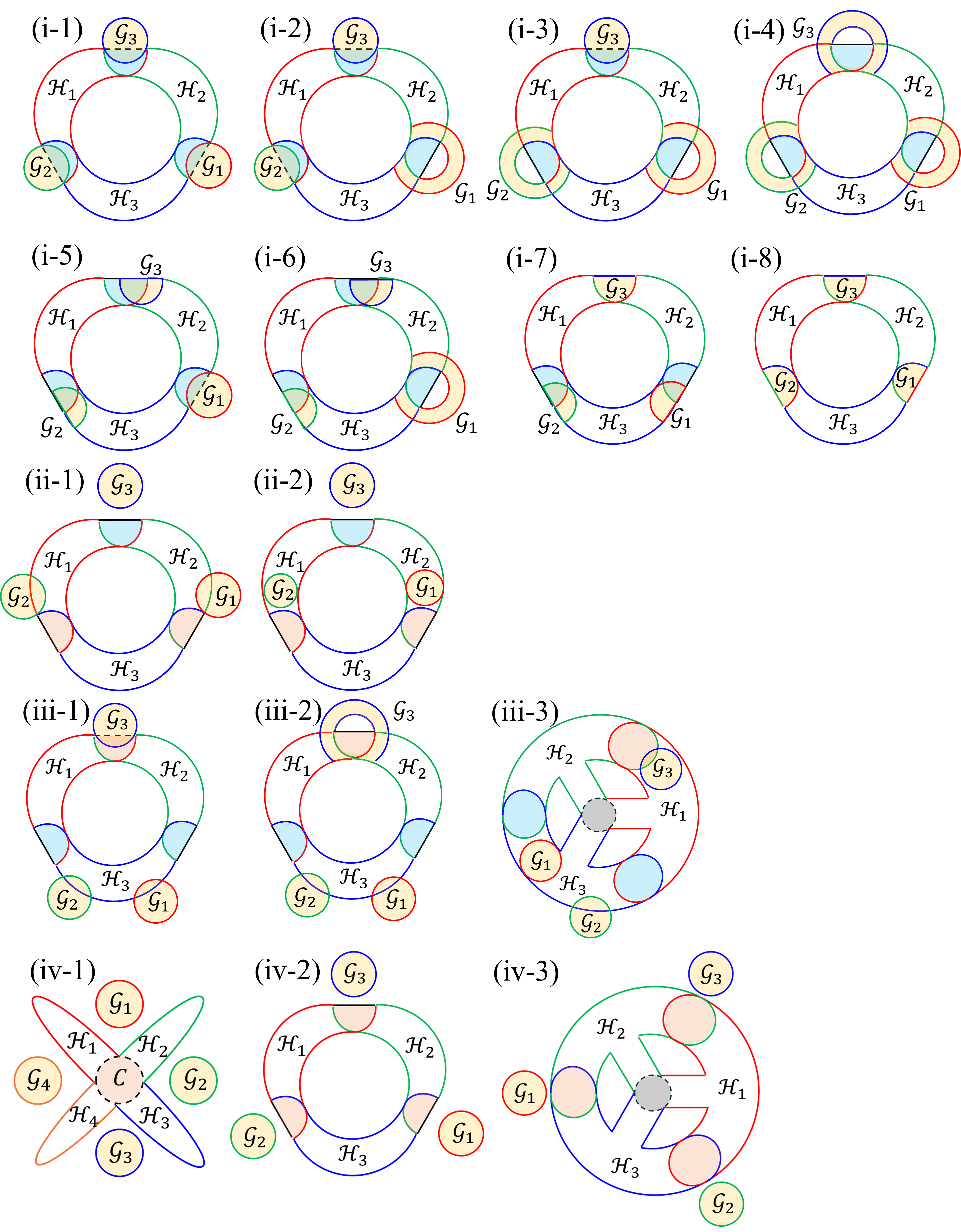}
    \caption{
        The complete algebraic structures in Theorem~\ref{theorem:2}. The yellow transparent regions denote $\mathcal{G}_{l}\cong\mathfrak{su}(2)$.
        The light blue and orange transparent regions denote the $\mathfrak{su}(2)$ algebra and the three-dimensional torus $\mathcal{T}^{3}$, respectively.
        The gray transparent regions represent one element belonging to $\mathcal{H}_{1}\cap\mathcal{H}_{2}\cap\mathcal{H}_{3}$.
    }
    \label{fig:algebra_complete}
\end{figure*}

\begin{proof}

     The intersections $\mathcal{H}_{l^{\prime}}\cap\mathcal{G}_{l}$ and $(\mathcal{H}_{l^{\prime}}\cap\mathcal{H}_{l^{\prime\prime}})\cap\mathcal{G}_{l}$ never contain exactly two elements, since the two elements $G_{1},G_{2}\in\mathcal{H}_{l^{\prime}}\cap\mathcal{G}_{l}$ generate the remaining one element $G_{3}
    =\frac{1}{2i}[G_{1},G_{2}]\in\mathcal{H}_{l^{\prime}}\cap\mathcal{G}_{l}$.
     Therefore, all possible cases with $(\mathcal{H}_{l^{\prime}}\cap\mathcal{H}_{l^{\prime\prime}})\cap\mathcal{G}_{l}\not=\{0\}$ are given by the patterns of (A), (C), and (D).
     
     We next show that the complementary condition $(\mathcal{H}_{l^{\prime}}\cap\mathcal{H}_{l^{\prime\prime}})\cap\mathcal{G}_{l}=\{0\}$ corresponds only to pattern (B). Under the condition $\mathcal{H}_{l^{\prime}}\cap\mathcal{G}_{l}\not=\{0\}$ for any $l^{\prime}\not=l$, at least one element of $\mathcal{G}_{l}$ must be assigned to each of the two intersections, i.e., $G_{1}\in\mathcal{H}_{l^{\prime}}\cap\mathcal{G}_{l}$ and $G_{2}\in\mathcal{H}_{l^{\prime\prime}}\cap\mathcal{G}_{l}$. The remaining element is then $G_{3}$. 

    If we further assume that $G_{3}\in\mathcal{H}_{l^{\prime}}\cap\mathcal{G}_{l}$ in addition to $G_{1}$, it follows that $G_{2}\in\mathcal{H}_{l^{\prime}}$. However, this implies $G_{2}\in(\mathcal{H}_{l^{\prime}}\cap\mathcal{H}_{l^{\prime\prime}})\cap\mathcal{G}_{l}$, which contradicts the assumption. Therefore, we conclude that $G_{3}\not\in\mathcal{H}_{l^{\prime}}$, and consequently $G_{3}\not\in\mathcal{H}_{l^{\prime\prime}}$.
\end{proof}

We construct branches for conditions (i)--(iv) in Theorem~\ref{theorem:2} using Corollary~\ref{corollary:1}.
The algebraic structures for all branches are shown in Fig.~\ref{fig:algebra_complete}.\\

{\it condition (i): $\mathcal{H}_{l}\cap\mathcal{G}_{l^{\prime}}\not=\{0\}$ for any pairs of $l\not=l^{\prime}$}

This condition gives rise to eight branches. 
Four of these eight are classified according to patterns (A) and (B): (i-1) pattern (A) applies to all $\mathcal{G}_{l}$; (i-2) pattern (A) applies to $\mathcal{G}_{1}$ and $\mathcal{G}_{2}$ while pattern (B) applies to $\mathcal{G}_{3}$; (i-3) pattern (B) applies to $\mathcal{G}_{1}$ and $\mathcal{G}_{2}$ while pattern (A) applies to $\mathcal{G}_{3}$; (i-4) pattern (B) applies to all $\mathcal{G}_{l}$. Symmetry under the exchange of $l^{\prime}$ and $l^{\prime\prime}$ implies that permutations of these indices need not be counted separately.

Moreover, the conditions $\mathcal{G}_{2}\cap\mathcal{H}_{3}=\mathcal{G}_{2}$ and $\mathcal{G}_{3}\cap\mathcal{H}_{2}=\mathcal{G}_{3}$ in branches (i-1) and (i-2), respectively, give rise to two additional branches, (i-5) and (i-6) in Fig.~\ref{fig:algebra_complete}. This procedure corresponds to replacing pattern (A) with pattern (D).

The remaining two branches, involving only patterns (C) and (D), are slightly subtler. 
Assigning $\mathcal{G}_{3}$ to pattern (C) yields only two branches: (i-7), in which both $\mathcal{G}_{1}$ and $\mathcal{G}_{2}$ follow pattern (D); and (i-8), in which both $\mathcal{G}_{1}$ and $\mathcal{G}_{2}$ follow pattern (C), together with the additional conditions $\mathcal{H}_{3}\cap\mathcal{G}_{1}=\mathcal{G}_{1}$ and $\mathcal{H}_{3}\cap\mathcal{G}_{2}=\mathcal{G}_{2}$.\\

{\it condition (ii): $\mathcal{H}_{l}\cap\mathcal{G}_{l^{\prime}}\not=\{0\}$ for $\{l,l^{\prime}\}=\{1,2\}$ and $\mathcal{H}_{l}\cap\mathcal{G}_{l^{\prime}}=\{0\}$ otherwise.}

By applying Lemma~\ref{lemma:3} to the subspaces $\mathcal{G}_{3}$, $\mathcal{H}_{1}$, and $\mathcal{H}_{2}$, both the subspaces $\mathcal{G}_{1}$ and $\mathcal{G}_{2}$ trivially intersect with $\mathcal{H}_{3}$. Hence, we obtain two branches: (ii-1) $\mathcal{H}_{1}\cap\mathcal{G}_{2}$ and $\mathcal{H}_{2}\cap\mathcal{G}_{1}$ each contain exactly one element; (ii-2)$\mathcal{H}_{1}\cap\mathcal{G}_{2}=\mathcal{G}_{2}$ and $\mathcal{H}_{2}\cap\mathcal{G}_{1}=\mathcal{G}_{1}$. \\

{\it condition (iii): $\mathcal{H}_{l}\cap\mathcal{G}_{l^{\prime}}=\{0\}$ for $\{l,l^{\prime}\}=\{1,2\}$ and $\mathcal{H}_{l}\cap\mathcal{G}_{l^{\prime}}\not=\{0\}$ otherwise.}

Based on patterns (A) and (B) for $l=3$, we obtain two branches, (iii-1) and (iii-2), in which both the intersections $\mathcal{H}_{3}\cap\mathcal{G}_{1}$ and $\mathcal{H}_{3}\cap\mathcal{G}_{2}$ contain exactly one element. Pattern (C) contradicts the algebraic structure of the intersection $\mathcal{H}_{1}\cap\mathcal{H}_{2}\cong \mathcal{T}^{3}$ stemming from this condition using Lemma~\ref{lemma:3}.

Under pattern (D), where $\mathcal{H}_{1} \cap \mathcal{G}_{3} = \mathcal{G}_{3}$ and the intersection $\mathcal{H}_{2} \cap \mathcal{G}_{3}$ contains one element, we obtain the remaining third branch (iii-3), namely $\mathcal{G}_{1} \cap \mathcal{H}_{3}=\mathcal{G}_{1}$. The intersection $\mathcal{G}_{2} \cap \mathcal{H}_{3}$ contains exactly one element since the condition $\mathcal{G}_{2} \cap \mathcal{H}_{3} = \mathcal{G}_{2}$ leads to $(\mathcal{H}_{1} \cap \mathcal{H}_{2})\cap\mathcal{G}_{3} = \mathcal{G}_{3}$, which contradicts $\mathcal{H}_{1} \cap \mathcal{H}_{2}\cong \mathcal{T}^{3}$.

Moreover, there exists an element of the intersection $\mathcal{H}_{1} \cap \mathcal{H}_{2} \cap \mathcal{H}_{3}$. As an example, we choose two bases $\mathcal{G}_{1} = \mathrm{span}_{\mathbb{R}}\{X_{1}, Y_{1}, Z_{1}\}$ and $\mathcal{G}_{2} = \mathrm{span}_{\mathbb{R}}\{X_{1}X_{2}X_{3}, Y_{1}Y_{2}Y_{3}, Z_{1}Z_{2}Z_{3}\}$, and then we obtain $\mathcal{H}_{1} \cap \mathcal{H}_{2} = \mathrm{span}_{\mathbb{R}}\{X_{2}X_{3}, Y_{2}Y_{3}, Z_{2}Z_{3}\}$. Next, let $\mathcal{G}_{3} = \mathrm{span}_{\mathbb{R}}\{X_{2}X_{3}, Y_{2}X_{3}, Z_{2}\}$, satisfying pattern (D) as $\mathcal{H}_{1} \cap \mathcal{G}_{3} = \mathcal{G}_{3}$ and $\mathcal{H}_{2} \cap \mathcal{G}_{3} = \mathrm{span}_{\mathbb{R}}\{X_{2}X_{3}\}$. Since the operator $Z_{2}Z_{3}$ commutes with the subspace $\mathcal{G}_{3}$, we have $Z_{2}Z_{3} \in \mathcal{H}_{3}$, corresponding to the element of the intersection $\mathcal{H}_{1} \cap \mathcal{H}_{2} \cap \mathcal{H}_{3}$.\\

{\it condition (iv): $\mathcal{H}_{l}\cap\mathcal{G}_{l^{\prime}}=\{0\}$ for any pairs of $l\neq l'$.}

In addition to the algebraic structure explained in Theorem~\ref{theorem:classification_of_SU(8)_by_SU(2)} corresponding to branch (iv-1), we further obtain two branches. The first and simplest one, (iv-2), corresponds to the case in which the intersection among all three subspaces $\mathcal{H}_{l}$ is trivial. The other one, (iv-3), corresponds to the case in which the intersection contains only one element.

\section{Specific examples of Trotter decomposition and circuit construction\label{appendix:examples}}

This section focuses on the concrete decomposition procedures for the physical models listed in Table~\ref{tab:list_of_Hamiltonians}.

\subsection{1D transverse-field Ising model}

We consider the propagator generated by the following Hamiltonian 
\begin{equation}
\begin{split}
    H = J\sum_{i=1}^{N} Z_{i}Z_{i+1} - h\sum_{i=1}^{N}X_{i},
\end{split}
\end{equation}
with the periodic boundary conditions (PBC) of $Z_{1}=Z_{N+1}$.
In the conventional decomposition process, the propagator generated by $H$ is decomposed into two tractable Trotter blocks generated by $H_{1}=-h\sum_{i=1}^{N}X_{i}$ and $H_{2}=J\sum_{i=1}^{N}Z_{i}Z_{i+1}$. Due to the Trotter approximation, the non-commuting pairs between $H_{1}$ and $H_{2}$ appear as
\begin{equation}
[H_{1},H_{2}]=2hJ\sum_{i=1}^{N}(Z_{i}Y_{i+1}+Y_{i}Z_{i+1}).\label{eq:Ising_commut}
\end{equation}
Hence, the number of error terms per single vertex in the conventional decomposition is two.

Our proposed method alternatively decomposes $H$ into two operators $H_{j}$ for $j=1$ and $2$ as
\begin{equation}
\begin{split}
    H_{j}&=J\sum_{i=0}^{N/4-1}\bigl(Z_{4i+2j-1}Z_{4i+2j}+Z_{4i+2j}Z_{4i+2j+1}\bigr)\\
    &\quad-h\sum_{i=0}^{N/4-1}\bigl(X_{4i+2j-1}+X_{4i+2j}\bigr).
\end{split}
\end{equation}
Here, we focus on the three-qubit cluster Hamiltonian $H_{\mathrm{Ising}}^{(3)}$ defined as
\begin{equation}
    H_{\mathrm{Ising}}^{(3)}=J(Z_{1}Z_{2}+Z_{2}Z_{3})-h(X_{1}+X_{2}).
\end{equation}
Since $H_{\mathrm{Ising}}^{(3)}$ satisfies $\mathrm{SU}(2)$ symmetry and commutes with all the generators in $\mathcal{G}_{1}=\mathrm{span}_{\mathbb{R}}\bigl\{X_{1}X_{2}X_{3},X_{1}X_{2}Y_{3},Z_{3}\bigr\}$, its propagator $\exp(-iH_{\mathrm{Ising}}^{(3)}\Delta t)$ is a tractable $\mathrm{SU}(4)$ operator. The encoding gate $U_{1}=CNOT(3\rightarrow2)\circ CNOT(2\rightarrow1)$ unitarily transforms $\mathcal{G}_{1}$ to $\mathrm{span}_{\mathbb{R}}\bigl\{X_{3},Y_{3},Z_{3}\bigr\}$, and thus we obtain the transformed two-qubit effective Hamiltonian whose propagator is implementable.
The non-commuting pairs in our decomposition are finally given by
\begin{equation}
\begin{split}
    [H_{1},H_{2}]=2hJ\sum_{i=1}^{N/2}(-1)^{i}Z_{2i}Y_{2i+1},
\end{split}
\end{equation}
and therefore the number of residual-error terms per vertex is reduced to 0.5.

\subsection{Two-layer \texorpdfstring{$J_{1}$--$J_{2}$}{J1--J2} Heisenberg model}

The two-layer $J_{1}$--$J_{2}$ chain refers to the 1D Heisenberg chain with next-nearest-neighbour interactions. The Hamiltonian can be described by the two-body zig-zag interaction edges (red and blue bars) in FIG.~\ref{fig:NNNdecomp} (a) with the coupling parameter $J_{1}$ and the two-body horizontal ones (green and yellow bars) with the coupling parameter $J_{2}$. The special case of $J_{2}=0$ corresponds to the 1D nearest-neighbour $XXX$ Heisenberg model, which has already been discussed in our proposed approach in Ref.~\cite{Yang2026Quantum}.

\begin{figure}
    \centering
    \includegraphics[width=0.8\linewidth]{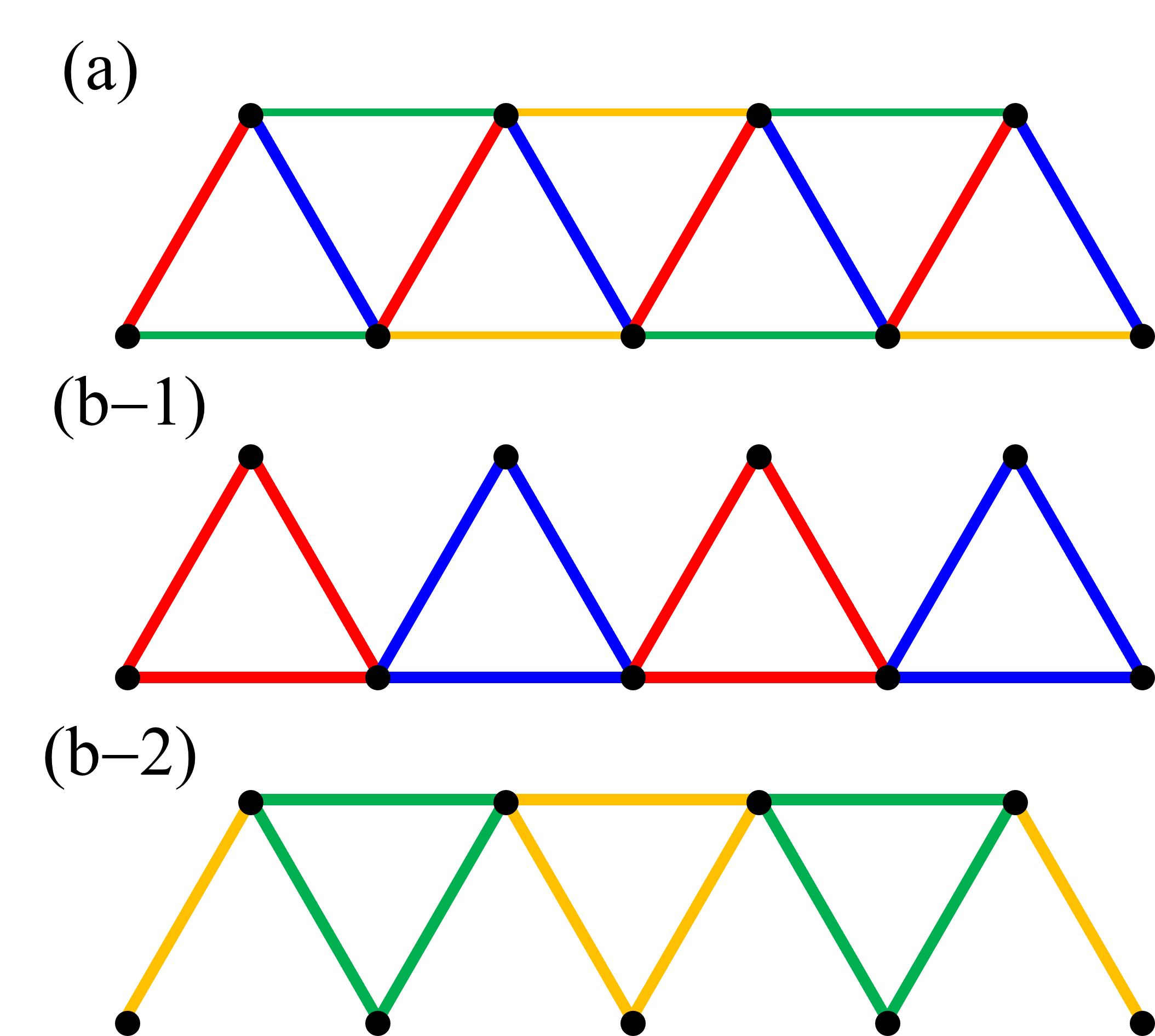}
    \caption{The abstract of the decomposition procedure classified by the coloured edges for (a) the conventional and (b-1,2) the proposed methods.}
    \label{fig:NNNdecomp}
\end{figure}

For the case of the conventional approach, the propagator is approximated as the product of four Trotter blocks generated by the partitioned interactions $J_{1}\vec{\sigma}_{i}\cdot\vec{\sigma}_{j}$ for red and blue bars and $J_{2}\vec{\sigma}_{i}\cdot\vec{\sigma}_{j}$ for green and yellow bars in Fig. ~\ref{fig:NNNdecomp} (a). Six terms in the commutator between two different interaction edges appear as the third line in Eq.~\eqref{eq:epsilon_{1}}
multiplied by the six combinations in choosing two interaction edges among the four partitioned ones in Fig. ~\ref{fig:NNNdecomp} (a). 

We obtain (i) one $J_{1}-J_{1}$ non-commuting edge pair, (ii) four $J_{1}$--$J_{2}$ non-commuting edge pairs, and (iii) one $J_{2}-J_{2}$ non-commuting edge pair. However, the two $J_{1}$--$J_{2}$ non-commuting edge pairs vanish since the sum of two nearest-neighbour edges $J_{1}(\vec{\sigma}_{i-1}\cdot\vec{\sigma}_{i}+\vec{\sigma}_{i}\cdot\vec{\sigma}_{i+1})$ commute with the next-nearest-neighbour edge $J_{2}\vec{\sigma}_{i-1}\cdot\vec{\sigma}_{i+1}$. As a result, we yield the sum of residual non-commuting terms given by (i) 6 $J_{1}-J_{1}$ couplings, (ii) 6$\times2=12$ $J_{1}-J_{2}$ couplings only from the inter-triangular operator pairs, and (iii) 6 $J_{2}-J_{2}$ couplings per vertex.

In our approach, we propose the following Trotter decomposition as
\begin{equation}
\begin{split}
    \exp(-iHt)&=\bigl(\exp(-iH_{4}\Delta t)\exp(-iH_{3}\Delta t)\\\quad&\times\exp(-iH_{2}\Delta t)\exp(-iH_{1}\Delta t)\bigr)^{n/2}\\
    \quad &+O\left(\frac{\epsilon}{n}\right),\label{eq:Trotter_NNN}
\end{split}
\end{equation}
where the power of $n/2$ in Eq.~\eqref{eq:Trotter_NNN} is integer. The approximated propagator is represented by the four Trotter blocks for $2\Delta t$. Thus, our approach uses effectively two Trotter blocks per time step.
The partitioned operators $H_{1}$ and $H_{2}$ correspond to interactions denoted by red and blue upper triangles in Fig.~\ref{fig:NNNdecomp} (b-1) and $H_{3}$ and $H_{4}$ to green and yellow lower counterparts in Fig.~\ref{fig:NNNdecomp} (b-2), respectively. The interactions within the coloured triangles are equivalent to the cluster Hamiltonian given by
\begin{equation}
 H_{J_{1}-J_{2}}^{(3)}=J_{1}\vec{\sigma}_{1}\cdot\vec{\sigma}_{2}+J_{1}\vec{\sigma}_{2}\cdot\vec{\sigma}_{3}+2J_{2}\vec{\sigma}_{1}\cdot\vec{\sigma}_{3}   .
\end{equation}

Our Trotterisation in Eq.~\eqref{eq:Trotter_NNN} significantly reduces the non-trivial commutators $\epsilon$ as follows: 
\begin{equation} 
\begin{split} 
\epsilon&=\frac{[H_{3},H_{4}]+[H_{1},H_{2}]}{2}\\ &+J_{1}J_{2}\left[\sum_{i=1}^{N}\vec{\sigma}_{i}\cdot\vec{\sigma}_{i+1},\sum_{j=1}^{N}(-1)^j\vec{\sigma}_{j}\cdot\vec{\sigma}_{j+2}\right].\label{eq:NNN_error} 
\end{split} 
\end{equation}
The reduction in the second line originates from the term $\frac{1}{2}[H_{1}+H_{2},H_{3}+H_{4}]$,
because the total nearest-neighbour interaction contained in both $H_{1}+H_{2}$ and $H_{3}+H_{4}$ commutes trivially.

We next discuss the residual-error terms in Eq.~\eqref{eq:NNN_error} from an intuitive perspective. 
In our decomposition procedure, the three-qubit propagators associated with the upper triangles are implemented exactly, whereas the propagators associated with the lower triangles are still approximated through Trotterisation. 
Since the numbers of upper and lower triangles are identical in this system, the residual terms originating from intra-triangular non-commutativity are reduced by half compared with the conventional decomposition.
Among the intra-triangular residual terms, only the $J_{1}$--$J_{1}$ couplings contribute. 
Consequently, the number of $J_{1}$--$J_{1}$ residual terms is reduced from six to three per vertex.

By contrast, our decomposition does not reduce the residual-error terms originating from inter-triangular operators, namely the $J_{1}$--$J_{2}$ and $J_{2}$--$J_{2}$ couplings. 
Taking into account the prefactor $1/2$ in Eq.~\eqref{eq:NNN_error} and the factor $2$ in the next-nearest-neighbour interaction contained in $H_{J_{1}-J_{2}}^{(3)}$, the magnitude of the $J_{1}$--$J_{2}$ residual terms remains unchanged from the conventional approach:
$    J_{1}\times 2J_{2} /2 = J_{1}J_{2}$.
On the other hand, the residual contribution from the $J_{2}$--$J_{2}$ couplings becomes twice as large as that in the conventional decomposition:
    $2J_{2}\times 2J_{2} /2 = 2J_{2}^2$.

We present the detailed calculation for counting the residual terms as follows.
Focusing on a vertex on the lower layer in Fig.~\ref{fig:NNNdecomp}, the non-trivial terms in the first line of Eq.~\eqref{eq:NNN_error} that act non-trivially on this vertex originate only from $[H_{1},H_{2}]$ in Fig.~\ref{fig:NNNdecomp}(b-1). 
Taking into account the prefactor $1/2$ in Eq.~\eqref{eq:NNN_error} and the factor $2$ in the next-nearest-neighbour interaction of $H_{J_{1}-J_{2}}^{(3)}$, the first line gives the following residual-error contributions per vertex:
(i) $6/2=3$ $J_{1}$--$J_{1}$ terms,
(ii) $6\times2/2+6\times2/2=12$ $J_{1}$--$J_{2}$ terms,
and
(iii) $6\times2\times2/2=12$ $J_{2}$--$J_{2}$ terms.

The second line of Eq.~\eqref{eq:NNN_error} does not increase the effective number of residual-error terms. The second line also results in the reduction in the commutator pairs between $\vec{\sigma}_{i}\cdot\vec{\sigma}_{i+1}+\vec{\sigma}_{i+1}\cdot\vec{\sigma}_{i+2}$ and $\vec{\sigma}_{i}\cdot\vec{\sigma}_{i+2}$. We obtain the second line simply given by the sum of two non-commutative pairs on the $i$-th vertex as $(-1)^i[\vec{\sigma}_{i}\cdot\vec{\sigma}_{i+1},\vec{\sigma}_{i-2}\cdot\vec{\sigma}_{i}]$ and $(-1)^i[\vec{\sigma}_{i-1}\cdot\vec{\sigma}_{i},\vec{\sigma}_{i}\cdot\vec{\sigma}_{i+2}]$. Taking into account the sign factor $(-1)^i$, one of these commutators cancels a contribution from the $J_{1}$--$J_{2}$ terms in (ii), whereas the other duplicates an existing contribution. 
Therefore, the cancellation and duplication cancel each other out, and the second line does not change the effective number of residual-error terms.

\subsection{2D transverse-field Ising model}

We consider the following Hamiltonian 
\begin{equation}
\begin{split}
    H = J\sum_{i=1}^{N_X}\sum_{j=1}^{N_Y} Z_{i,j}Z_{i+1,j}+Z_{i,j}Z_{i,j+1} - h\sum_{i=1}^{N_X}\sum_{j=1}^{N_Y}X_{i,j},
\end{split}
\end{equation}
where the double subscripts ``$i,j$'' denote $x$- and $y$-coordinates of the 2D lattice site, respectively. The periodic boundary conditions(PBC) are given by $Z_{1,j}=Z_{N_X+1,j}$ and $Z_{i,1}=Z_{i,N_Y+1}$.

We conventionally decompose $H$ into three tractable Trotter blocks generated by $H_{1}=-h\sum_{i=1}^{N_X}\sum_{j=1}^{N_Y}X_{i,j}$, $H_{2}=J\sum_{i=1}^{N_X}\sum_{j=1}^{N_Y}Z_{i,j}Z_{i+1,j}$, and  $H_{3}=J\sum_{i=1}^{N_X}\sum_{j=1}^{N_Y}Z_{i,j}Z_{i,j+1}$. We note that $H_{2}+H_{3}$ is not a tractable form despite the commutativity $[H_{2},H_{3}]=0$ since the propagator generated by the four interaction edges $Z_{i,j}Z_{i,j+1}+Z_{i,j}Z_{i,j-1}+Z_{i+1,j}Z_{i,j}+Z_{i,j}Z_{i-1,j}$ is not SU(4) operator and not implementable within single Trotter block. In the same manner as the commutation relation in Eq.~\eqref{eq:Ising_commut}, the number of error terms per vertex in the conventional decomposition is four.

Our proposed method decomposes $H$ alternatively into two operators $H_{j}$ for $j=1$ and $2$ as
\begin{equation}
\begin{split}
    H_{1}&=J\sum_{i=1}^{N_X}\sum_{j=1}^{N_Y}Z_{i,j}Z_{i+1,j}-h\sum_{i=1}^{N_X}\sum_{j=1}^{N_Y}\frac{1+(-1)^{i+j}}{2}X_{i,j},\\
    H_{2}&=J\sum_{i=1}^{N_X}\sum_{j=1}^{N_Y}Z_{i,j}Z_{i,j+1}-h\sum_{i=1}^{N_X}\sum_{j=1}^{N_Y}\frac{1+(-1)^{i+j+1}}{2}X_{i,j}.
\end{split}
\end{equation}
Our partitioning of the transverse field in the second term of $H_{1}$ and $H_{2}$ is determined by whether the Manhattan distance between the central site $(0,0)$ and the site $(i,j)$ is odd or even.

Both the two partitioned Hamiltonians are spanned by the three-qubit cluster Hamiltonians unitarily equivalent to 
\begin{equation}
    H_{\mathrm{Ising}}^{(3)}=J(Z_{1}Z_{2}+Z_{2}Z_{3})-hX_{2}.
\end{equation}
In the same way as the 1D transverse Ising model, this propagator $\exp(-iH_{\mathrm{Ising}}^{(3)}\Delta t)$ is a tractable $\mathrm{SU}(4)$ operator, indicating that we obtain a two-qubit effective Hamiltonian by the same encoding procedure.

The non-commuting pair on the site $(i,j)$ satisfying $(-1)^{i+j}=1$ in our decomposition are finally given by
\begin{equation}
\begin{split}
    [X_{i,j},H_{2}]=2hJY_{i,j}(Z_{i,j-1}+Z_{i,j+1}),
\end{split}
\end{equation}
and thus the number of error terms per vertex in our decomposition is two.

\subsection{2D triangular lattice}

We focus on the 2D triangular lattice containing two-body $XXX$ Heisenberg and three-body scalar spin-chirality interactions. The other types of interaction model listed in Table~\ref{tab:list_of_Hamiltonians} are the specific cases of the 2D triangular lattice, which can be obtained by truncating some kinds of interactions. 

\begin{figure}
    \centering
    \includegraphics[width=1.0\linewidth]{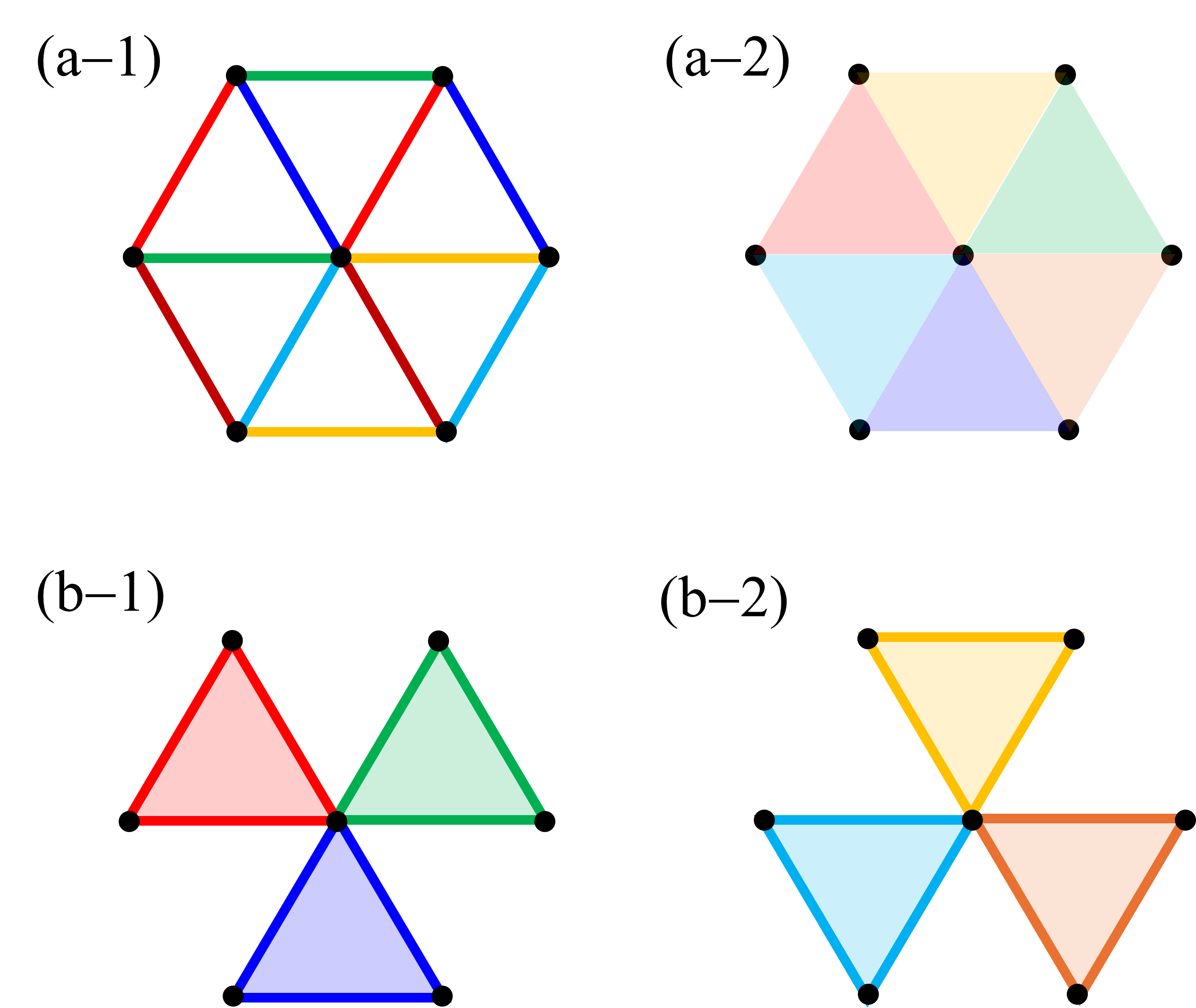}
    \caption{Decomposition procedure classified by the coloured edges and triangles for (a--1,2) the conventional and (b--1,2) the proposed methods.}
    \label{fig:triangledecomp}
\end{figure}

For the case of the conventional approach, the propagator is approximated as the product of four Trotter blocks generated by the partitioned interactions $J\vec{\sigma}_{i}\cdot\vec{\sigma}_{j}$ for coloured bars and $K\vec{\sigma}_{i}\cdot(\vec{\sigma}_{j}\times\vec{\sigma}_{k})$ coloured triangles in Fig.~\ref{fig:triangledecomp} (a). 

We first count the non-commuting terms between two edge interactions. There are two types of non-commuting pairs in this Trotter approximation among the six partitioned interaction edges in Fig.~\ref{fig:triangledecomp} (a--1). The first type corresponds to the pairs of edges within the triangles, of which there are six. In the same manner as the $J_{1}$--$J_{2}$ next-nearest-neighbour model, the number of non-commuting terms is six per triangle. This implies that the non-commuting pairs within the triangles amount to $6\times6/3=12$ per vertex; the division by three reflects the number of vertices in one triangle. The second type corresponds to the nine edge-pairs outside the triangles, implying that there are $6\times9=54$ non-commuting pairs per vertex.

For the non-trivial commutator pairs between two triangles, there are three patterns to be taken into account: non-commutative pairs in (I) internal triangular interaction, (II) 
between different triangles sharing an edge, and (III) between different triangles sharing only the vertex. For pattern (I), the non-commutative pairs in a triangle amount to 6 per three vertices, as explained in Eq.~\eqref{eq:epsilon_{1}}. The number of residual terms per vertex is then equal to $6\times6/3=12$. For pattern (II), there are six pairs of two triangles sharing an edge. For a pair, there are 12 non-commuting pairs per two vertices, so that we obtain $6\times12/2=36$ terms in pattern (II). For pattern (III), there are nine pairs of two triangles sharing only a vertex. For a pair, there are 24 non-commuting pairs, yielding $9\times24=216$ terms in pattern (III). The total number of the non-commuting terms in inter-triangular pairs is equal to $36+216=252$.

There also exist the non-commutative pairs between an edge interaction and a triangular interaction: the triangle covering the edge and the triangle and the edge sharing only a vertex. For the triangle covering the edge, however, the sum of three interaction edges within the triangle commutes with spin-chirality interactions. Therefore, we only need to count the pairs of a triangle and an edge sharing only a vertex. There are 24 pairs of an edge and a triangle that share only one vertex. For a pair, there are 12 non-commuting pairs, so that we obtain $12\times24=288$ terms.

In our approach, we approximate the propagator as follows:
\begin{equation}
\begin{split}
    &\exp(-iHt)\\
    &=\bigl(\exp(-iH_{6}\Delta t)\exp(-iH_{5}\Delta t)\exp(-iH_{4}\Delta t)\\\quad&\times\exp(-iH_{3}\Delta t)\exp(-iH_{2}\Delta t)\exp(-iH_{1}\Delta t)\bigr)^{n/2}\\
    \quad &+O\left(\frac{\epsilon}{n}\right).\label{eq:Trotter_triangle}
\end{split}
\end{equation}
Since the approximated propagator consists of six Trotter blocks for $2\Delta t$, our approach uses essentially three Trotter blocks per time step.
The partitioned operators $H_{1}$, $H_{2}$, and $H_{3}$ correspond to interactions denoted by red, green, and blue upper triangles in Fig.~\ref{fig:triangledecomp} (b-1) and $H_{4}$, $H_5$, and $H_6$ to yellow, orange, and light blue lower counterparts in Fig.~\ref{fig:triangledecomp} (b-2), respectively. The partitioned Hamiltonians $H_m$ contain three-site interactions equivalent to 
\begin{equation}
\begin{split}
    H_{\mathrm{triangle}}^{(3)}&=J\bigl(\vec{\sigma}_{1}\cdot\vec{\sigma}_{2}+\vec{\sigma}_{2}\cdot\vec{\sigma}_{3}+\vec{\sigma}_{1}\cdot\vec{\sigma}_{3}\bigr)\\
&\quad+2K\vec{\sigma}_{1}\cdot\left(\vec{\sigma}_{2}\times\vec{\sigma}_{3}\right).    
\end{split}
\end{equation}

In our Trotter approximation, we can reduce the residual error from non-commuting pairs within the triangle. The number of vanishing terms amounts to six per triangle, which corresponds to $6\times3/3=6$ per vertex, and thus the number of non-commuting terms is $12-6=6$. For the non-trivial commutator pairs between two triangles, we also reduce the residual error stemming from internal triangular interactions, which have 12 non-commuting terms per vertex. 

On the other hand, the non-trivial commutator pairs in the other patterns do not vanish. Considering the double prefactor of scalar spin-chirality and half prefactor of Trotterisation, the total number of the non-commuting terms in triangular interactions is equal to $(36+216)\times2^{2}/ 2=504$.

The non-commutative pairs between an edge interaction and a triangular interaction are also considered. Taking into account the aforementioned prefactors, the total number of the non-commuting terms is equal to 288 in the same manner as the conventional decomposition.

\begin{figure}
    \centering
    \includegraphics[width=1.0\linewidth]{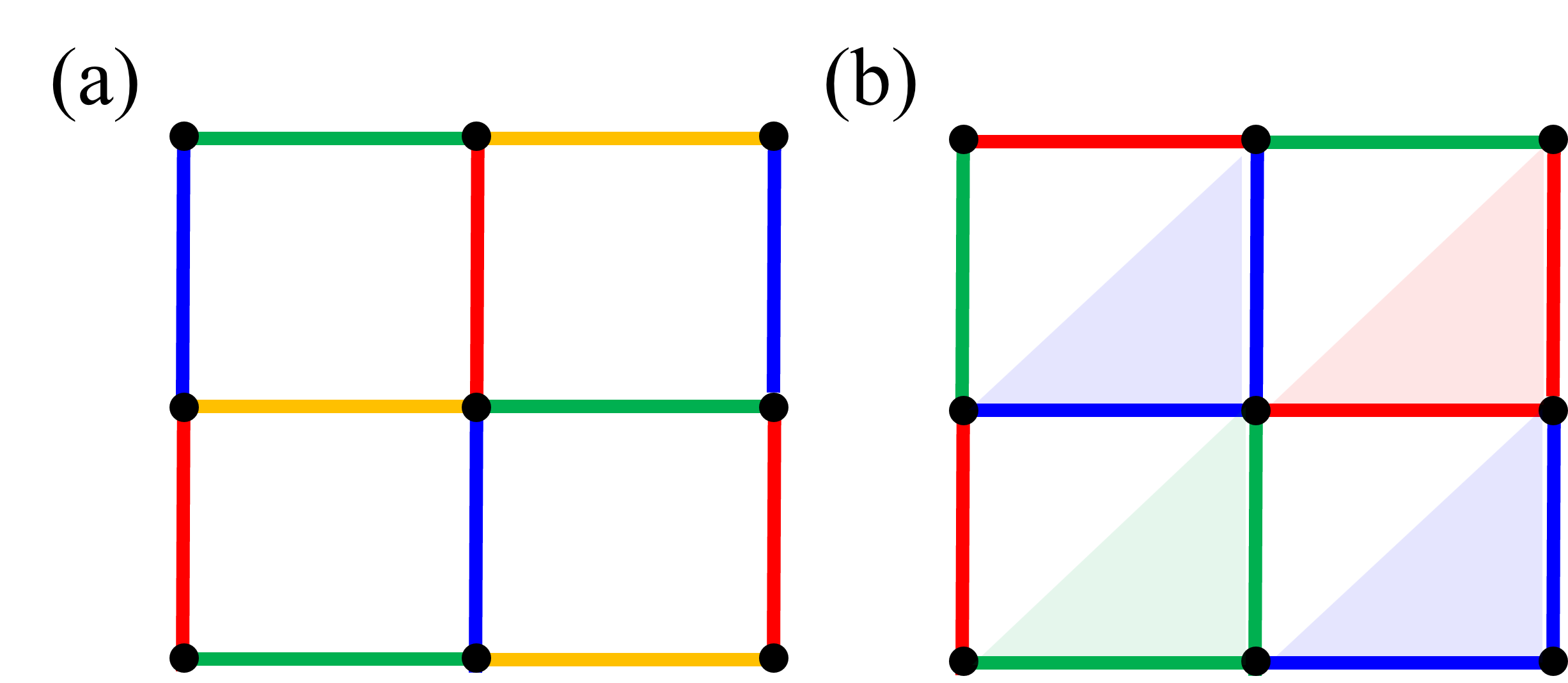}
    \caption{ Decomposition procedures classified by the coloured edges for (a) the conventional and (b) the proposed methods in the square lattice.
    }
    \label{fig:squaredecomp}
\end{figure}

\subsection{2D square Heisenberg lattice}

We finally remark that both the conventional and the proposed methods can be applied to the two-dimensional square $XXX$ Heisenberg lattice by removing the blue and orange edges from the triangular lattice shown in Fig.~\ref{fig:triangledecomp}(a-1).
Here, we illustrate the partitioned interaction edges in Fig.~\ref{fig:squaredecomp}.

While the conventional decomposition requires four Trotter blocks to represent the full Hamiltonian, as shown in Fig.~\ref{fig:squaredecomp}(a), our decomposition reduces the number of Trotter blocks by one, as illustrated in Fig.~\ref{fig:squaredecomp}(b).


\bibliography{main}


\end{document}